\titleformat*{\section}{\bfseries\filcenter}
\titleformat*{\subsection}{\bfseries}
\titleformat*{\subsubsection}{\itshape}
\titlespacing{\section}{0pt}{0pt}{0pt}
\renewcommand\@biblabel[1]{}
\newcommand*{\TitleFont}{%
      \usefont{\encodingdefault}{\rmdefault}{b}{n}%
      \fontsize{22}{24}%
      \selectfont}
\newtheorem{corollary}{Corollary}
\newtheorem{proposition}{Proposition}
\newtheorem{lemma}{Lemma}
\newtheorem{theorem}{Theorem}
\def \pwp1{\stackrel{a.s.}{\to}}
\def \bx{\bm x}
\def \bbeta{\bm \beta}
\def \tbeta{\tilde{\beta}}
\def \bg{\begin{equation}}
\def \ed{\end{equation}}
\def \by{\bm y}
\def \bvarphi{\bm \varphi}
\def\GP{{\mathcal {GP}}}
\def\k{\kappa}
\def\iid{\stackrel{\mathrm{iid}}{\sim}}
\def \bx{\bm x}
\def \bbeta{\bm \beta}
\def \tbeta{\tilde{\beta}}
\def \bs {\bm s}
\def \bg{\begin{equation}}
\def \ed{\end{equation}}
\def \by{\bm y}
\def \bvarphi{\bm \varphi}
\def\GP{{\mathcal {GP}}}
\def\k{\kappa}
\def\iid{\stackrel{\mathrm{iid}}{\sim}}
\def\mS{\mathcal{S}}
\def\mC{\mathcal{C}}
\def\mR{\mathcal{R}}
\def\TMGP{{\mathcal {TMGP}}}
\def\mP{\mathcal P}
\def\mR{\mathcal{R}}
\def\bP{\mathbb{P}}
\def\bX{\bm X}
\def\rT{\scriptsize{\mathrm{T}}}
\def\balpha{\boldsymbol{\alpha}}
\title{\TitleFont Thresholded Multiscale Gaussian Processes with Application to Bayesian Feature Selection for Massive Neuroimaging Data}
\date{}
\author{Ran Shi\thanks{Ran Shi is Ph.D. Student, Department of Biostatistics and Bioinformatics, Emory University, Atlanta, GA 30322., Jian Kang is Assistant Professor, Department of Biostatistics and Bioinformatics and Department of Radiology and Imaging Sciences, Emory University, Atlanta, GA 30322.}\ \  and\ \  Jian Kang\thanks{To whom correspondence should be addressed: jian.kang@emory.edu}
}
\begin{document}
\maketitle
\setlength{\belowdisplayskip}{1.5pt} \setlength{\belowdisplayshortskip}{1.5pt}
\setlength{\abovedisplayskip}{1.5pt} \setlength{\abovedisplayshortskip}{1.5pt}
\begin{abstract}
Motivated by the needs of selecting important features for massive neuroimaging data, we propose a spatially varying coefficient model (SVCMs) with sparsity and piecewise smoothness imposed on the coefficient functions.  A new class of nonparametric priors is developed based on thresholded multiscale Gaussian processes (TMGP).  We show that the TMGP has a large support on a space of sparse and piecewise smooth functions, leading to posterior consistency in coefficient function estimation and feature selection. Also, we develop a method for prior specifications of thresholding parameters in TMGPs. Efficient posterior computation algorithms are developed by adopting a kernel convolution approach, where a modified square exponential kernel is chosen taking the advantage that the analytical form of the eigen decomposition is available. Based on simulation studies, we demonstrate that our methods can achieve better performance in estimating the spatially varying coefficient. Also, the proposed model has been applied to an analysis of resting state functional magnetic resonance imaging (Rs-fMRI) data from the Autism Brain Imaging Data Exchange (ABIDE) study, it provides biologically meaningful results.
\end{abstract}
\begin{keywords} Thresholded multiscale Gaussian processes; Bayesian feature selection; Spatially varying coefficient models; Human brain mapping; Markov chain Monte Carlo
\end{keywords}
\clearpage
\section{Introduction}
\label{Introduction}
Recent advancements in biomedical imaging technologies have provided abundant information and extensive resources for researchers to learn the human brain and neurological diseases. A variety of imaging modalities, such as magnetic resonance imaging (MRI), diffusion tensor imaging (DTI), functional magnetic resonance imaging (fMRI) and positron emission tomography (PET)  have been developed to measure brain structures and functions from different perspectives, generating various large-scale  spatially distributed measurements  over a three dimensional (3D) space of the human brain. We refer to those massive spatial measurements of brain as neuroimages.   This poses great opportunities and new challenges for neuroscientists and statisticians to develop efficient analytical methods that extract useful features from neuroimages to characterize the association between the brain activities and neurological diseases. To this end,  regression analysis,  a general and flexible modeling framework for studying the association among variables,  has been investigated and considered as a powerful tool in the analysis of massive neuroimaging data, where neuroimages are modeled as outcome variables; and the disease status along with the clinical, biological and demographical information can all potentially be predictors. 

 A pioneer work using the regression model for the neuroimaging data is the mass univariate analysis (MUA). This approach fits a general linear model (GLM) at each spatial location in the brain (to which is referred as a voxel) and obtains massive test statistics over space to identify voxels/regions that are significantly associated with a specific covariate, which requires multiple comparisons correction. One standard procedure  is to calculate the family-wise error rate (FWER) based on the random field theory for statistical parametric maps \citep{friston1995statistical, worsley04, lazar08}. Another approach is to control the false discovery rate (FDR) using the observed p-values~\citep{benjamini2001control, genovese2002thresholding}. A major drawback of MUA is that the models do not borrow information from the spatial dependence across brain locations. In practice, the neuroimaging data are usually pre-processed by a spatial smoothing procedure using a kernel convolution approach. Performing MUA on these pre-smoothed data may lead to inaccuracy and low efficiency in terms of estimating and testing the covariate effects \citep{chumbley09, li2011multiscale}. Recent development in adaptive smoothing methods for preprocessing \citep{yue2010adaptive} and estimation \citep{polzehl2000adaptive, qiu2007jump, tabelow2008accurate, tabelow2008diffusion, li2011multiscale, wang2013multiscale} may improve the performance in terms of reducing noise and preserving features. It is especially powerful to detect delicate features such as jump discontinuities, which is one of the universal characteristics for neuroimaging data \citep{chan2005image, tabelow2008diffusion, tabelow2008accurate, chumbley09}.  

To achieve a similar goal in the analysis of neuroimaging data, \citet{zhu2014spatially} recently developed a systematic modeling approach using a novel spatially varying coefficient model (SVCM) which incorporates both spatial dependence and piecewise smooth covariate effects. General SVCMs have been extensively investigated and developed for different applications in environmental heath, epidemiology, ecology and geographical studies as demonstrated in \citet{cressie1993statistics, diggle1998model, gelfand2003spatial, smith2002predicting}. The SVCM encompasses a wide range of regression models with the outcome variable observed over space and the regression coefficients modeled as functions varying spatially. We refer to this type regression coefficients as spatially varying coefficient functions (SVCFs). SVCFs are commonly assumed to be  smooth functions or $\rho$ times continuously differentiable functions with $\rho\geq 1$ (we will not make this distinction throughout the rest of this paper unless noted). To model to SVCFs, smooth spatial processes are usually employed to characterize the dependence structure over space.  A pure noise process is typically introduced to capture random variabilities, i.e. the ``spatial nugget effect". \citet{zhu2014spatially} extended the general SVCMs by introducing jump discontinuities into the SVCFs, making the model especially useful for neuroimaging data analysis. Based on stepwise multiscale estimating procedures and asymptotic Wald tests, \citet{zhu2014spatially}'s SVCM also can identify the brain regions that are significantly associated with the given covariates, although it is not developed particularly for feature selection.

In this article, we aim to develop a Bayesian feature selection method for large-scale neuroimaging data that can directly select imaging features associated with covariates of interest.  Regularization methods have been studied extensively for variable selection in regression models \citep{tibshirani1996regression, fan2001variable, zou2006adaptive, candes2007dantzig}. Bayesian methods have also been developed based on various prior specifications.  \citet{mitchell1988bayesian} developed a prior model for linear model coefficients using the mixture of a uniform distribution (slab) and a point mass at zero (spike), which is broadly referred to as the spike-and-slab type of priors. \citet{george1993variable} proposed to use the scale mixture of two zero-mean Gaussian distributions and developed posterior computation algorithm based on Gibbs sampling. Relative works also include but not limited to \citet{ishwaran2005spike, liang2008mixtures, park2008bayesian, hans2009bayesian, bondell2012consistent,johnson2012bayesian,armagan2013generalized,polson2013bayesian, narisetty2014bayesian}. Most of these priors were initially introduced for independent regression coefficients. In light of the needs of integrating complex data structure in many applications,  recent development of Bayesian variable selection incorporates dependence structures into the prior model.  \citet{li2010bayesian} assumed that covariates lay on an undirected graph and used the Ising prior to incorporate this information to the model space and applied this method to analyze the genomics data. For the modeling of spatial data, Markov random field (MRF) is one of the commonly used priors for regression coefficients. For instance, \citet{smith2003assessing, smith2007spatial} applied this type of priors to fMRI data analyses. For the analysis of physical activity and environmental health data, \citet{reich2010bayesian} developed an multivariate SVCM along with a Bayesian variable selection procedure to identify important SVCFs, using the spike-and-slab prior. Their focus, however, was on distinguishing covariate effects that were zero constant, nonzero constant and spatially varying instead of selecting features within the varying coefficient functions.

As the aforementioned variable selection methods cannot be directly applied to identify important features for massive neuroimaging data in the SVCM framework. To fill this gap, we develop a novel Bayesian nonparametric prior model for the SVCF. We refer to it as the thresholded multiscale Gaussian process (TMGP). The TMGP prior is constructed by thresholding a multiscale Gaussian process, which is a combination of two types of GPs: a global GP to account for the entire domain spatial dependence and a local GP to accommodate the regional fluctuations.  
Thus,  the TMGP can characterize important common features of the neuroimaging data, including sparsity, global spatial dependence, piecewise smoothness, edge effects and jumps. The proposed TMGP prior enjoys the large support property, leading to posterior strong consistency in estimation and feature selection for the sparse and piecewise smooth SVCFs. We also develop efficient MCMC posterior computation algorithms based on a kernel convolution approach. A special choice of the kernel function enables the computation scalable to an ultra-high dimensional case. 

The remaining parts of the article is organized as follows: In Section 2, we first introduce the SVCMs for neuroimaging data analysis and particularly discuss conditions on SVCFs in the proposed model. Then we present the construction of TMGP which serves as a prior model for SVCFs. In Section 3, we study the theoretical properties of TMGP and the proposed SVCMs. In Section 4, we develop an efficient and scalable posterior computation algorithm based on a kernel convolution approach. We evaluate the performance of proposed method via simulation studies and and analyze the ABIDE data in Section 5. We conclude our work with a brief discussion on the future work in Section 6. 

\section{Feature selection within the spatially varying coefficient functions}
We start with  general notations and definitions.  Denote by $\mathbb{R}^p$ a $p$-dimensional real Euclidean space for any $p\geq 1$.  For any $\bbeta \in\mathbb{R}^p$, write $\bbeta = (\beta_1,\beta_2,\ldots, \beta_p)^{\rT}$,  define $\|\bbeta\|_\infty=\max_{1\leq k\leq p} |\beta_k|$, $\|\bbeta\|_1=\sum_{k=1}^p |\beta_k|$ and $\|\bbeta\|_2=\sqrt{\sum_{k=1}^p \beta_k^2}$. Denote by $\mR\subset \mathbb{R}^3$ a compact region in the standard brain space. Let $\bs_1,\ldots, \bs_n \in \mR$ be  a set of spatial locations where brain signals are measured. An empirical measure on $\mR$ induced by $\{\bs_1,\ldots, \bs_n\}$ is defined as $\bP_n(d\bs)=\frac{1}{n}\sum_{i=1}^n I[\bs_i\in d\bs]$, where the indicator function $I[\mathcal{A}] = 1$ if event $\mathcal{A}$ occurs, $I[\mathcal{A}] = 0$, otherwise.  For a scalar-valued function $\beta(\cdot): \mR \mapsto \mathbb R$, define $\|\beta(\cdot)\|_\infty=\sup_{\bs\in\mR} |\beta(\bs)|$ and $ \|\beta(\cdot)\|_1 = \int_{\bs\in \mR}|\beta(\bs)|\mathbb{P}_n(d\bs)$. For a $p$-dimensional vector-valued function $\bbeta(\bs)=[\beta_1(\bs),\ldots,\beta_p(\bs)]^{\rT}: \mR \mapsto \mathbb R^p$, define $\|\bbeta(\cdot)\|_{1,\infty}=\max_{1\leq k\leq p}\|\beta_k(\bs)\|_1$. Denote by $\mC(\mR)$ a collection of all the continuous functions defined on $\mR$. Let $D^{\balpha}\beta$ be a partial derivative operator on function $\beta(\cdot)$ (given its existence) which is given by 
$\frac{\partial^{\|\balpha\|_1}\beta}{\partial s_1^{\alpha_1}\cdots \partial s_d^{\alpha_d}}$ for $\balpha\in \mathbb{R}^p$.   Denote by $\mC^\rho(\mR)$ a set of functions $\beta$ defined on $\mR$ with continuous partial derivatives $D^\balpha\beta$ for all $\balpha$ such that $\|\balpha\|_1\leq \rho$.
%We then work with the posterior consistency under the uniform norm defined above. Continuous %functions $C(\cdot)$

\subsection{The spatially varying coefficient model for neuroimaging data}
Suppose the data set consists of $m$ subjects. For each subject $j$, let $y_j(\bs)$ be the brain signal measured from a certain imaging modality at location $\bs \in \mR$; and there are also $p$ covariates are collected, denoted $\bx_j=(x_{j1},\ldots, x_{jp})^{\rT}$, for $j = 1,\ldots, m$. The spatially varying coefficient model (SVCM) for neuroimaging data is given by 
\bg \label{model} y_j(\bs) = \bx_{j}^{\rT}\bbeta(\bs) +  e_j(\bs), \ed
where $\bbeta(\bs) = [\beta_1(\bs),\ldots,\beta_p(\bs)]^{\rT}$ is the spatially varying coefficient function (SVCF) defined on $\mR$. It characterizes associations between covariates and imaging outcomes. To be more specific, $\beta_k(\bs)$ ($k=1,\ldots, p$) quantifies the $k$th covariate effects at brain location $\bs$. The independent error process $e_j(\bs)$ is the assumed to be spatially homogeneous across the whole brain for each subject. In our model, we simply assume that $e_j(\bs) \iid N(0, \sigma^2)$ for all subjects at all locations.

For neuroimaging data from commonly used imaging modalities, only outcomes at locations $\bs_1$, $\ldots$, $\bs_n$ are observed. At these locations, the SVCM proposed in \eqref{model} for the recorded neuroimaging data can be expressed as
\bg\label{outcome}[ \by(\bs_i)\mid \bbeta(\bs_i), \sigma^2] \sim N( \bm X \bbeta(\bs_i), \ \sigma^2\bm I_m)\ed
independently for all $i=1,\ldots,n$, where $\by(\bs_i) = [y_1(\bs_i),\ldots,y_m(\bs_i)]^{\rT}$, $\bm X = [\bm x_1,\ldots,\bm x_m]^{\rT}$ and $\bm e(\bs_i) = [e_1(\bs_i),\ldots,e_m(\bs_i)]^{\rT}$. For simplicity, denote by $\mathcal Y=\{\by(\bs_i)\}_{i=1}^n$ an $m\times n$ matrix recoding all the neuroimaging outcomes involved in the study.

In neuroimaging studies, there exists a natural region partition of the whole brain domain $\mR$ into bounded connected sets $\mR_1,...,\mR_{G}$ with non-empty interiors, such that $\mR=\cup_{g=1}^{G} \mR_g, \ \mR_g\cap\mR_{g^\prime}=\emptyset, \ \forall g\neq g^\prime$. In many cases, one can utilize $R_g$ as neuroanatomical regions from commonly used labeling systems such as the Automated Anatomical Labeling (AAL) \citep{tzourio2002automated}. For region of interest (ROI) based analysis, each ROI is one parcellated region. For seed-based region-level analysis, $R_g$ can be regarded as the clusters showing strong functional connectivities based on preliminary results. In some voxelwise analysis with no regional information to be incorporated, we can simply consider each voxel (a 3D cubic) as a region and the centers of voxels as observed brain locations $\bs_1,\ldots, \bs_n$. 
%Of note, throughout the rest of the paper we define $\mR_g^o=\{\bs_i\in \mR_g: i=1,...,n\}$ for $g=1,...,G$ as the set of 

To utilize the proposed SVCM for analysis of neuroimaging data and feature selection, we state our assumptions for the SVCF in model \eqref{model}. Specifically, we work with piecewise smooth ($\rho$-times continuously differentiable to be accurate) functions with structured sparsity, which can be mathematically expressed as follows: for any function $\beta(\bs)$ defined on $\mR$ to be a varying coefficient function within model \eqref{model}, $\beta(\bs)$ must satisfy that:
	\begin{itemize}
		\item[(C1)] there exists an index set $I_1\subset\{1,...,G\}$, such that $\beta(\bs)\times I[\bs\in\overline\mR_g]\in \mC^\rho(\overline\mR_g)$ where $\overline\mR_g$ is the closure of $R_g$, for all $g\in I_1$ with $\rho = \left[\frac{d}{2}\right]+1$;
		\item[(C2)] for any $g\in I_1$, $\beta(\bs)$ is bounded away from zero, that is, $$\lambda=\inf_{\bs\in \cup_{g\in I_1}\mR_g}|\beta(\bs)|>0;$$
		\item[(C3)] let $I_0=\{1,...,G\}\backslash I_1$, then $\beta(\bs)=0$ for all $\bs\in \cup_{g\in I_0}\mR_g$.
		%\item[(C4)] The cardinality of $I_1$, denoted by $g_n$, satisfies that $g_n=o(R_n)$.
	\end{itemize}
The conditions on the SVCFs can be interpreted as follows: (C1) demonstrates that the functions are smooth within each brain region, which implies more homogeneous covariate effects; (C2) indicates the jump discontinuities at the boundaries of brain regions; (C3) introduces sparsity into each SVCF in model \eqref{model} and restricts the sparsity structure at the regional level.

We introduce the notation $\mP$ to represent a set of functions satisfying (C1)--(C3) defined on $\mR$. The definition $\mP$  carries the information of $R_1, \ldots, R_G$ as well as $\lambda$ and $I_1$. For any function $\beta(\bs)\in \mP$, denote by $I_1\{\beta(\bs)\}$ the index of nonzero sets as given in (C1) and by $\Lambda\{\beta(\bs)\}$ the value $\lambda$ defined in (C2). In the same vein, define a set of $p$-dimensional vector-valued functions $\textbf{P}=\{\bbeta(\bs)=[\beta_1(\bs),\ldots,\beta_p(\bs)]^\top: \beta_k(\bs)\in \mP, k=1,\ldots,p\}$. 

\subsection{The thresholded multiscale Gaussian process priors}
\subsubsection{Construction of the prior}
A Gaussian process (GP) can be regarded as a probabilistic measure on certain functional spaces, making it as popular prior models in Bayesian nonparametric data analysis. In general, the GP prior, denoted by $\mathcal{GP}[\mu(\cdot), C(\cdot, \cdot)]$, is determined by its mean function $\mu: \mR\mapsto \mathbb R$ and the covariance kernel function $C: \mR\times \mR \mapsto \mathbb R$. A draw $\beta(\cdot)\sim \mathcal{GP}[\mu(\cdot), C(\cdot, \cdot)]$ is a function defined on $\mR$ such that any finite collection of its function values are jointly multivariate Gaussian. To be specific, for any choices of $\bs_1,\ldots,\bs_n\in \mR$, $[\beta(\bs_1),\ldots,\beta(\bs_n)]^\top\sim N(\bm\mu, \bm C)$ with $\bm \mu = [\mu(\bs_1),\ldots,\mu(\bs_n)]^\top$ and $\bm C=\{C(\bs_i, \bs_j)\}_{1\leq i\leq n, 1\leq j\leq n}$. The boundedness and smoothness of random functions generated from GP are determined through the covariance kernel function. Typical choices for the covariance kernel functions include but not limited to the rational quadratic kernel, Mat\'{e}rn class of kernels, the square exponential kernel. \citet{stein1999interpolation, williams2006gaussian} contain more detailed discussions on the covariance functions. 

To enable detailed feature selections within the SVCFs $\beta_k(\bs)\in \mP \ (k=1,\ldots, p)$ in model \eqref{model}, we develop the thresholded multiscale Gaussian process (TMGP) prior, which can be represented as follows: $\beta(\bs)\sim \TMGP[\tau^2, \theta^2, \lambda, \kappa(\cdot, \cdot)]$ implies that 
\bg \label{thres}\beta(\bs) = \tbeta(\bs)I_\lambda[\tbeta(\bs)],\ed
\bg \label{mGP}\tbeta(\bs) = \gamma(\bs) + \epsilon(\bs),\ed
\bg \label{gGP}\gamma(\bs)\sim\GP[0, \tau^2\kappa(\bs, \bs')],\ed
\bg \label{lGP}\epsilon(\bs)\sim \GP[0, \theta^2\chi(\bs, \bs')] \ed
for all $\bs\in\mR$, where 
\bg \label{thresfunc}I_\lambda[\tbeta(\bs)]=\sum_{g=1}^G I\left[\bs\in \mR_g: \inf_{s\in \mR_g}|\tbeta(\bs)|>\lambda\right]\ed
is the thresholding function that generate region-wise sparse features; $\lambda>0$ is the thresholding parameter; $\tau^2>0$ and $ \theta^2>0$ are variance parameters in the GPs; $\kappa(\cdot, \cdot): \mR\times \mR\mapsto \mathbb R$ is a kernel correlation function and $\chi(\bs, \bs')$ is constructed from $\kappa$ as $\chi(\bs, \bs')=\sum_{g=1}^G\kappa(\bs, \bs')\times I[\bs, \bs'\in R_g]$. 

The thresholding construction of $\beta(\bs)$ from a particular multiscale Gaussian process~\citep[mGP]{dunson2012multiresolution} $\tbeta(\bs)$ introduces sparsity and enables feature selection. For voxel-wise analysis without regional information, the thresholding function in \eqref{thresfunc} can be simplified as $I_\lambda[\tbeta(\bs)]=I[|\tbeta(\bs)|>\lambda]$, where $\bs$ denotes the center of a voxel. The mGP $\tbeta(\bs)$ in our prior is a combination of one ``global" GP, $\gamma(\bs)$, which captures the general dependence structures across the whole brain domain and one ``local" GP, $\epsilon(\bs)$, reflecting the dependence and variabilities within each parcellated brain region. This multiscale construction can also naturally generate jumping discontinuities on the boundaries of the brain regions. 

We illustrate the procedure to sample an SVCF from our TMGP priors on a two-dimensional square region in Figure \ref{tmgp_eg}, where the dashed lines partition the whole region into four equally spaced sub-regions. $\gamma(\bs)$ is smooth over the whole region; $\epsilon(\bs)$ is smooth within each sub-region but has distinct jumps on the boundaries. As we may see from Figure \ref{tmgp_eg}, one particular issue is that the generating processes, $\gamma(\bs)$ and $\epsilon(\bs)$, could vary substantially while results in similar sparse SVCFs. This phenomenon indicates weak identifiability in global and local GPs in the TMGP prior model. To resolve this issue, we impose the certain constraint on the marginal variance of the local GP $\epsilon(\bs)$. i.e., $\theta^2$. This implies that we assume that the sampled SVCF is constructed from strong global signals and weak local signals. 

\begin{figure}[!t]
	\small
	\makebox[\textwidth][c]{\includegraphics[scale=0.43]{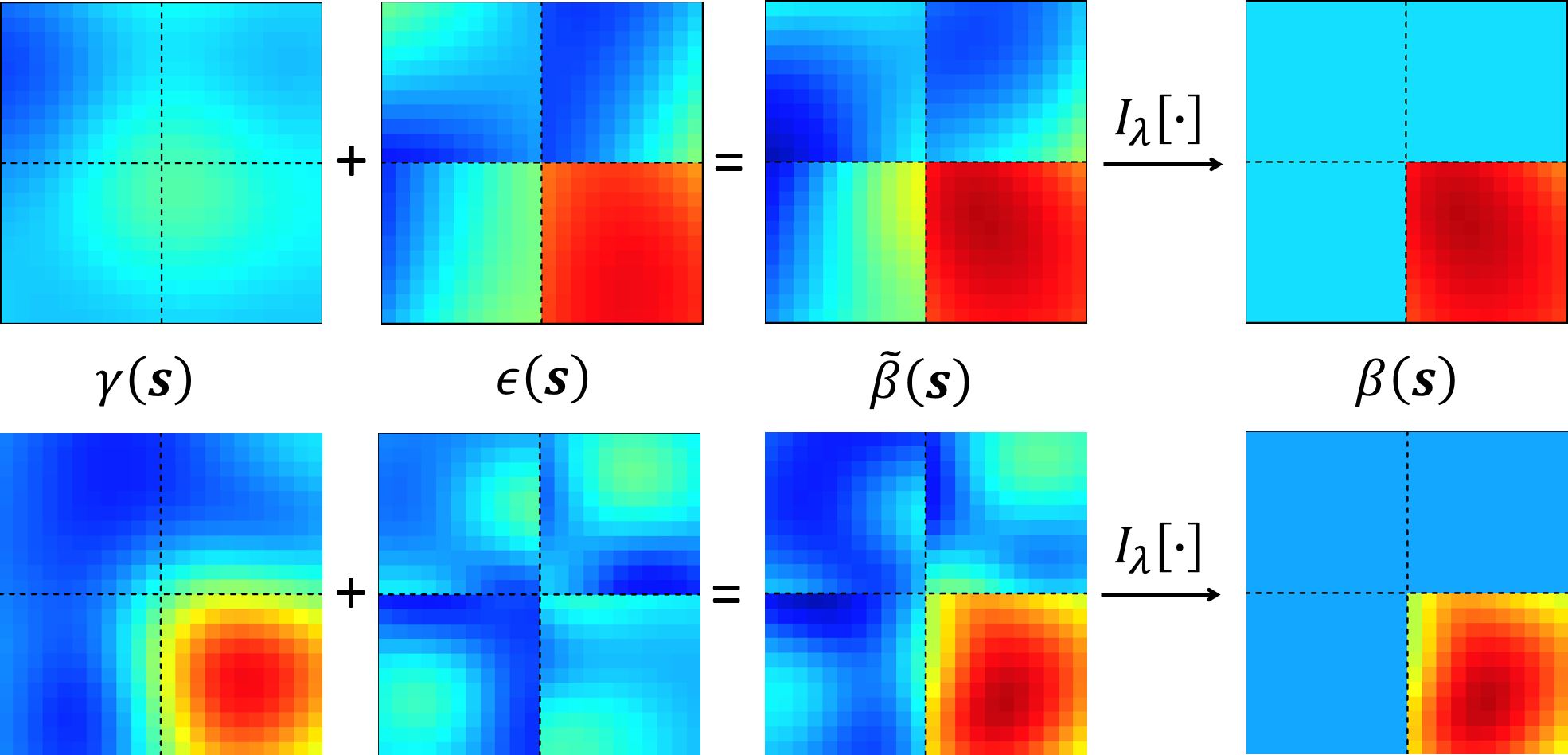}}%
	\caption{Sample SVCFs from TMGP prior. Top row: weak global signals with strong local signals; Bottom row: strong global signals with weak local signals. }\label{tmgp_eg}
\end{figure}

\section{Theoretical results}
We first introduce two sets of extra conditions in addition to the conditions (C1)-(C3) for the SVCFs. 
The design matrix $\bm X$ as defined in \eqref{outcome} satisfies:
\begin{itemize}
	\item[(X1)] Let $d_{\min}$ and $d_\max$ be the smallest and largest eigenvalues of $\frac{1}{m}\bX^\top\bX$, then $0<d_\min<d_\max<\infty$.
\end{itemize}

For the kernel correlation functions $\kappa(\cdot,\cdot)$ in our proposed TMGP priors, we introduce the following condition:
\begin{itemize}
	\item[(K1)] $\kappa(\bs, \bs')=\prod_{j=1}^d K_j(\|\bs-\bs'\|)$ for some nowhere zero, continuous, symmetric density function (up to a normalization constant) $K_j$ defined on $\mathbb R$.
	\item[(K2)] $\kappa(\bs, \cdot)$ has continuous partial derivates up to order $2\rho + 2$ where $\rho=\left[\frac{d}{2}\right]+1$.
\end{itemize}

\begin{theorem}\label{largsup}
	Consider an arbitrary SVCF $\beta^0(\bs)\in\mP$, i.e. $\beta^0(\bs)$ satisfies conditions (C1)-(C3). Let $\lambda_0=\Lambda[\beta^0(\bs)]$ and suppose the partition number $G<\infty$. If $0<\lambda<\lambda_0, 0<\tau^2, \theta^2<\infty$ and the kernel function $\kappa$ satisfies (K1), then the proposed prior $$\beta(\bs)\sim\TMGP[\tau^2, \theta^2, \lambda, \kappa(\cdot, \cdot)]$$ given in \eqref{thres}-\eqref{lGP} satisfies that
	$$\Pi\left(\|\beta(\bs)-\beta^0(\bs)\|_\infty<\varepsilon\right)>0 \quad\quad \mbox{for all  } \varepsilon>0$$
\end{theorem}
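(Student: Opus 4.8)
The plan is to show that the TMGP prior puts positive mass on any sup-norm neighborhood of a target $\beta^0 \in \mP$ by separately handling the ``support'' region $\cup_{g \in I_1}\mR_g$ and the ``null'' region $\cup_{g \in I_0}\mR_g$, and then showing that with positive probability the thresholding operator $I_\lambda$ selects exactly the index set $I_1\{\beta^0\}$ while the unthresholded process $\tbeta$ is uniformly close to $\beta^0$ on the support. Write $I_1 = I_1\{\beta^0(\bs)\}$ and $I_0 = \{1,\dots,G\}\setminus I_1$. The first step is to record that $\beta^0$ restricted to each $\overline\mR_g$, $g \in I_1$, lies in $\mC^\rho(\overline\mR_g)$ by (C1), and to note that $\mC^\rho(\overline\mR_g) \subset \mC(\overline\mR_g)$; the key analytic input is the classical result (going back to the RKHS / small-ball literature for stationary Gaussian processes, e.g.\ the support theorem for GPs with kernels of the product form in (K1)) that a $\GP[0,\kappa]$ with a nowhere-zero continuous product kernel has full sup-norm support in $\mC(\mR)$, and likewise the restricted ``local'' process $\GP[0,\chi]$ has full support in $\bigoplus_g \mC(\overline\mR_g)$. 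I would cite the excerpt's own framework together with a standard GP support lemma for this.

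Next I would construct the event on which $\tbeta$ is close to $\beta^0$ in the right sense. Fix a small $\delta > 0$ to be chosen later with $\delta < \min(\varepsilon, (\lambda_0 - \lambda)/3)$ — this is where the hypothesis $0 < \lambda < \lambda_0$ is used. Because $\gamma$ and $\epsilon$ are independent, and since $\tbeta = \gamma + \epsilon$, it suffices to force $\gamma$ to be uniformly within $\delta/2$ of a suitable target and $\epsilon$ within $\delta/2$ of another. The natural choice: ask $\gamma$ to approximate a globally continuous function $g_0 \in \mC(\mR)$ that agrees with $\beta^0$ on $\cup_{g\in I_1}\mR_g$ and is, say, identically $0$ on $\cup_{g\in I_0}\mR_g$ (such a $g_0$ exists — extend $\beta^0|_{\cup_{I_1}\overline\mR_g}$ continuously, using e.g.\ Tietze, though one must be a little careful since $\beta^0$ jumps at region boundaries; alternatively approximate $\beta^0$ region-by-region and absorb the boundary mismatch into the \emph{local} process instead), and ask $\epsilon$ (which lives region-by-region) to supply the small correction on each region. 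Then by independence and full support, $\Pi(\|\gamma - g_0\|_\infty < \delta/2) > 0$ and $\Pi(\|\epsilon - h_0\|_{1,\infty\text{-type}} < \delta/2) > 0$, so the intersection has positive probability, and on it $\|\tbeta - \beta^0\|_\infty < \delta$ uniformly over $\cup_{g \in I_1}\mR_g$ and $\|\tbeta\|_\infty < \delta$ uniformly over $\cup_{g\in I_0}\mR_g$.

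The third step is to check that on this event the thresholding does the right thing. For $g \in I_0$: $\inf_{\bs\in\mR_g}|\tbeta(\bs)|$ could be anything, but what matters for $\beta(\bs) = \tbeta(\bs) I_\lambda[\tbeta(\bs)]$ is that wherever $I_\lambda$ equals $1$ we need $\tbeta$ small, and wherever it equals $0$ we need $\beta^0 = 0$ — and indeed on $\mR_g$, $g\in I_0$, regardless of whether $I_\lambda$ fires, $|\beta(\bs)| \le |\tbeta(\bs)| < \delta < \varepsilon$ while $\beta^0(\bs) = 0$, so $|\beta(\bs) - \beta^0(\bs)| < \varepsilon$ there automatically — no constraint on $\epsilon$ needed on the null regions beyond the bound already imposed. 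For $g \in I_1$: by (C2), $\inf_{\bs\in\mR_g}|\beta^0(\bs)| \ge \lambda_0$, so on our event $\inf_{\bs\in\mR_g}|\tbeta(\bs)| \ge \lambda_0 - \delta > \lambda$, hence $I_\lambda[\tbeta]\big|_{\mR_g} = 1$, and therefore $\beta(\bs) = \tbeta(\bs)$ on $\mR_g$, giving $|\beta(\bs) - \beta^0(\bs)| = |\tbeta(\bs) - \beta^0(\bs)| < \delta < \varepsilon$. Combining the two cases yields $\|\beta - \beta^0\|_\infty < \varepsilon$ on an event of positive prior probability, which is the claim.

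I expect the main obstacle to be the construction in the second step — namely arranging a \emph{continuous} global target $g_0$ and compatible per-region local targets $h_0$ whose sum matches the \emph{discontinuous} $\beta^0$ up to $\delta$, since $\gamma$ is forced to be globally continuous while $\beta^0$ has genuine jumps across $\partial\mR_g$. The resolution is that the jumps are entirely absorbable by the local process $\epsilon$, which is a direct sum of independent pieces on each $\overline\mR_g$ and hence can be discontinuous across region boundaries; so one sets $g_0 \equiv 0$ (or any fixed continuous function) and lets $h_0|_{\overline\mR_g} = \beta^0|_{\overline\mR_g} - g_0|_{\overline\mR_g} \in \mC^\rho(\overline\mR_g) \subset \mC(\overline\mR_g)$ for $g\in I_1$ and $h_0|_{\overline\mR_g} = -g_0|_{\overline\mR_g}$ for $g\in I_0$, then invokes full support of $\GP[0,\chi]$ on $\bigoplus_g \mC(\overline\mR_g)$. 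A secondary technical point worth a sentence is justifying full sup-norm support of the GPs under (K1) alone (note (K2) is \emph{not} assumed here, consistent with the theorem only asserting positive prior mass, not smoothness of draws); this follows from the standard fact that a stationary GP whose spectral measure has full support — guaranteed by the nowhere-zero, integrable product structure in (K1) — has RKHS dense in $\mC(\mR)$.
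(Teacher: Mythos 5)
Your proposal is correct, and its skeleton matches the paper's: split $\mR$ into the active regions $\cup_{g\in I_1}\mR_g$ and the null regions, exploit the gap $\lambda<\lambda_0$ so that uniform closeness of $\tbeta$ to $\beta^0$ forces the region-wise indicator to equal one on every $g\in I_1$, and thereby reduce the claim to positive prior mass of a uniform-approximation event for $\tbeta$. Where you genuinely diverge is in how that positive mass is established. The paper expands $\gamma$ and $\epsilon$ in Karhunen--Lo\`{e}ve series, truncates at a finite level $L_g$ per region (uniform convergence of the GP tail via Adler's theorem, plus the expansion $\beta^0=\sum_l w_{lg}\phi_l$ on each region), and obtains positivity from a finite-dimensional Gaussian vector $(u_l,v_{lg})$ assigning mass to a nonempty open set; you instead invoke a full sup-norm support theorem for the GPs, take the global target to be zero, and let the local process absorb $\beta^0$ together with its jumps region by region --- structurally parallel to the paper's constraints on the sums $u_l+v_{lg}$, but more modular, at the price of leaning on an external support lemma. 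Two further points: your handling of the null regions is actually simpler than the paper's (you only need $|\beta(\bs)|\leq|\tbeta(\bs)|<\varepsilon$ there, whereas the paper imposes $\sup_{\bs\in\mR_g}|\tbeta(\bs)|\leq\lambda$ so that the indicator is exactly zero), and your one-line justification of full support overstates what (K1) delivers --- a nowhere-zero integrable kernel need not have a spectral measure with full support --- though the conclusion still holds because (K1) makes the kernel continuous and integrable, hence its spectral density is continuous, nonnegative and positive on an open set, which suffices (by the standard analytic-continuation argument) for the RKHS to be dense in the continuous functions on each compact region; the paper itself asserts the analogous fact (``the RKHS of $\kappa$ is $\mC(\mR)$'') without proof, so the two arguments rest on comparable unproved kernel facts.
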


Theorem \ref{largsup} demonstrates that the proposed TMGP prior assign positive measures to arbitrarily small neighborhoods of all elements within $\mP$, the family of spatially varying coefficient functions defined in our model \eqref{model}. This property is essential, especially for Bayesian nonparametric  priors, since it is necessary for appropriate posterior behaviors and can not be guaranteed in many cases. 

\begin{theorem}\label{poscons}
	Suppose that our observed data satisfies $\by(\bs_i)\sim N(\bX\bbeta^0(\bs_i), \sigma^2\bm I_m)$ independently following the notations in \eqref{outcome}, with a known $\sigma^2$ and a fixed $p$-dimensional SVCF $\bbeta^0(\bs)=[\beta^0_1(\bs),\ldots,\beta^0_p(\bs)]^{\rT}, \ p<m$, defined on $\mR$. Suppose $\beta^0_k(\bs)$ satisfies conditions (C1)-(C3), i.e. $\bbeta^0(\bs)\in\emph{\textbf P}$, with $G<\infty$; $\bX$ satisfies condition (X1); the kernel function in the TMGP priors satisfies (K1) and (K2).
	For all $\varepsilon>0$, if $\frac{\sigma^2}{m}<\frac{\varepsilon^2 d_{\min}}{8\log2}$ and each dimension of $\bbeta(\bs)$ follows a TMGP prior independently satisfying the conditions in Theorem \ref{largsup}, then the posterior distribution satisfies that
	$$ \Pi\left[U_\varepsilon^c \mid \by(\bs_1),\ldots ,\by(\bs_n)\right] \to 0, $$
	as $n\to\infty$ in $P_{\bbeta^0}^n$ probability, where $U_{\varepsilon}=\left\{\bbeta(\bs)\in  \emph{\textbf P}: \|\bbeta(\bs)-\bbeta^0(\bs)\|_{1, \infty}<\varepsilon\right\}$.
\end{theorem}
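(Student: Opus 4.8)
The plan is to prove Theorem~\ref{poscons} by a Schwartz-type posterior consistency argument, specialized to the conjugate Gaussian regression \eqref{outcome}: since $\sigma^2$ is known, the posterior mass of any measurable set has a closed-form likelihood-ratio expression in which every factor not depending on $\bbeta$ cancels, so the usual role of an exponentially consistent test is played directly by concentration of the ordinary least squares (OLS) estimator $\hat\bbeta(\bs_i)=(\bX^{\rT}\bX)^{-1}\bX^{\rT}\by(\bs_i)$, which under (X1) and $p<m$ is well defined and, at the truth, satisfies $\hat\bbeta(\bs_i)\sim N(\bbeta^0(\bs_i),\sigma^2(\bX^{\rT}\bX)^{-1})$ independently over $i$. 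Condition (K2) forces the global GP $\gamma$ and the region-wise local GP $\epsilon$ to have a.s. $\mC^\rho$ sample paths on each $\overline\mR_g$ (Kolmogorov continuity applied to the $\rho$-th derivative process), so that a draw $\beta(\bs)=\tbeta(\bs)I_\lambda[\tbeta(\bs)]$ from the $\TMGP$ prior satisfies (C1)--(C3) and lies in $\textbf{P}$ almost surely; hence the prior, and the posterior, are supported on $\textbf{P}$, and $U_\varepsilon,U_\varepsilon^c$ partition a full-measure set. A Pythagorean decomposition of each Gaussian factor about $\bX\hat\bbeta(\bs_i)$ then yields
\bg
\Pi\left[U_\varepsilon^c\mid\mathcal Y\right]=\frac{\int_{U_\varepsilon^c}\exp\{-Q_n(\bbeta)\}\,d\Pi(\bbeta)}{\int\exp\{-Q_n(\bbeta)\}\,d\Pi(\bbeta)},\qquad Q_n(\bbeta)=\frac{1}{2\sigma^2}\sum_{i=1}^n\left\|\bX\{\bbeta(\bs_i)-\hat\bbeta(\bs_i)\}\right\|_2^2,
\ed
and for the estimates below I would work throughout with the $\bbeta^0$-centered integrand $\exp\{-[Q_n(\bbeta)-Q_n(\bbeta^0)]\}$, which cancels from numerator and denominator.

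For the denominator, Theorem~\ref{largsup} applied coordinatewise (its hypotheses hold in each dimension by assumption, using $0<\lambda<\Lambda[\beta^0_k]$) gives $\Pi(B_\delta)>0$ for every $\delta>0$, where $B_\delta=\{\bbeta:\max_k\|\beta_k(\cdot)-\beta^0_k(\cdot)\|_\infty<\delta\}$. Restricting the denominator integral to $B_\delta$, expanding $Q_n(\bbeta)-Q_n(\bbeta^0)$ as a quadratic form in $\bbeta(\bs_i)-\bbeta^0(\bs_i)$ plus a term linear in the OLS noise $\hat\bbeta(\bs_i)-\bbeta^0(\bs_i)$, applying Jensen's inequality with respect to $d\Pi/\Pi(B_\delta)$, bounding the quadratic part by $\leq c(\delta)\,n$ with $c(\delta)\to0$ via (X1), and controlling the (mean-zero, uniformly bounded-variance, independent) linear part by Kolmogorov's strong law, I expect to get that for each $\delta$, eventually almost surely, the denominator is at least $\Pi(B_\delta)\exp\{-n\,h(\delta)\}$ with $h(\delta)\to0$ as $\delta\to0$. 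Centering at $\bbeta^0$ is essential here: a bound centered instead at the noisy $\hat\bbeta$ would pay an extra $\exp\{-\Theta(n)\}$ from the irreducible $\Theta(n)$ noise term (of order $np\sigma^2$) already present in $Q_n(\bbeta^0)$, and would fail.

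For the numerator, let $k$ index the coordinate achieving $\|\bbeta(\cdot)-\bbeta^0(\cdot)\|_{1,\infty}$. On the event $A_n=\{\frac1n\sum_i|\hat\beta_k(\bs_i)-\beta^0_k(\bs_i)|<\varepsilon/2\text{ for all }k\}$, the triangle inequality gives $\frac1n\sum_i|\beta_k(\bs_i)-\hat\beta_k(\bs_i)|\geq\varepsilon/2$ for every $\bbeta\in U_\varepsilon^c$, whence Cauchy--Schwarz gives $\sum_i\|\bbeta(\bs_i)-\hat\bbeta(\bs_i)\|_2^2\geq n\varepsilon^2/4$ and, using $\bX^{\rT}\bX\succeq m d_{\min}\bm I_p$, $Q_n(\bbeta)\geq n m d_{\min}\varepsilon^2/(8\sigma^2)$ uniformly over $U_\varepsilon^c$, so the numerator is at most $\exp\{-n m d_{\min}\varepsilon^2/(8\sigma^2)\}$. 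Since the diagonal entries of $\sigma^2(\bX^{\rT}\bX)^{-1}$ are bounded by $\sigma^2/(m d_{\min})$, the weak law gives $\frac1n\sum_i|\hat\beta_k(\bs_i)-\beta^0_k(\bs_i)|\to\sqrt{2\sigma^2[(\bX^{\rT}\bX)^{-1}]_{kk}/\pi}\leq\sqrt{2\sigma^2/(\pi m d_{\min})}$ in probability, and the hypothesis $\sigma^2/m<\varepsilon^2 d_{\min}/(8\log2)$ makes this limit strictly below $\varepsilon/2$, so $P^n_{\bbeta^0}(A_n)\to1$; the precise constant $8\log2$ is a conservative bookkeeping choice guaranteeing that the numerator's exponential rate both is summable and out-runs the denominator lower bound. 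Combining, on $A_n$ one has $\Pi[U_\varepsilon^c\mid\mathcal Y]\leq\Pi(B_\delta)^{-1}\exp\{n[h(\delta)-m d_{\min}\varepsilon^2/(8\sigma^2)]\}$; choosing $\delta$ so small that $h(\delta)<m d_{\min}\varepsilon^2/(8\sigma^2)$ sends the right-hand side to $0$, which together with $P^n_{\bbeta^0}(A_n)\to1$ gives convergence in $P^n_{\bbeta^0}$-probability.

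The step I expect to be the main obstacle is the denominator lower bound: keeping it of the form $\exp\{-o(n)\}$ rather than $\exp\{-\Theta(n)\}$ requires both the $\bbeta^0$-centering and a careful strong-law control of the cross term between $\bbeta(\bs_i)-\bbeta^0(\bs_i)$ and the OLS noise, and it is this control — not the numerator tail bound, which is routine — that dictates how small $\delta$, hence how tight the Theorem~\ref{largsup} neighborhood, must be taken. A secondary nuisance is that $U_\varepsilon$ itself depends on $n$ through the empirical measure $\bP_n$, so all the bounds must be made uniform over the design points; condition (X1), via $m d_{\min}\bm I_p\preceq\bX^{\rT}\bX\preceq m d_{\max}\bm I_p$, is precisely what supplies that uniformity.
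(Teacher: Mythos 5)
Your overall architecture (prior positivity from Theorem \ref{largsup} for the denominator, an OLS-based argument for the numerator) is in the right spirit, but the final combination contains a genuine error: you mix bounds computed under two different centerings. Your numerator bound $\exp\{-n m d_{\min}\varepsilon^2/(8\sigma^2)\}$ is valid (on $A_n$) only for the \emph{uncentered} integrand $\exp\{-Q_n(\bbeta)\}$, while your denominator bound $\Pi(B_\delta)\exp\{-n h(\delta)\}$ with $h(\delta)\to 0$ can only hold for the $\bbeta^0$-\emph{centered} integrand $\exp\{-[Q_n(\bbeta)-Q_n(\bbeta^0)]\}$ (the uncentered denominator is itself of order $\exp\{-Q_n(\bbeta^0)\}$). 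Tracking the centering consistently, the centered numerator on $A_n$ is bounded by $\exp\{-n m d_{\min}\varepsilon^2/(8\sigma^2)+Q_n(\bbeta^0)\}$, and $Q_n(\bbeta^0)=\frac{1}{2\sigma^2}\sum_{i=1}^n\|\bX(\hat\bbeta(\bs_i)-\bbeta^0(\bs_i))\|_2^2\sim\frac{1}{2}\chi^2_{np}\approx np/2$. So your exponent is really $n\left[h(\delta)+\frac{p}{2}(1+o_P(1))-\frac{m d_{\min}\varepsilon^2}{8\sigma^2}\right]$, and the theorem's hypothesis only guarantees $\frac{m d_{\min}\varepsilon^2}{8\sigma^2}>\log 2\approx 0.69$, which is weaker than the needed $>p/2$ whenever $p\geq 2$. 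In other words, a pointwise (event-$A_n$) bound on the likelihood ratio uniformly over $U_\varepsilon^c$ necessarily pays the $\Theta(np)$ noise fluctuation in the exponent, and under the stated condition on $\sigma^2/m$ that price is not affordable; your approach could only prove the theorem under a strictly stronger separation condition (roughly $\sigma^2/m<\varepsilon^2 d_{\min}/(4p)$).

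The paper avoids this by the standard testing route, which is where your proposal is missing the key idea. It constructs per-location tests $\Phi(\bs_i)$ from the OLS estimator whose type I and type II errors are $\exp\{-\Omega_m m p\}$ with $\Omega_m=\frac{\varepsilon^2 d_{\min}}{16\sigma^2}-\frac{\log 2}{2m}$; the hypothesis $\sigma^2/m<\varepsilon^2 d_{\min}/(8\log 2)$ is exactly what makes $\Omega_m>0$, i.e.\ each location's test merely needs to be better than a coin flip. These are aggregated by a majority-vote test over the $\geq c'n-1$ locations where any alternative in $U_\varepsilon^c$ separates from $\bbeta^0$ (Lemmas \ref{point}--\ref{matrixbound}), with Hoeffding's inequality delivering $e^{-Cn}$ errors; uniformity over the alternative is obtained by a covering argument on the sieve $\textbf{P}_n$ (Lemmas \ref{covering}--\ref{seivemr}, which is where (K2) is actually used — your claim that prior support in $\textbf{P}$ makes the sieve unnecessary is tied to the same flaw); and the numerator is then controlled through the change-of-measure identity $E_{P_{\bbeta^0}}\bigl[(1-\Psi)\prod_i f_{i,\bbeta}/f_{i,\bbeta^0}\bigr]=E_{P_{\bbeta}}[1-\Psi]$ inside Theorem A.1 of \citet{choudhuri2004bayesian}, together with the KL-neighborhood Lemma \ref{klnbr} for the denominator. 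That Fubini/change-of-measure step is precisely what lets the $Q_n(\bbeta^0)\approx np/2$ term cancel exactly rather than appear in the exponent, and it has no analogue in your argument.
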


Theorem \ref{poscons} justifies the posterior consistency of the proposed TMGP prior given model \eqref{model} under the infill asymptotic framework. It implies that, if a ground truth of the SVCFs exists and the data is generated accordingly, then the posterior distribution of $\bbeta(\bs)$ can be concentrated to an arbitrarily small $\|\cdot\|_{1,\infty}$ neighborhood around the truth as the number of spatial locations goes to infinity. The conditions of this theory also imply that a small ratio between the number of subjects and the variance of pure noise, i.e., $\sigma^2/m$, is also important to guarantee a good performance of our method. One limitation of Theorem \ref{poscons} is that it does not apply to the voxel level analysis where $G=n\to\infty$. However, this type of analysis generally works well empirically. 

Although the $\|\cdot\|_{1,\infty}$ norm is not common in Bayesian asymptotic literatures, a direct result based on Theorem \ref{poscons} is the element-wise posterior consistency for $\bbeta(\bs)$ under the commonly used empirical $\|\cdot\|_1$ norm for a fixed design of spatial locations $\bs_i$ \citep{ghosal2006posterior}. 

\begin{corollary}\label{posconsele}
	Under the same assumptions and conditions in Theorem \ref{poscons}, for all $\varepsilon>0$, if $\sigma^2/m<\frac{\varepsilon^2 d_{\min}}{8\log2}$, then the posterior distribution satisfies that for all $k=1,\ldots,p$,
	$$ \Pi\left[U_{\varepsilon, k}^c \mid \by(\bs_1),\ldots ,\by(\bs_n)\right] \to 0 $$
	as $n\to\infty$ in $P_{\bbeta^0}^n$ probability, where $U_{\varepsilon, k}=\left\{\beta(\bs)\in  \mP: \|\beta(\bs)-\beta_k^0(\bs)\|_{1}<\varepsilon\right\}$.
\end{corollary}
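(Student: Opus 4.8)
The plan is to deduce the corollary directly from Theorem \ref{poscons} by exploiting the elementary relationship between the vector norm $\|\cdot\|_{1,\infty}$ and the scalar empirical norm $\|\cdot\|_1$. By the definitions in Section 2, for any $\bbeta(\bs)=[\beta_1(\bs),\ldots,\beta_p(\bs)]^{\rT}$ and the truth $\bbeta^0(\bs)$,
\[
\|\bbeta(\bs)-\bbeta^0(\bs)\|_{1,\infty}=\max_{1\leq k\leq p}\|\beta_k(\bs)-\beta_k^0(\bs)\|_1 .
\]
Hence, for every fixed $k$, the event that the $k$-th coordinate is $\varepsilon$-far from the truth in the empirical $\|\cdot\|_1$ norm is contained in the event that the whole vector is $\varepsilon$-far in the $\|\cdot\|_{1,\infty}$ norm. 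Concretely, viewing $U_{\varepsilon,k}$ as the cylinder set $\{\bbeta(\bs)\in\textbf{P}:\beta_k(\bs)\in U_{\varepsilon,k}\}$ inside the product space $\textbf{P}$, one has $U_{\varepsilon,k}^c\subseteq U_\varepsilon^c$, since $\beta_k(\bs)\in\mP$ automatically for $\bbeta(\bs)\in\textbf{P}$.

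First I would make precise that the posterior probability $\Pi[U_{\varepsilon,k}^c\mid \by(\bs_1),\ldots,\by(\bs_n)]$ in the corollary is the marginal posterior of the coordinate function $\beta_k(\bs)$, obtained from the joint posterior on $\textbf{P}$ by integrating out $\{\beta_j(\bs):j\neq k\}$; this marginal is well defined, and the measurability of the cylinder set corresponding to $U_{\varepsilon,k}$ is inherited from the sigma-algebra structure already used for $U_\varepsilon$ in Theorem \ref{poscons}. Consequently, the marginal posterior mass on $U_{\varepsilon,k}^c$ equals the joint posterior mass on the cylinder set $\{\bbeta(\bs)\in\textbf{P}:\beta_k(\bs)\notin U_{\varepsilon,k}\}$.

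Next, combining the set inclusion above with monotonicity of the posterior measure gives
\[
\Pi\!\left[U_{\varepsilon,k}^c\mid \by(\bs_1),\ldots,\by(\bs_n)\right]\leq \Pi\!\left[U_\varepsilon^c\mid \by(\bs_1),\ldots,\by(\bs_n)\right].
\]
Since the hypotheses of the corollary coincide with those of Theorem \ref{poscons}, in particular the same bound $\sigma^2/m<\varepsilon^2 d_{\min}/(8\log 2)$ and each coordinate following a TMGP prior satisfying the conditions of Theorem \ref{largsup}, that theorem yields $\Pi[U_\varepsilon^c\mid \by(\bs_1),\ldots,\by(\bs_n)]\to 0$ as $n\to\infty$ in $P_{\bbeta^0}^n$ probability. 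As the left-hand side above is nonnegative and dominated by the right-hand side, the same convergence holds for $\Pi[U_{\varepsilon,k}^c\mid \by(\bs_1),\ldots,\by(\bs_n)]$, for every $k=1,\ldots,p$, which is the assertion of the corollary.

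I do not expect any genuinely hard step here: all the analytic content is carried by Theorem \ref{poscons}. The only points requiring care are the bookkeeping identification of the scalar marginal event for coordinate $k$ with a cylinder event in the product space $\textbf{P}$, and the norm identity $\|\cdot\|_{1,\infty}=\max_k\|\cdot\|_1$ at the level of coordinate functions; both are immediate from the definitions given in Section 2.
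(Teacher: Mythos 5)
Your argument is correct and is exactly the route the paper intends: the corollary is stated as "a direct result" of Theorem \ref{poscons}, following from the identity $\|\bbeta(\bs)-\bbeta^0(\bs)\|_{1,\infty}=\max_{k}\|\beta_k(\bs)-\beta_k^0(\bs)\|_1$, the resulting inclusion of the coordinate-$k$ cylinder event in $U_\varepsilon^c$, and monotonicity of the posterior measure. No gaps; the bookkeeping about the marginal posterior is a fine extra precaution but not essential.
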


Of note, in neuroimaging studies, $\bs_i$ are usually fixed 3D grid points, thus we do not consider the $\|\cdot\|_1$ norms with regard to random measures for $\bs$ when proving posterior consistency. For a fixed design within a finite domain $\mR$ (the volume of brain is limited), another useful direction is to show posterior consistency under the $\|\cdot\|_1$ norm with regard to the Lebesgue measure. We see this as a potential extension to the theory development in the future work.

\section{Posterior Inferences}
\subsection{Model Representation}
%\subsubsection{TMGP prior specifications for SVCFs}
Now consider the SVCM defined in \eqref{model}. For the $p$-dimensional multivariate spatially varying coefficient function, $\bbeta(\bs)=[\beta_1(\bs),...,\beta_k(\bs)]^{\rT}$, we assume that 
$$\beta_k(\bs)\sim\TMGP[\tau_k^2, \theta^2, \lambda_k, \kappa(\cdot,\cdot)],$$
with $\kappa(\cdot,\cdot)$ being a smooth kernel function. 
%being the square exponential (SE) kernel, i.e. $\kappa(\bs, \bs')=\exp\{-b\|\bs-\bs'\|_2^2\}, \ b>0$, to encourage smoothness. 
This specification implies that the global processes \eqref{gGP} have distinct flexible variance parameters $\tau_k^2$, while the local fluctuation processes \eqref{lGP} have a small fixed marginal variance $\theta^2$.  

Based on the prior specification for $\bbeta(\bs)$, we have that  $\beta_k(\bs) = \tbeta_k(\bs)I_{\lambda_k}[\tbeta_k(\bs)]$  for $k=1,\ldots, p$, where
\bg\label{priormodel} \tbeta_k(\bs) = \gamma_k(\bs) +\epsilon_k(\bs),\ed
where $\gamma_k(\bs)\sim\GP[0, \tau_k^2\kappa(\bs, \bs')]$ and $\epsilon_k(\bs)\sim\GP\left(0, \sum_{g=1}^G\theta^2\kappa(\bs, \bs')\times I[\bs, \bs'\in R_g]\right)$. For global GPs: $\gamma_k(\bs)$ in \eqref{priormodel}, its Karhunen-Lo\`{e}ve (KL) expansion can be expressed as 
\bg\label{KLexpan}\gamma_k(\bs) = \sum_{l=1}^\infty \varphi_{l}(\bs)u_{kl},\ed
where $u_{kl} \sim N(0, \tau^2_k\zeta_l)$ independently with $\zeta_l>0$ such that $\sum_{l=1}^\infty \zeta_l \varphi_{l}(\bs) \varphi_{l}(\bs')=\kappa(\bs, \bs')$ and that $ \int\varphi_l(\bs)\varphi_{l^\prime}(\bs)d\bs=0, \forall l\neq l^\prime$  based on the Mercer's Theorem \citep{williams2006gaussian}. In practice, we truncate the infinite sum in \eqref{KLexpan} into $L$ terms such that $\sum_{l=1}^L\zeta_l/\sum_{l=1}^\infty\zeta_l$ is close to $1$. 

This implies a model representation of the proposed SVCM along with the TMGP prior specifications. To be more specific, the  neuroimaging signal $y_j(\bs)$ on locations $\bs_1,\ldots, \bs_n$ can be modeled through latent mGPs: $\bm{\tbeta}(\bs)=[\tbeta_1(\bs),\ldots,\tbeta_p(\bs)]^{\rT}$ and the truncated KL expansion coefficients $\{\bm u_k\}_{k=1}^p$ with $\bm u_k=[u_{k1},...,u_{kL}]^\top$ by integrating out the local GPs $\epsilon_k(\bs)$ in \eqref{priormodel}, which is given by 
\bg\label{lvl1} [y_j(\bs_i)\mid \bm{\tbeta}(\bs_i), \sigma^2] \sim N\left( \bm x_j^\rT g_{\bm\lambda}[\bm{\tbeta}(\bs_i)], \  \sigma^2\right),\ed
\bg \label{lvl2}[\{\tbeta_k(\bs_i)\}_{\bs_i\in \mR_g}\mid \bm u_k] \sim N\left( \bm \varphi_g\bm u_k   ,\  \theta^2 K_g\right),  \ed
\bg \label{lvl3} \ u_{kl}\sim N(0, \zeta_l\tau^2_k), \ed
for $j=1,\ldots,m,\ i=1,\ldots,n, \ k=1,\ldots,p,\  g=1,\ldots,G$ and $l=1,\ldots, L$, where $N(\mu,\sigma^2)$ represents a normal distribution with mean $\mu$ and variance $\sigma^2$, $\bm\varphi_g = \{\bm\varphi(\bs_i)^{\rT}\}_{\bs_i\in\mR_g}$ with $\bm\varphi(\bs_i) =[\varphi_1(\bs_i),\ldots,\varphi_L(\bs_i)]^{\rT}$ and $K_g=\{\kappa(\bs_i, \bs_{i'})\}_{\bs_i,\bs_{i'}\in \mR_g}$ is a correlation matrix. The $p$-dimensional vector value functional operator $g_{\bm\lambda}(\cdot)$ is defined on the domain of all functions in $\mathcal P$, which is given by
  $$g_{\bm\lambda}[\bm{\tbeta}(\bs)]=\left[ \tbeta_1(\bs)I_{\lambda_1}[\tbeta_1(\bs)],\ldots, \tbeta_p(\bs)I_{\lambda_p}[\tbeta_p(\bs)]  \right]^{\rT},$$ 
with  $\bm\lambda=[\lambda_1,\ldots,\lambda_p]^{\rT}$.

\subsection{Hyper Prior Specifications}\label{hyperP}
We assign conjugate priors to variance parameters $\sigma^2$ and $\tau_k^2$ in the TMGP model, i.e. $\sigma^2 \sim \mbox{Inv-Ga}(v, w)$ and $\tau_k^2\iid \mbox{Inv-Ga}(v, w)$, where $\mbox{Inv-Ga}(v, w)$ represents an inverse gamma distribution with shape $v$ and rate $w$.  We fix $\theta^2=1$ to restrict local deviations in order to sort out the weakly identifiability in the model, especially for the case that a small number of observations are recorded in each region.

We develop a data-driven method to specify the prior of thresholding parameters $\lambda_k, k=1,\ldots, p$. We consider the log full conditional of $\lambda_k$ which is given by
$$ \log \pi[\lambda_k\mid {\mathcal Y},   {\bm \tbeta}_k, {\bm \beta}_{-k}, \sigma^2] = \ell[\lambda_k;{\mathcal Y}, {\bm\tbeta}_k, {\bm\beta}_{-k}]/\sigma^2 + C,$$ 
where $C$ is a constant, $\bm\tbeta_k = [\tbeta_k(\bs_1),\ldots, \tbeta_k(\bs_n)]^{\rT}$, $\bm\beta_k = [\beta_k(\bs_1),\ldots, \beta_k(\bs_n)]^{\rT}$, $\bm\beta_{-k} = [{\bm\beta}_{1},\ldots, {\bm\beta}_{k-1}, {\bm\beta}_{k+1},\ldots,{\bm\beta}_{p}]^{\rT}$, and
\bg\label{logliklmd}\ell(\lambda_k) := \ell[\lambda_k;{\mathcal Y}, {\bm\tbeta}_k, {\bm\beta_{-k}}]=\sum_{i=1}^n \omega_k(\bs_i)I_{\lambda_k}[\tbeta_k(\bs_i)], \ed
with $\omega_k(\bs)=\sum_{j=1}^m \tbeta_k(\bs)x_{jk}\left[2y_{j, -k}(\bs)-\tbeta_k(\bs)x_{jk}\right]$ and $y_{j, -k}(\bs) = y_j(\bs)-\sum_{j'\neq k }x_{jj'}\beta_{j'}(\bs)$. The function $\ell(\lambda_k)$  is flat when $\lambda_k$ is around zero and dramatically decreases when $\lambda_k$ is greater a certain value, to which we refer as a ``turning point". It should be close to the true threshold.   Figure \ref{figlmd } shows the profiles of $\ell(\lambda_k)$ for a model with three SVCFs on a space of $900$ locations from $50$ simulated datasets. The true thresholds $\lambda_k = k+1$ for $k = 1,2,3$.  The turning points in the profiles of $\ell(\lambda_k)$ are all around the true thresholds. Thus, we can specify the priors of $\lambda_k$ according to $\ell(\lambda_k)$. In practice, we need to provide rough estimates of $\bm\tbeta_k$ and $\bm\beta_{-k}$ in order to evaluate $\ell(\lambda_k)$ before posterior inferences.  We consider an SVCM with smoothed SVCFs approximated by the truncated K-L expansion, where we compute the ordinary least squares (OLS) of the coefficients, i.e. 
\begin{eqnarray*}
{\{\widehat w_{lk}\}_{l=1}^L}_{k=1}^p = \arg\min_{\{w_{lk}\}}\sum_{j=1}^m \sum_{i=1}^n \left(y_j(\bs_i)  - \sum_{k=1}^p \sum_{l=1}^L  x_{jk} \varphi_l(\bs_i) w_{lk}\right)^2. 
\end{eqnarray*}
Then both $\tbeta_k(\bs)$ and $\beta_k(\bs)$ can be approximated by $\widehat\beta_k(\bs) = \sum_{l=1}^L \varphi_l(\bs) \widehat w_{lk}$. Thus,  we replace $\bm\beta_k$ and $\bm\tbeta_{-k}$  by $\widehat {\bm\beta}_{k} = [\widehat\beta_k(\bs_1),\ldots, \widehat\beta_k(\bs_n)]^{\rT}$ and $\widehat{\bm\beta}_{-k} = [\widehat{\bm\beta}_{1},\ldots, \widehat{\bm\beta}_{k-1}, \widehat{\bm\beta}_{k+1},\ldots,\widehat{\bm\beta}_{p}]^{\rT}$, respectively, in \eqref{logliklmd}.  Write $\widehat\ell(\lambda_k) = \ell(\lambda_k; \mathcal Y, \widehat{\bm\beta}_k, \widehat{\bm\beta}_{-k})$.

We propose to assign uniform priors to $\lambda_k$, i.e. $\lambda_k \sim \mathrm{Unif}(c_k-h_k, c_k+h_k)$, where the half range $h_k$  and center $c_k$ can be determined based on the profile of $\widehat\ell(\lambda_k)$. More specifically,  we evaluate $\widehat\ell(\lambda_k)$ on a set of grid points $\{\lambda^{(1)}_k,\ldots,\lambda^{(G)}_k\}$, denoted $\{\ell^{(1)}_k, \ldots, \ell^{(G)}_k\}$. Given an interval $(a, b)$, define the sample correlation between $\lambda_k$ and $\widehat\ell(\lambda_k)$ within $(a,b)$ as
\begin{eqnarray*}
\widehat\rho(a,b) = \frac{\sum_{\lambda_k^{(g)}\in (a,b)}(\lambda^{(g)}_k -\overline\lambda_k)(\ell^{(g)}_k-\overline\ell_k)}{\sqrt{\sum_{\lambda_k^{(g)}\in (a,b)}(\lambda^{(g)}_k -\overline\lambda_k)^2}\sqrt{\sum_{\lambda_k^{(g)}\in (a,b)}(\ell^{(g)}_k -\overline\ell_k)^2} }
\end{eqnarray*}
with $\overline\lambda_k = \sum_{\lambda_k^{(g)}\in (a,b)} \lambda^{(g)}_k/ M(a,b)$, $\overline\ell_k = \sum_{\lambda_k^{(g)}\in (a,b)} \ell^{(g)}_k/ M(a,b)$ and $M(a,b) =  \sum_{g=1}^GI[\lambda_k^{(g)}\in (a,b)]$. And define 
\begin{eqnarray*}
\widetilde c_k(h) = \min\{\lambda^{(g)}_k: |\widehat\rho(\lambda_k^{(g)}-h, \lambda_k^{(g)}+h)|> \zeta_k\}, 
\end{eqnarray*}
where $\zeta_k$ is determined by the rejection region of Pearson correlation test. Given $h>0$, $\widetilde c_k(h)$ represents the location that $\widehat\ell(\lambda_k)$ and $\lambda_k$ have no significant correlation. Then we specify 
\begin{eqnarray*}
h_k = \min\{h: |\widehat\rho(\widetilde c_k(h)-h, \widetilde c_k(h)+h)|> \zeta_k \},\quad \mbox{and} \quad c_k = \widetilde c_k(h_k).
\end{eqnarray*}
This leads to an informative prior range $[h_k-c_k, h_k+c_k]$ for $\lambda_k$ with a high probability to cover the turning point of $\ell(\lambda_k)$.

%  We  choose a sliding window size for $\lambda_k$ and move it with small steps from zero to large $\lambda_k$ values. Within the range of each sliding window, we test whether the correlation between $\ell$ function values the $\lambda$ values is zero. We normally choose the ranges of the first sliding windows that have nonzero correlations as our uniform priors. 
  
  %This method is summarized in Algorithm \ref{alg1}

% Based on the plots, we can simply choose the uniform priors Unif$(L_k, U_k)$ around the turning points. 
%\begin{algorithm}
%	\caption{The algorithm for specifying uniform priors of $\bm \lambda$}
%	\label{alg1}
%	\begin{algorithmic}
%		\STATE {\bfseries Initial values}: The observed imaging data $y_j(\bs_i)$, the covariates $\bm x_i$, kernel expansion functions $\bm \varphi(\bs)$
%		\STATE \textbf{Estimate $\tbeta_k(\bs), k=1,...,p$:}
%		\STATE Calculate the voxelwise least square solutions $\bbeta$; Smooth (marked as $\mathcal{S}(\cdot)$) the least square estimates with $\bm \varphi(\bs)$; $\bm\tbeta=\mathcal{S}(\bbeta)$.
%        \FOR {$k = 1$ \textbf{to} $q$}
%		\STATE \textbf{Determine $L_k$ and $U_k$:}
%		\STATE $y_{j, -k}(\bs_i) = y_j(\bs_i)-\sum_{t\neq k }x_{jt}\beta_t(\bs)$; $\omega_k(s_i)=\sum_{j=1}^m \tbeta_k(\bs_i)x_{jk}\left[2y_{j, -k}(\bs_i)-\tbeta_k(\bs_i)x_{jk}\right]$;
%		%\STATE sort $|\tbeta_k(\bs_i)|, i=1,...,n,$ and keep the orders;
%		\STATE evaluate $\ell(\lambda_k;y_{-k}, \tbeta_k(\bs_i))$; 
%		\STATE determine $L_k$ and $U_k$.
%		\ENDFOR
%	\end{algorithmic}
%\end{algorithm}
%
%

\begin{figure}[!t]
	\small
	\makebox[\textwidth][c]{\includegraphics[scale=0.38,  angle =270]{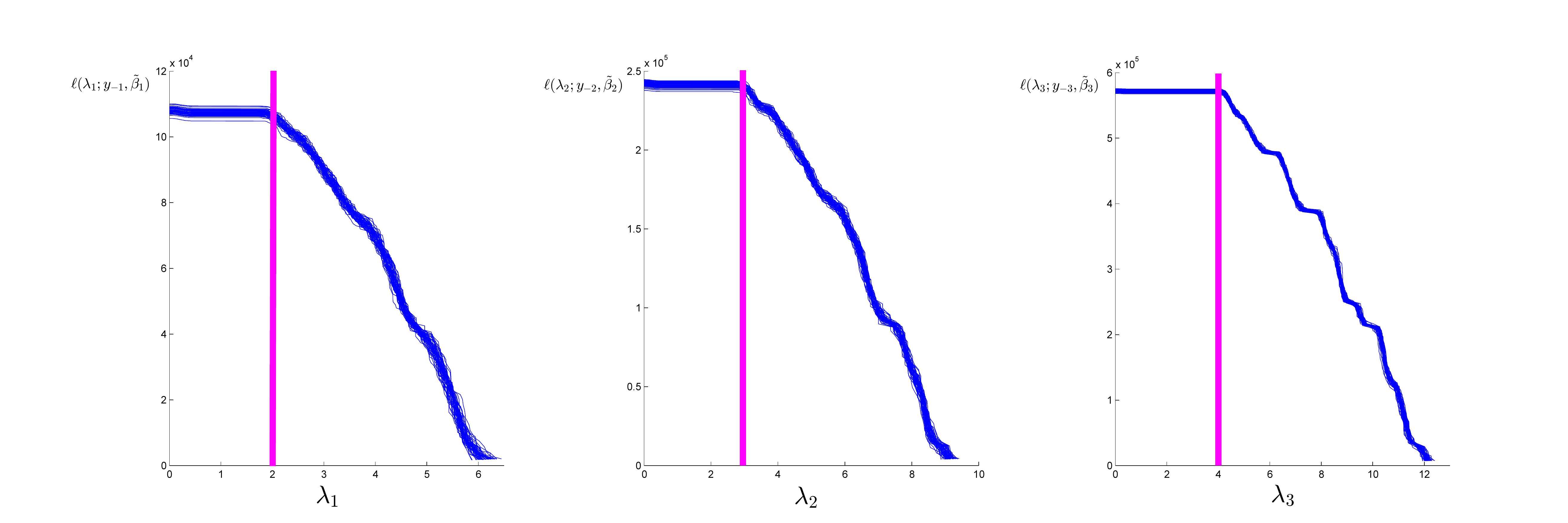}}%
	\caption{Simulated $\widehat\ell(\lambda_k)$ from $50$ synthetic datasets: ground truth ($\lambda_1=2, \lambda_2=3, \lambda_3=4$) are marked in the figures. }\label{figlmd }
\end{figure}

%Note that for $\lambda_k<|\tbeta_k(\bs)|$, $y_{i, -k}(\bs)\sim N(\tbeta_k(\bs)x_{ik}, \sigma^2)$ while for $\lambda_k\geq |\tbeta_k(\bs)|$, $y_{i, -k}(\bs)\sim N(0, \sigma^2)$, we have that
%$$\frac{1}{N}\omega(s)\pwp1 -\sum_{i=1}^N \tbeta_k(\bs)^2x_{ik}^2,  \quad\lambda_k<|\tbeta_k(\bs)|,$$
%$$\frac{1}{N}\omega(s)\pwp1 \sum_{i=1}^N \tbeta_k(\bs)^2x_{ik}^2,  \quad\lambda_k\geq|\tbeta_k(\bs)|,$$
%this can help guide the specification of prior $\pi(\lambda_k)$.

%\begin{algorithm}
%	\caption{The approximate algorithm for selection priors of $\lambda_k, k=1,...,p$}
%	\label{alg:example}
%	\begin{algorithmic}
%		\STATE {\bfseries Initial values}: The observed imaging data $y_i(\bs)$, the covariates $\bm x_i$
%		\STATE \textbf{Estimate $\tbeta_k(\bs), k=1,...,p$}
%		\STATE Find the least square solutions based on the OR factorization of the design matrix. Use the least square estimates for $\tbeta_k(\bs)$
%		\STATE \textbf{Determine $L_k$ and $U_k$}
%		\STATE  $\omega(s)=\sum_{i=1}^N \tbeta_k(\bs)x_{ik}\left[\tbeta_k(\bs)x_{ik}-2y_{i, -k}(\bs)\right]$
%	\end{algorithmic}
%\end{algorithm}

\subsection{Kernel Expansion for Massive Data Analysis}
Theoretically, the KL expansion for the GP with kernel function $\kappa(\cdot, \cdot)$ relies on solving the integral equation $\int \kappa(\bs, \bs')\varphi(\bs)d\bs = \zeta_l\varphi_(\bs')$, which might not admit analytical solutions. Empirically, the expansion is often achieved by calculating the eigenvalues and eigenvectors of the $n\times n$ correlation matrix $K_n=\{\kappa(\bs_i, \bs_{i'})\}_{1\leq i, i'\leq n}$ on a set of pre-specified locations. However, in the analysis of massive neuroimaging data that can involve a very large number ($n$ can be hundreds of thousands) of brain locations, it is computationally infeasible to perform eigen decompositions on $K_n$. To solve this issue, we introduce the modified square exponential kernel
\bg \label{mse} \kappa(\bs, \bs') = \exp\{-a\|\bs\|_2^2-a\|\bs'\|_2^2-b\|\bs-\bs'\|_2^2\}, \ a, b>0 \ed
with a relatively small value for $a$ as a numerical approximation to the square exponential kernel when dealing with massive neuroimaging data. The major benefit of this kernel function is that it has analytically tractable expansion. The detailed properties of this kernel is summarized in Proposition \ref{kernel}.
\begin{proposition}
For a specific $l\in\{1,...,\infty\}$, define series $\{k_i\}_{i=0}^{d}$, $\{l_i\}_{i=0}^d$ and $\{m_i\}_{i=1}^d$ as follows
$$k_i=\left\{k_i\in \mathbb N^0: {k_i+d-i-1\choose d-i}\leq l_i \leq {k_i+d-i\choose d-i}-1\right\},\  0\leq i \leq d-1,\ k_d =0,$$
$$l_0=l-1, \ l_i = l_{i-1}-{k_{i-1}+d-i\choose d-i+1}, \ i\geq 1 ,$$
$$ \ m_i = k_{i-1}- k_i, \ i\geq 1, $$
where $\mathbb N^0$ is the set of nonnegative integers; ${n\choose k}=0$ if $k>n$. Define $L ={m+d\choose d}=\sum_{k=0}^m {k+d-1\choose d-1} $. For $\bs=[s_1,...,s_d]^\top\in\mathbb R^d$, let $\varphi_l(\bs)$ and $\zeta_l\ ( l=1,...,\infty)$ be the eigenfunctions and eigenvalues for the modified square exponential kernel $\kappa(\bs, \bs')$ as defined in \eqref{mse}, then 
$$\zeta_l = \left(\frac{\pi}{A}\right)^d B^{k_0}, \quad \frac{\sum_{l=1}^L\zeta_l}{\sum_{l=1}^\infty\zeta_l}= (1-B)^d\sum_{k=0}^m {k+d-1\choose d-1}B^k, $$
$$ \varphi_l(\bs) =(2c)^{\frac{d}{4}}\exp\left(-c\|\bs\|_2^2\right)\prod_{i=1}^d H_{m_i}(\sqrt{2c}s_i), $$
where $c = \sqrt{a^2+2ab}$, $A= a + b + c$ and	$B = b/A$; $H_k(\cdot)$ is the $k$th ($k\in \mathbb N^0$) order normalized hermit polynomial, which is defined by 
$ H_k(x) = (2^k k! \sqrt{\pi})^{-1/2} (-1)^k \exp(x^2) \frac{d^k}{d x^k} \exp(-x^2)$. 
\label{kernel}
\end{proposition}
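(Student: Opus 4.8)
The plan is to reduce the $d$-dimensional eigenproblem to a one-dimensional one via the tensor-product structure of $\kappa$, solve the one-dimensional problem explicitly in terms of Hermite functions, and then reassemble the multivariate eigenpairs together with a purely combinatorial re-indexing. First note that, since $\|\bs\|_2^2=\sum_{i=1}^d s_i^2$ and $\|\bs-\bs'\|_2^2=\sum_{i=1}^d (s_i-s_i')^2$, the kernel in \eqref{mse} factorizes as $\kappa(\bs,\bs')=\prod_{i=1}^d\kappa_1(s_i,s_i')$ with $\kappa_1(s,s')=\exp\{-as^2-as'^2-b(s-s')^2\}$. Hence the integral operator on $L^2(\mathbb R^d)$ with kernel $\kappa$ is the $d$-fold tensor power of the operator on $L^2(\mathbb R)$ with kernel $\kappa_1$: if $\{\phi_n\}_{n\ge 0}$ and $\{\zeta^{(1)}_n\}_{n\ge 0}$ denote the eigenfunctions and eigenvalues of the latter, then the products $\varphi_{\bm k}(\bs)=\prod_{i=1}^d\phi_{k_i}(s_i)$ are eigenfunctions of the former with eigenvalues $\prod_{i=1}^d\zeta^{(1)}_{k_i}$, and since the $\phi_n$ form a complete orthonormal system in $L^2(\mathbb R)$ this list exhausts all eigenpairs.

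For the one-dimensional problem I would write $\kappa_1(s,s')=e^{-(a+b)s^2}e^{2bss'}e^{-(a+b)s'^2}$ and test the ansatz $\phi_n(s)=e^{-cs^2}H_n(\sqrt{2c}\,s)$ with $c>0$ to be determined. Substituting into $\int_{\mathbb R}\kappa_1(s,s')\phi_n(s)\,ds$ and completing the square in $s$ reduces the task to the Gaussian integral $\int_{\mathbb R}e^{-(a+b+c)(s-\mu)^2}H_n(\sqrt{2c}\,s)\,ds$ with $\mu=bs'/(a+b+c)$; evaluating it via the Hermite generating function $\sum_n \frac{t^n}{n!}H_n(x)=e^{2xt-t^2}$ shows it is a constant multiple of $H_n(\sqrt{2c}\,s')$. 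Collecting the remaining Gaussian factors in $s'$, the output is a scalar multiple of $\phi_n(s')$ exactly when $(a+b)-b^2/(a+b+c)=c$, i.e.\ $c^2=a^2+2ab$ (the matching of the Gaussian envelope and the matching of the Hermite argument both reduce to this single quadratic), which fixes $c=\sqrt{a^2+2ab}$; reading off the constant gives $\zeta^{(1)}_n\propto B^n$ with $A=a+b+c$, $B=b/A$, and the normalization $\|\phi_n\|_{L^2(\mathbb R)}^2=(2c)^{-1/2}2^n n!\sqrt\pi$ (from orthogonality of the $H_n$) produces the stated $L^2$-normalized eigenfunction $(2c)^{1/4}e^{-cs^2}H_n(\sqrt{2c}\,s)$. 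As a consistency check I would verify Mercer's identity $\sum_n\zeta^{(1)}_n\phi_n(s)\phi_n(s')=\kappa_1(s,s')$ directly from Mehler's formula, noting that the algebraic identity $A^2-b^2=2cA$ collapses every coefficient to the one appearing in $\kappa_1$; positivity of $a,b$ is what guarantees $c$ is real and $B\in(0,1)$.

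Finally I would linearly order the multi-indices $\bm k\in(\mathbb N^0)^d$ so as to match the displayed formulas: order by total degree $k_0:=\|\bm k\|_1$ first, then lexicographically within each fixed-degree block. The recursion in the statement defining $\{k_i\},\{l_i\},\{m_i\}$ is precisely the inverse of this ordering, which one checks by induction on $i$ using the hockey-stick identity $\sum_{j=0}^{k-1}\binom{j+d-1}{d-1}=\binom{k+d-1}{d}$: the inequality constraining $k_0$ locates the degree block containing the index $l$, and each subsequent step strips off one coordinate with $m_i=k_{i-1}-k_i$. Since $\zeta_l=\prod_{i=1}^d\zeta^{(1)}_{k_i}=(\pi/A)^{d/2}B^{\,\|\bm k\|_1}$ and $\|\bm k\|_1=k_0$, and since there are $\binom{k+d-1}{d-1}$ multi-indices of total degree $k$ and $\binom{m+d}{d}=\sum_{k=0}^m\binom{k+d-1}{d-1}$ of degree at most $m$, taking $L=\binom{m+d}{d}$ yields $\sum_{l=1}^L\zeta_l=(\pi/A)^{d/2}\sum_{k=0}^m\binom{k+d-1}{d-1}B^k$, while $\sum_{l=1}^\infty\zeta_l=(\pi/A)^{d/2}(1-B)^{-d}$ by the negative-binomial series $\sum_{k\ge 0}\binom{k+d-1}{d-1}B^k=(1-B)^{-d}$; dividing gives the claimed ratio $(1-B)^d\sum_{k=0}^m\binom{k+d-1}{d-1}B^k$.

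I expect the main obstacle to be the explicit one-dimensional diagonalization: recognizing the Gaussian-times-Hermite ansatz, pushing the generating-function computation through cleanly, and checking that self-consistency forces $c^2=a^2+2ab$. This is the classical fact that Hermite functions diagonalize Gaussian-type integral operators (familiar from the radial-basis-function and Gaussian-process literature), but the bookkeeping of the Gaussian prefactors and normalizing constants is delicate; the combinatorial re-indexing, by contrast, is routine though somewhat tedious.
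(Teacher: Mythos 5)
Your proposal is correct and follows exactly the route this classical result is usually established by (the paper states the proposition without giving a proof): factor the kernel across coordinates, diagonalize the one-dimensional operator with the Gaussian-times-Hermite ansatz whose self-consistency forces $c^2=a^2+2ab$ (equivalently $A^2-b^2=2cA$, which also produces the $B^n$ decay), tensorize, and convert the linear index $l$ to a multi-index ordered by total degree via the hockey-stick identity, from which the truncation ratio follows by the negative-binomial series. One discrepancy worth flagging: your correctly derived one-dimensional eigenvalue $\sqrt{\pi/A}\,B^n$ gives $\zeta_l=(\pi/A)^{d/2}B^{k_0}$, whereas the proposition prints $(\pi/A)^{d}B^{k_0}$; a direct Gaussian integral in the case $n=0$, $d=1$ gives $\int_{\mathbb R}e^{-as^2-as'^2-b(s-s')^2}e^{-cs^2}\,ds=\sqrt{\pi/A}\,e^{-cs'^2}$, so the printed exponent appears to be a typo, and the eigenfunctions and the ratio $(1-B)^d\sum_{k=0}^m\binom{k+d-1}{d-1}B^k$ are unaffected.
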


\subsection{A Markov chain Monte Carlo Algorithm}
We developed a generally efficient MCMC sampling algorithm for posterior inference about $[\{\bm{\tbeta}(\bs_i)\}_{i=1}^n,  \ \{\bm u_k\}_{k=1}^p, \ \sigma^2, \ \{\tau_k^2\}_{k=1}^p, \  \bm \lambda\mid \mathcal Y]$ based on the representation and approximation for our model with the TMGP priors \eqref{lvl1}-\eqref{lvl3}. 

Updating $\bm{\tbeta}(\bs_i), \ i=1,\ldots,n,$ is an essential step in the MCMC algorithm. The Metropolis-Hasting (M-H) algorithm is employed with a block updating scheme separately for $\{\tbeta_k(\bs_i)\}_{\bs_i\in\mR_g}$, $g=1,\ldots,G$ to facilitate efficient chain mixing. Under the scenario where region partition structure $\mR_1,\ldots,\mR_G$ is available and reliable, we can directly use this partition information. For voxel level analysis or analysis where no prior knowledge about the regional information are adopted, we first fit voxel-wise GLMs and then use certain clustering algorithms to cluster the resulting spatially varying coefficient values. This initial clustering results for the brain locations (usually centers of voxels) are used for block updating. Another M-H step in our algorithm is updating $\lambda_k, k=1,\ldots,p$, the thresholding parameters in the TMGP priors. The remaining parameters ($\bm u_k, \ k=1,\ldots, p$) and hyperparameters ($\sigma^2, \tau_k^2, \k=1,\ldots, p$) are updated by directly drawing samples from their full conditionals due to conjugacy. More details about the MCMC algorithm are available at the appendix.

\subsection{Posterior Inference on SVCFs}
With the recorded MCMC samples $\tbeta_k^{(t)}(\bs_i)$, $\bm u_k$ and $\lambda_k^{(t)}, \ t=1,\ldots,T$, we can achieve three major goals: 1) selecting neuroimaging features; 2) estimating covariate effects at the feature regions; 3) making prediction on the covariate effects at any brain location.

To select the important imaging features at the regional level, we estimate the selection probability of every region, $g=1,\ldots, G$,  according to the definition (C1)-(C3), using the MCMC samples as 
$$\widehat P(g\in I_1) = \widehat P\left(\inf_{1\leq i\leq n, \bs_i\in\mR_g}  |\tbeta_k(\bs_i)|>\lambda_k\mid\mathcal Y\right)\approx \frac{1}{T}\sum_{t=1}^T I\left[\inf_{1\leq i\leq n, \bs_i\in\mR_g}|\tbeta_k^{(t)}(\bs_i)|>\lambda_k^{(t)}\right], $$
then we estimate $\bbeta(\bs_i)$ as follows, if $\bs_i\in\mR_g$,

\bg\label{estimator}\hat\beta_k(\bs_i)=\left\{
\begin{array}{lc}
	\widehat E[\tbeta_k(\bs_i)\mid  \inf_{1\leq j\leq n, \bs_j\in\mR_g}  |\tbeta_k(\bs_j)|>\lambda_k, \mathcal Y], & \widehat P(g\in I_1)>q\\
	0, & \widehat P(g\in I_1)\leq q
\end{array}
\right., \ed
for all $k=1,...,p$, where $0.5<q<1$ is a threshold for the posterior probabilities of being nonzero at certain brain locations. We use $q=0.90$ throughout the rest of our analysis. Estimates for the posterior conditional expectations in \eqref{estimator} can be easily calculated based on the posterior samples. 

As a special case, to conduct voxel level selection (i.e., each voxel is a region with voxel centers being $\bs_1,\ldots,\bs_n$), we can simply adapt \eqref{estimator} to 
\bg\label{estimatorvxl}\hat\beta_k(\bs_i)=\left\{
\begin{array}{lc}
	\widehat E[\tbeta_k(\bs_i)\mid  |\tbeta_k(\bs_i)|>\lambda_k, \mathcal Y], & \widehat P(|\tbeta_k(\bs_i)|>\lambda_k\mid \mathcal Y)>q\\
	0, & \widehat P(|\tbeta_k(\bs_i)|>\lambda_k\mid \mathcal Y)
	\leq q
\end{array}
\right.,\ed
where $\widehat P(|\tbeta_k(\bs_i)|>\lambda_k\mid \mathcal Y)\approx \frac{1}{T}\sum_{t=1}^T I\left[|\tbeta_k^{(t)}(\bs_i)|>\lambda_k^{(t)}\right] $ can be regarded as the posterior probability of activation for each voxel. 

Making predication on $\bbeta(\cdot)$ at an arbitrary new brain location $\bs_0\in \mR$ with the posterior samples is also available. Without loss of generality, suppose $\bs_0\in \mR_g$, then
$$p(\tbeta_k(\bs_0)\mid \mathcal Y)= \int p(\tbeta_k(\bs_0)\mid \bm \tbeta_k^g, \mathcal Y)\times p( \bm \tbeta_k^g\mid \mathcal Y)d \bm\tbeta_k^g $$
where $p(\cdot)$ represents the probability densities; $\bm \tbeta_k^g$ is actually $\{\tbeta_k(\bs_i)\}_{\bs_i\in \mR_g}$; 
$[\tbeta_k(\bs_0)\mid \bm \tbeta_k^g, \mathcal Y]=[\tbeta_k(\bs_0)\mid \bm \tbeta_k^g]\sim N(\bm\varphi(\bs_0)^\top\bm u_k + k_g(\bs_0)^\top K_g^{-1}(\bm\tbeta_k^g-\bm\varphi_g\bm u_k),\  \theta^2-\theta^2k_g(\bs_0)^\top K_g^{-1}k_g(\bs_0))$ with $k_g(\bs_0)=\{\kappa(\bs_0, \bs_i)\}_{\bs_i\in\mR_g, 1\leq i\leq n}$. This indicates that we can predict $\tbeta_k(\cdot)$ at an unobserved location $\bs_0$ as

\bg\label{predict}\hat\beta_k(\bs_0)=\left\{
\begin{array}{ll}
 \frac{1}{T}\sum_{t=1}^\top [ \bm\varphi(\bs_0)^\top\bm u_k^{(t)} + k_g(\bs_0)^\top K_g^{-1}(\bm\tbeta_k^{g(t)}-\bm\varphi_g\bm u_k^{(t)})]\bm\tbeta_k^{g(t)}  , &  \mR_g \ \mbox{is selected}\\
	0, &  \mR_g \ \mbox{is not selected}
\end{array}
\right.,\ed
where $\bm\tbeta_k^{g(t)}$ is the corresponding vector for $\bm{\tbeta}_k^g$ based on the $t$th recorded MCMC sample.

\section{Numerical Examples}
\subsection{Simulation Study: Synthetic 2D Imaging Data}
For demonstration purpose, we consider a 2D case, i.e. $\mR\subset [0,1]^2$ in this simulation. three covariate functions, $\beta_1(\bs), \beta_2(\bs)$ and $\beta_3(\bs)$, are created on $\mR$ as shown in Figure \ref{fig1} (the ``true" column). We consider $n=30\times 30, \ 50\times 50$ and $60\times60$ locations within the fixed domain $\mR$. The data is generated from
\bg\label{simu1} y_j(\bs_i) = \beta_1(\bs_i)x_{j1}+\beta_2(\bs_i)x_{j2}+ \beta_3(\bs_i)x_{j3}+e_j(\bs_i),\ed
where $e_j(\bs_i)\iid N(0, \sigma^2)$ and $x_{j1}\iid N(0, 4), x_{j2}\iid \mbox{Unif}(-1, 1), x_{j3}\iid \mbox{Bernoulli}(0.5)$. We considered two sample sizes $m=50$ and $m=100$ in combination with two different noise levels $\sigma^2=2$ and $\sigma^2=4$. Given one combination $\{n,\  m,\ \sigma^2\}$, $50$ datasets are independently generated. The MCMC algorithm is implemented to fit the SVCM model and select informative features for each dataset. We choose the SE kernel for the TMGP priors. The spatial range parameter, $b$, was fixed as $30$. The priors for the thresholding parameters were fixed as Unif$(0.3, 1.25)$. All pixels are divided into four subsets for block updating according to $k$-means clustering results. The MCMC iterations are implemented $10,000$ times with the first $5,000$ samples discarded as burn-in. For each simulated dataset, the algorithm usually takes less than $1$ minute to complete the $10,000$ iterations on a standard Intel i7 quad core desktop PC.

We compare our results to the standard voxelwise GLM methods. The features estimated from the GLM method are thresholded based on the $p$-values for testing whether $\beta_k(\bs_i)=0$. We considere both the direct thresholding based on na\"{i}ve $t$-test (GLM-$t$), thresholding using the FDR adjusted $p$-values (GLM-FDR) \citep{benjamini2001control, benjamini2006adaptive} and thresholding by controlling FWER based on standard random field theory (GLM-RFT) \citep{nichols2003controlling}. Figure \ref{fig1} presents the estimated covariate effect functions from GLM-$t$, GLM-FDR, GLM-RFT and our method based on one simulated dataset from our experiments.  Figure \ref{fig1} shows that our method provides more accurate feature selection results by eliminating most false positive signals as well as maintaining high sensitivity. Our estimates also recover the features more accurately by incorporating spatial smoothness. In addition, our posterior inference procedures can provide, for each pixel in the images, the probability of presenting features that are different from zero, as shown in the last column in Figure~\ref{fig1}. For all the scenarios, we report in Table \ref{table1} the relative mean square errors with regard to the GLM estimates, which is defined as
\bg \mbox{ReMSE} = \frac{\sum_{i=1}^n\sum_{k=1}^p\left[\hat\beta_k(\bs_i)-\beta_k(\bs_i)\right]^2}{\sum_{i=1}^n\sum_{k=1}^p\left[\hat\beta_k^{*}(\bs_i)-\beta_k(\bs_i)\right]^2},\ed
where $\hat\beta_k(\bs_i)$ are the estimates from a certain method, $\hat\beta_k^*(\bs_i)$ are the voxel-wise GLM estimates without any thresholding and $\beta_k(\bs_i)$ represent the true values. We also reported the false discovery rates (FDRs) and the false negative rates (FNRs) in Table \ref{table1}, which are specified as:
\bg \mbox{FDR} = \frac{\sum_{i=1}^n\sum_{k=1}^p I[\hat\beta_k(\bs_i)\neq 0]\times I[\beta_k(\bs_i)= 0] }{\sum_{i=1}^n\sum_{k=1}^p I[\hat\beta_k(\bs_i)\neq 0]},\ed
\bg \mbox{FNR} = \frac{\sum_{i=1}^n\sum_{k=1}^p I[\hat\beta_k(\bs_i)= 0]\times I[\beta_k(\bs_i)\neq 0] }{\sum_{i=1}^n\sum_{k=1}^p I[\beta_k(\bs_i)\neq 0]}.\ed

Based on the results from Table \ref{table1}, our method performs well consistently in terms of both feature selection (small FDRs and FNRs) as well as estimation (small ReMSE), especially when the noise level is high or the number of subjects is small. Random field theory based thresholding performs well at low noise level but deteriorates notably as noise level increases due to low sensitivity. FDR control is also relatively robust but this method consistently generates false positive signals. The performance of our SVCM-TMGP method also increases as the number of spatial locations increases within a fixed domain, which agrees with our posterior consistency theory based on infill asymptotics. 

To comprehensively compare the performance of TMGP priors in feature selection with other common methods, we also conduct the receiver operating characteristic (ROC) analysis. Since our original method will automatically generate the optimal thresholding values, in this ROC analysis, we fix $\bm \lambda$ at different values and rerun the MCMC simulation to alternate the specificities. Figure \ref{roc} shows the ROC curves of our method, GLM-FDR and GLM-RFT with $\sigma^2=4 $ and $m=50$ ($n=900, 2500$ and $3600$, respectively). To quantify the differences, we calculate the area under ROC curves (Table \ref{tab:roc}) when the false positive rates are smaller than $0.1$ for all three methods.  Figure \ref{roc} and Table \ref{tab:roc} show that, under all these three settings, our method (SVCM-TMGP) achieve the best performance; as $n$ increases, our method proves to have improved performance. The random field correction based on FWER control (GLM-RFT) suffers from serious false negative problems, leading to low statistical powers. The FDR control (GLM-FDR) is a competitive alternative to our method, especially when the number of spatial locations, $n$, is relatively small. 

To demonstrate the Bayesian learning of the thresholding parameters, we present the histograms for our recorded MCMC samples along with the trace plots for the whole Markov chain from one simulated dataset. It is clear that the marginal posterior distributions of the all thresholding parameters are different from the same prior Unif$(0.3, 1,25)$.  This indicates our model can achieve Bayesian learning \citep{xie2004note} of all the thresholding parameters.

\begin{figure}[!t]
	\small
	\makebox[\textwidth][c]{\includegraphics[scale=0.38,angle=270]{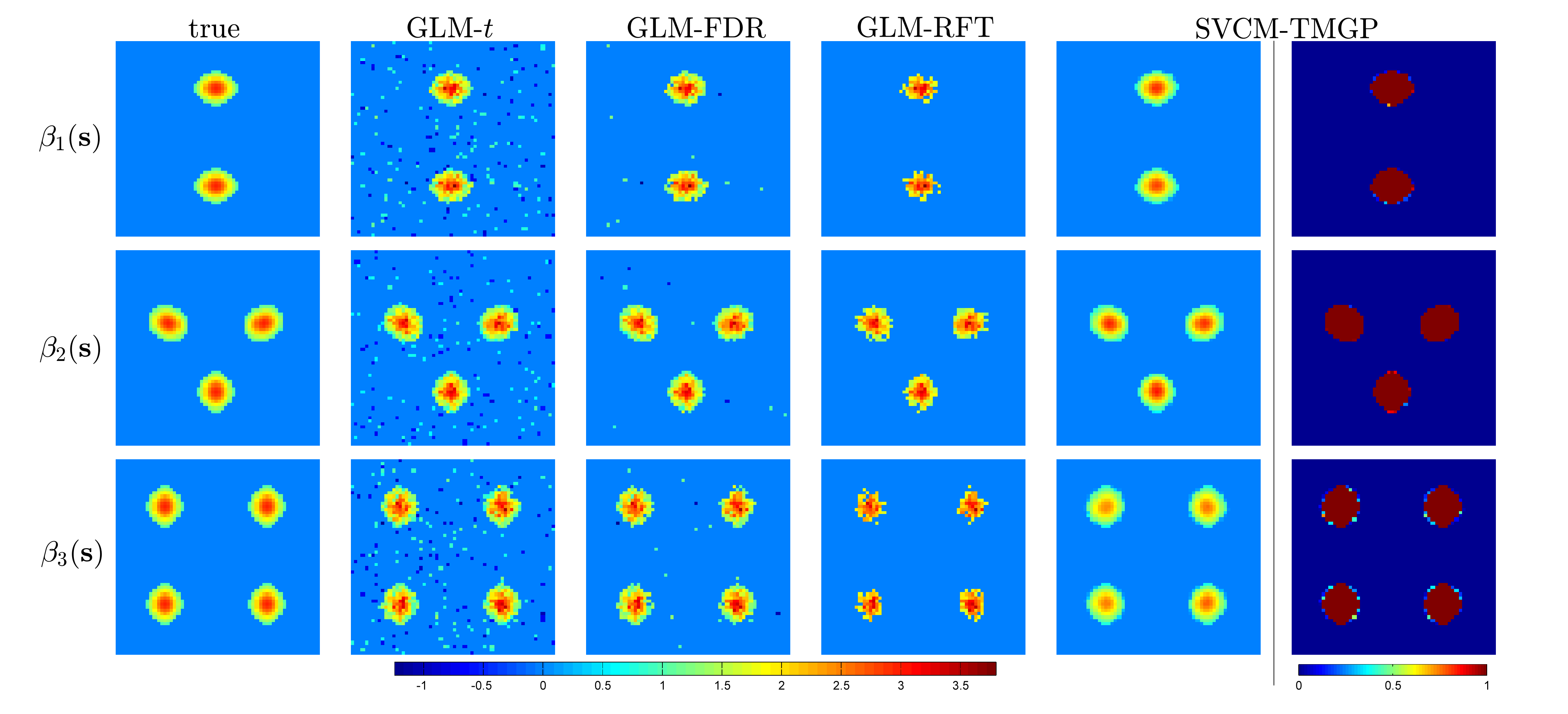}}%
	\caption{Column 1-5: true and estimated spatial covariate effects from GLM-$t$, GLM-FDR, GLM-RFT and SVCM-TMGP; Column 6: the selection probability estimated from SVCM-TMGP. The result is generated from one simulated dataset with $m=50$ subjects, $n=3600$ pixels and noise level $\sigma^2=4$.}\label{fig1}
\end{figure}

\begin{table*}[!t]
	\caption{Quantitative comparison of SVCM-TMGP to voxel-wise GLM fitting results with various thresholdings. All results reported are the means and standard errors based on $50$ independently simulated datasets.}
	\label{table1}
	\vskip 0.15in
	\begin{center}
		\begin{small}
			\footnotesize
			%\begin{sc}
			\begin{tabulary}{1.0\textwidth}{ccccccccccccccc}
				\toprule

				          &             && ReMSE&FDR(\%)&FNR(\%)   
				                        && ReMSE&FDR(\%)&FNR(\%)      
				                          \\\midrule
		         
		       n=900&  &&\multicolumn{3}{c}{$(\sigma^2=2,\ m=50)$}       &&\multicolumn{3}{c}{$(\sigma^2 = 4,\ m=50)$} \\
		        \cline{4-6}\cline{8-10}
				
				&GLM-$t$      
				&& 0.39(0.12)   &18.3(4.3)  & 0.0(0.0)
				&& 0.40(0.23)   &17.5(5.3)  & 1.5(0.2)
				\\
				&GLM-FDR      
				&& 0.24(0.07)   & 5.0(1.0)   & 0.0(0.0)
				&& 0.32(0.08)   & 5.4(1.5)   & 4.1(0.6)
				\\
				&GLM-RFT     
				&& 0.36(0.17)   & 0.0(0.0)  & 4.4(1.5)
				&& 0.80(0.29)   & 0.0(0.0)  & 22.7(5.6)
				\\      
				&SVCM-TMGP    
				&& 0.19(0.08)   & 0.9(0.4)  & 0.8(0.2)
				&& 0.12(0.02)   & 2.8(0.9)  & 0.7(0.2)\\

				  n=900&  &&\multicolumn{3}{c}{$(\sigma^2=2,\ m=100)$}         &&\multicolumn{3}{c}{$(\sigma^2 = 4,\ m=100)$} \\
				     \cline{4-6}\cline{8-10}
			   	  &GLM-$t$      
			   	  && 0.41(0.13)   &18.9(3.5)  & 0.0(0.0)   
			  	  && 0.39(0.11)   &19.2(4.9)  & 0.0(0.0)   
			  	    \\
			  	  &GLM-FDR     
			  	  && 0.25(0.06)   & 5.2(0.7)   & 0.0(0.0)       
			  	  && 0.22(0.06)   & 5.1(1.2)   & 0.0(0.0) 
			     	\\
			  	  &GLM-RFT      
			  	  && 0.17(0.10)   & 0.2(0.0)   & 0.0(0.0)       
			  	  && 0.31(0.14)   & 0.5(0.0)   & 4.6(1.3)  
			  	
			  	    \\      
			  	  &SVCM-TMGP   
			  	  && 0.20(0.08)   & 0.2(0.0)   & 0.2(0.0)       
			  	  && 0.19(0.08)   & 1.3(0.4)   & 0.6(0.2) 
			  	\\\midrule
			       
			       n=2500&  &&\multicolumn{3}{c}{$(\sigma^2=2,\ m=50)$}       &&\multicolumn{3}{c}{$(\sigma^2 = 4,\ m=50)$} \\
			            \cline{4-6}\cline{8-10}
			       
			        &GLM-$t$      
			       	&& 0.32(0.10)   &28.1(5.3)  & 0.0(0.0)
			       	&& 0.33(0.16)   &28.8(6.5)  & 0.4(0.1)
			       	\\
			       	&GLM-FDR      
			       	&& 0.13(0.02)   & 2.9(1.0)  & 0.3(0.1)
			       	&& 0.20(0.08)   & 4.3(1.6)  & 5.0(1.6)
			       	\\
			       	&GLM-RFT     
			       	&& 0.27(0.14)   & 0.3(0.0)  & 6.3(2.1)
			       	&& 0.61(0.29)   & 0.4(0.2)  & 27.9(4.5)
			       	\\      
			       	&SVCM-TMGP   
			       	&& 0.12(0.06)   & 0.2(0.0)  & 0.6(0.1)
			       	&& 0.08(0.03)   & 1.2(0.4)  & 0.9(0.4)\\
			       			
			       n=2500&  &&\multicolumn{3}{c}{$(\sigma^2=2,\ m=100)$}       &&\multicolumn{3}{c}{$(\sigma^2 = 4,\ m=100)$} \\
			        \cline{4-6}\cline{8-10}
			       
			       	&GLM-$t$      
			       	&& 0.34(0.09)   &29.4(4.9)  & 0.0(0.0)   
			     	&& 0.35(0.12)   &29.2(5.0)  & 0.0(0.2)
			     	\\
			     	&GLM-FDR      
			     	&& 0.19(0.06)   & 5.4(0.8)   & 0.0(0.0)       
			     	&& 0.14(0.03)   & 4.6(1.3)   & 0.1(0.0)
			     	
			     	\\
			     	&GLM-RFT      
			     	&& 0.10(0.05)   & 0.1(0.0)   & 0.1(0.0)       
			     	&& 0.30(0.14)   & 0.3(0.0)   & 7.0(2.3)
			     	
			     	\\      
			     	&SVCM-TMGP    
			     	&& 0.15(0.04)   & 0.1(0.0)   & 0.3(0.0)       
			     	&& 0.13(0.05)   & 0.4(0.1)   & 0.6(0.1) 
			     	\\\midrule
       
			      n=3600&  &&\multicolumn{3}{c}{$(\sigma^2=2,\ m=50)$}       &&\multicolumn{3}{c}{$(\sigma^2 = 4,\ m=50)$} \\
			           \cline{4-6}\cline{8-10}
			       
			       &GLM-$t$      
			       && 0.33(0.09)   &35.2(6.2)   & 0.0(0.0)
			       && 0.31(0.11)   &31.5(7.3)   & 0.7(0.4)
			       \\
			       &GLM-FDR      
			       && 0.12(0.03)   & 3.8(1.0)   & 0.2(0.0)
			       && 0.17(0.04)   & 3.8(1.3)   & 5.2(1.4)
			       \\
			       &GLM-RFT     
			       && 0.25(0.07)   & 0.2(0.0)   & 7.6(2.2)
			       && 0.48(0.10)   & 0.3(0.1)   & 28.7(4.6)
			       \\      
			       &SVCM-TMGP
			       && 0.10(0.03)   & 0.1(0.0)   & 0.0(0.0)
			       && 0.07(0.02)   & 1.1(0.3)   & 1.8(0.6)\\
			       
			       n=3600&  &&\multicolumn{3}{c}{$(\sigma^2=2,\ m=100)$}       &&\multicolumn{3}{c}{$(\sigma^2 = 4,\ m=100)$} \\
			       \cline{4-6}\cline{8-10}
			       
			       &GLM-$t$      
			       && 0.33(0.06)   &34.4(5.1)   & 0.0(0.0)   
			       && 0.33(0.09)   &33.5(6.0)  & 0.0(0.0)
			       \\
			       &GLM-FDR      
			       && 0.18(0.02)   & 4.3(0.5)   & 0.0(0.0)       
			       && 0.13(0.03)   & 2.6(0.7)   & 0.0(0.0)
			       
			       \\
			       &GLM-RFT      
			       && 0.09(0.06)   & 0.0(0.0)   & 0.2(0.5)       
			       && 0.24(0.07)   & 0.1(0.0)   & 6.5(1.3)
			       
			       \\      
			       &SVCM-TMGP
			       && 0.12(0.03)   & 0.0(0.0)   & 0.5(0.0)       
			       && 0.11(0.03)   & 0.2(0.0)   & 0.5(0.3) \\			    		  				  	
                      \bottomrule
			\end{tabulary}
			%\end{sc}
		\end{small}
	\end{center}
	\vskip -0.1in
\end{table*}

\begin{figure}[!t]
	\begin{center}
		\small$
		\begin{array}{ccc}
		\includegraphics[scale=0.45, angle=270]{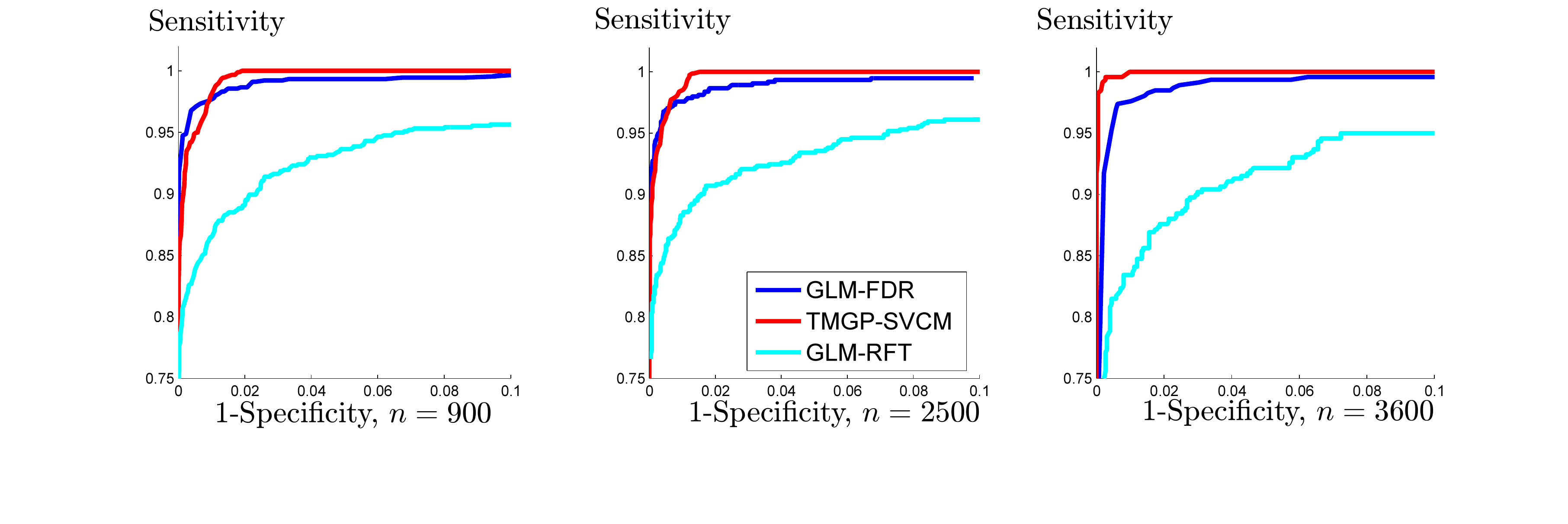}&
		\end{array}$
		\caption{ROC analysis results from $50$ replicated datasets. Three competing methods are TMGP-SVCM, GLM-FDR, GLM-RFT. The variance are all $\sigma^2=4$; subject numbers are all $m=50$; the number of spatial locations are $n=900, 2500$ and $3600$.}\label{roc}
	\end{center}
\end{figure}

\begin{table*}[!t]
\caption{Area under the ROC curves (AUC) with false positive rates are within $[0, 0.1]$}
\label{tab:roc}
\vskip 0.15in
\begin{center}
\begin{small}
\footnotesize
%\begin{sc}
\begin{tabulary}{1.0\textwidth}{lccccccccccc}
\toprule
& \multicolumn{3}{c}{AUC$\times 10^{-1}$}\\
&              $n=900$ & $n=2500$ & $n=3600$
\\
\midrule
GLM-FDR  &0.978 (0.010)&0.976 (0.014)& 0.979 (0.005)\\
GLM-RFT  &0.912 (0.049)&0.915 (0.080)& 0.911 (0.072)\\
SVCM-TMGP&0.982 (0.012)&0.989 (0.007)& 0.998 (0.001)\\
\bottomrule
\end{tabulary}
%\end{sc}
\end{small}
\end{center}
\vskip -0.1in
\end{table*}

\begin{figure}[!t]
	\begin{center}
		\small$
		\begin{array}{cc}
		\includegraphics[scale=0.38]{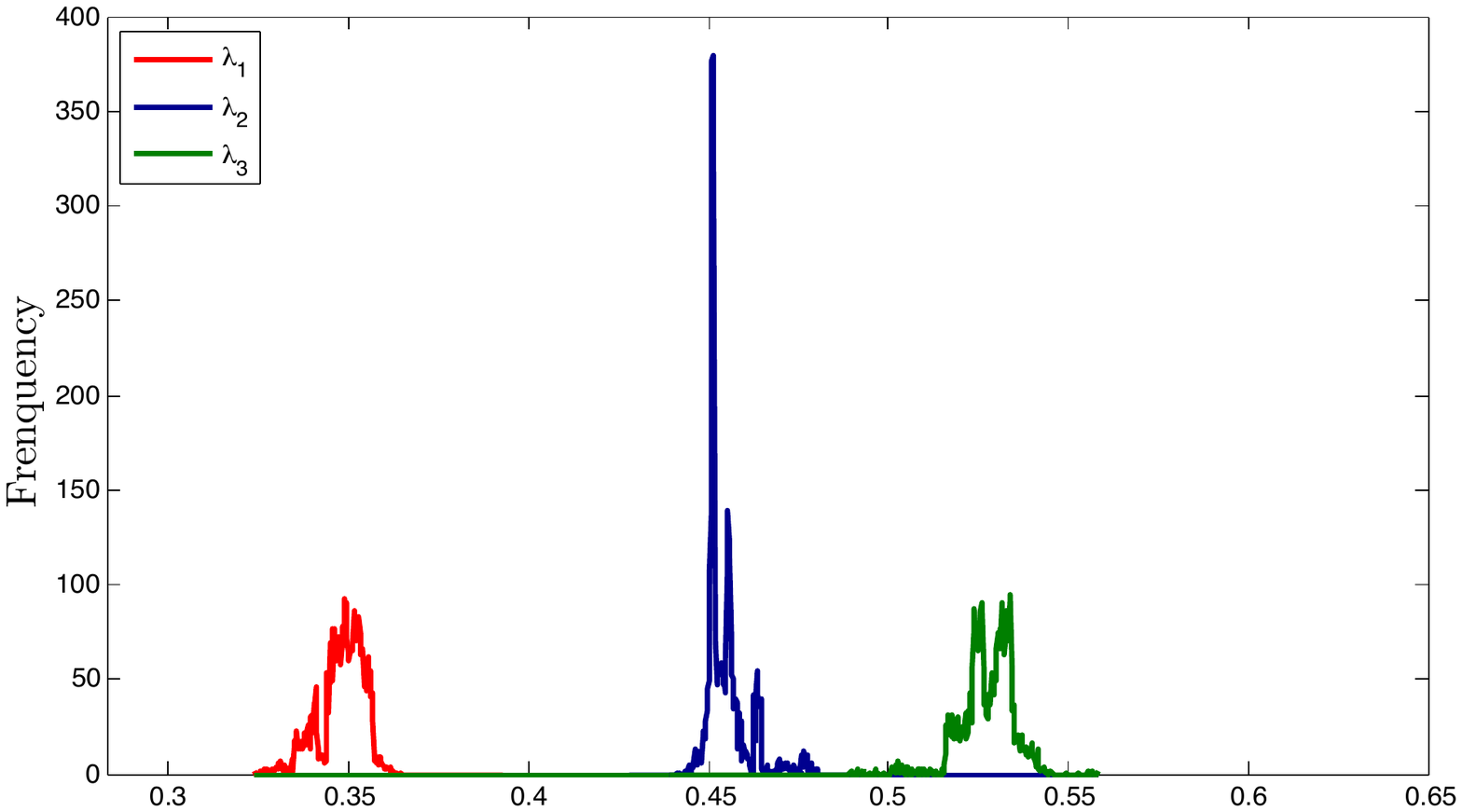}&
		\includegraphics[scale=0.35]{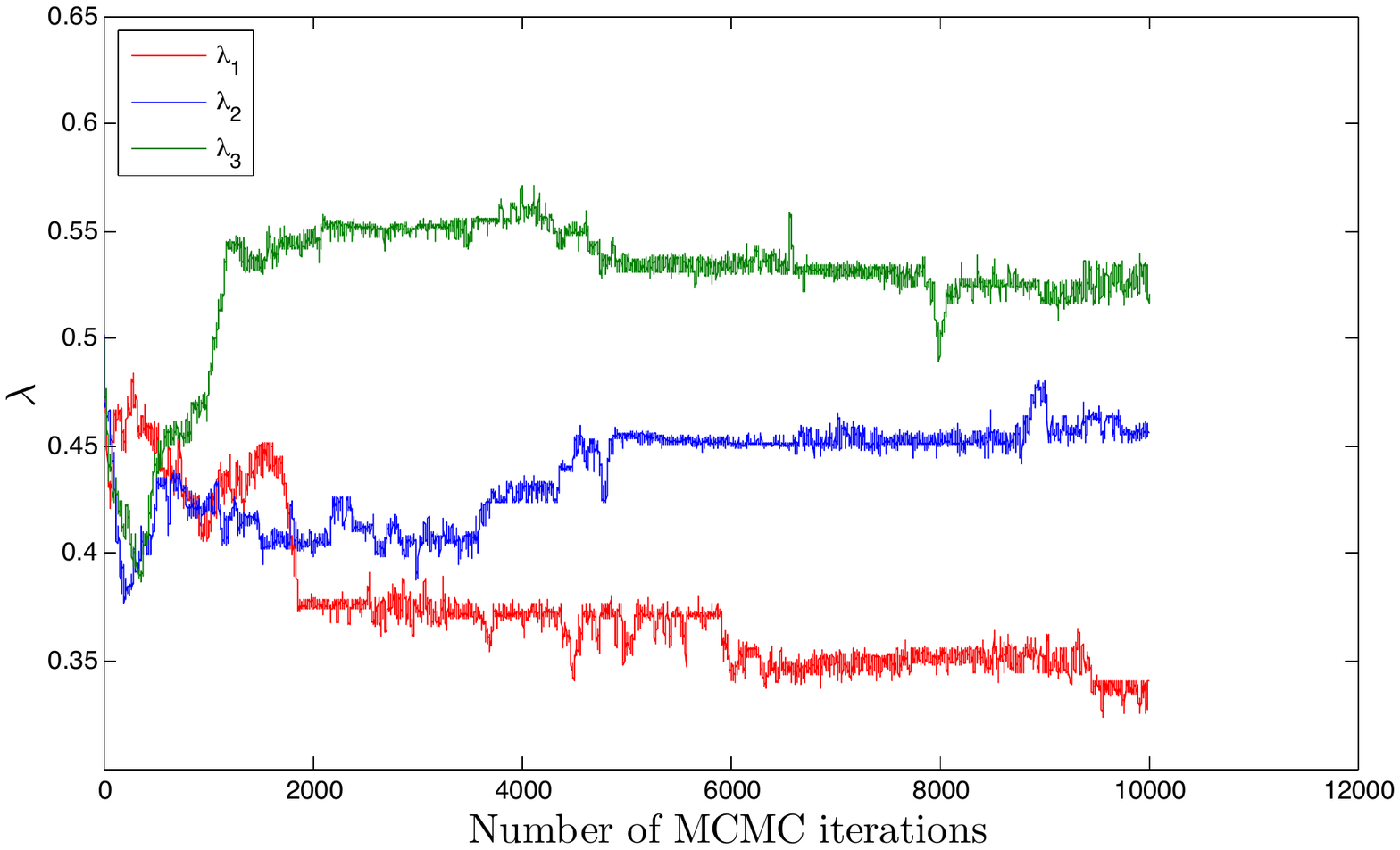}\\
		\mbox{(A) Histograms for posterior samples of}\  \bm\lambda &  \mbox{(B) Traceplot for posterior samples of }\ \bm\lambda 
		\end{array}$
		\caption{Histoplots of $\bm\lambda$ posterior samples and traceplots for the related Markov chain. Results are generated from one simulation dataset with $n=3600, \sigma^2=4, m=50$. Initial prior specifications for $\lambda_1,\lambda_2$ and $\lambda_3$ are all Unif$(0.3, 1.25)$. }
	\end{center}
\end{figure}

\subsection{Data Application: The Autism Brain Imaging Data Exchange (ABIDE)}
We apply our method to the data from ABIDE, which is a consortium collecting and sharing resting-state fMRI data from 1,112 subjects. Covariate information such as age at scan, sex, IQ, handedness and diagnostic information are also available from ABIDE studies. Among the subjects, 539 individuals have Autism spectrum disorders (ASD), which are characterized by symptoms such as social difficulties, communication deficits, stereotyped behaviors and cognitive delays. The remaining subjects are the age-matched normal controls (NC). All the fMRI images are preprocessed through slice-timing, motion correction, nuisance signal regression and temporal filtering. The resulting fMRI data, which are $91\times109\times 91$ 3D matrices, are normalized and registered to Montreal Neurological Institute (MNI) 152 stereotactic space. We aim to investigate the voxel-wise measures of latent functional architecture of the brains through fractional amplitude of low-frequency fluctuations (fALFF)\citep{zou2008improved}. fALFF is a metric reflecting the percentage of power spectrum within low-frequency domain ($0.01-0.1$Hz) which characterizes the intensity of spontaneous brain activities. We calculate the fALFF for each subject at every voxel. Since the fALFF is restricted to $(0,1)$, we perform the following monotone transformation
\bg y_j(\bs) = \log\left(\frac{f_j(\bs)}{1-f_j(\bs)}\right),\ed
where $f_j(\bs)$ represents the fALFF for subject $j=1,\ldots,1112$ at brain location $\bs$ and treat the transformed data as our outcomes.

\begin{figure}[!t]
	\small
	\makebox[\textwidth][c]{\includegraphics[scale=0.40,angle=270]{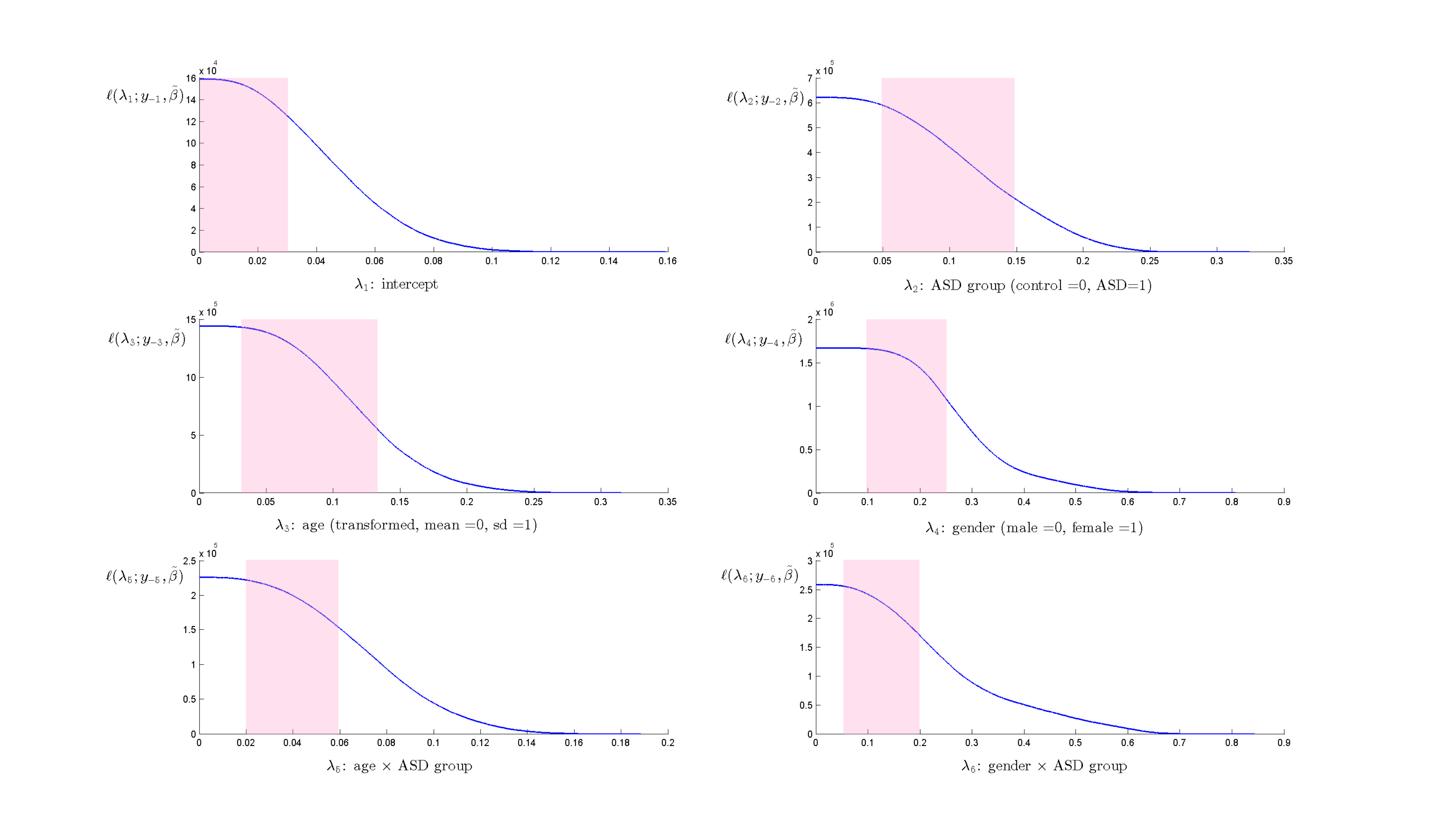}}%
	\caption{$\ell(\lambda_k)$ for specifying $\bm{\lambda}$ priors in the analysis of ABIDE data. The colored shades mark the intervals we choose as the range of the uniform priors for $\bm\lambda$.}\label{lmdselec}
\end{figure}

\begin{figure}[!t]
	\vskip 0.2in
	\begin{center}
		\small$
		\begin{array}{cc}
		\includegraphics[scale=0.35]{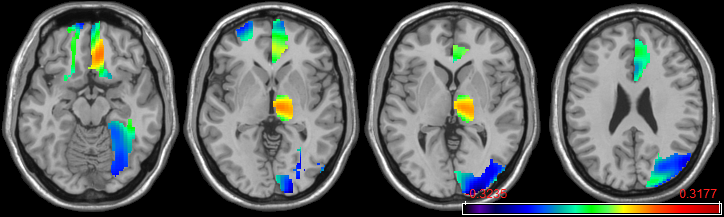}\\
		\includegraphics[scale=0.35]{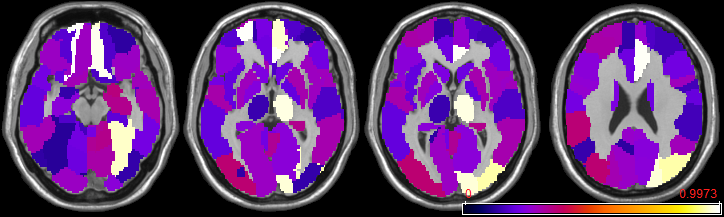}\\
		\mbox{(A) Covariate effects for the ASD group versus the control}\\ 
		\\
		\includegraphics[scale=0.35]{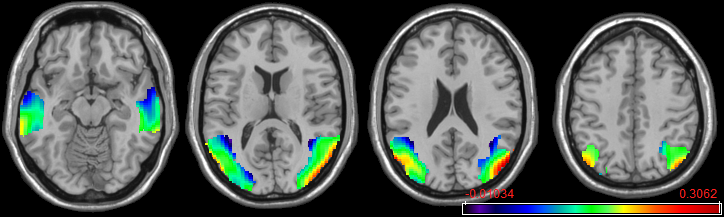}\\
		\includegraphics[scale=0.35]{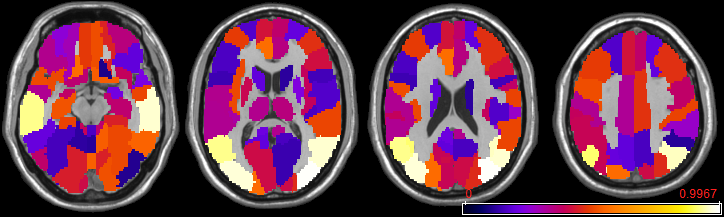}\\
		\mbox{(B) Covariate effects for the age}\\ \\
		\includegraphics[scale=0.35]{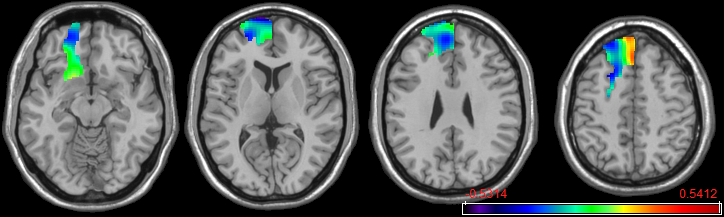}\\
		\includegraphics[scale=0.35]{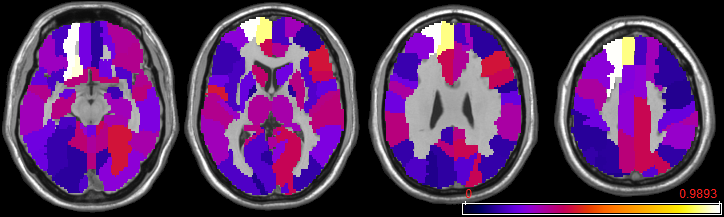}\\
		\mbox{(C) Covariate effects for group and gender interaction}
		\end{array}$
		\caption{\label{fig:res} Estimated SVCFs (top row in each subplot) and regional selection probabilities (bottom row in each subplot) based on posterior samples from our MCMC algorithm for ``ASD group", ``age" and ``ASD group$\times$ gender"}
	\end{center}
	\vskip -0.2in
\end{figure}

%\begin{figure}[!t]
%	\vskip 0.2in
%	\begin{center}	
%		\includegraphics[scale=0.5]{RegGrpGend.png}
%		\caption{}
%	\end{center}
%	\vskip -0.2in
%\end{figure}

The covariates we choose for fitting model \eqref{model} are [1, group, age, gender, group$\times$ age, group$\times$ gender]. We use all the voxels at the gray matter as the observed spatial locations $\bs_1,\ldots,\bs_n$ and all the anatomical parcellation based on MNI templates as our brain regions $\mR_1,\ldots,\mR_G$ ($n=177,743 $ and $G=116$). The imaging outcomes are centered across all subjects at each voxel. The group variable equals to $1$ for the ASD subjects; the ages are all centered and scaled with zero mean and unit variance; the gender variable equals to $1$ for female subjects. The priors for the thresholding parameters are determined through the method described in subsection \ref{hyperP}. The profiles of $\widehat\ell(\lambda_k)$ are shown in Figure \ref{lmdselec}. The priors for the six thresholding parameters are Unif$(0,0.03)$, Unif$(0.05,0.15)$, Unif$(0.03,0.13)$, Unif$(0.1, 0.25)$, Unif$(0.02,0.06)$ and Unif$(0.05,0.2)$ according to the plots. The Gaussian kernel we use is the modified square exponential kernel with $a=0.25, b=95$ (bandwidth fixed without updating). To achieve $90\% $ recovery rate of the KL expansion, we set $L= 1,140$ eigenfunctions. The MCMC algorithm runs $60,000$ iterations with  $25,000$ burn-in.

Based on our results, the ASD subjects tend to show lower fALFF outcomes at the median and superior part of the right occipital lobe, which is the visual processing centers of human brains (the visual cortex). We observe significantly higher activities at the right fusiform gyrus, which has been reported to be related to Autism in \citep{hadjikhani2004activation}. Similar findings are observed at the right median orbitofrontal cortex, the region involved in most human cognition processes, especially decision-making, indicating more spontaneous brain cognition activities among the ASD subjects. From the axial view, Figure \ref{fig:res}(A) shows the information discussed above. Some other regions that are selected includes the right thalamus and the right anterior cingulum, which we do not discuss here in detail. 

Another major findings are the age effect on the fALFFs. We identified three brain regions that show higher fALFF outcomes as the age increases: the median occipital lobe, the median temporal lobe and the angular gyrus. These regions are generally involved in brain functions such as spatial temporal cognition, language, memory, attention and visual processing. Figure \ref{fig:res}(B) shows the findings above in brain slices from the axial view. 

Although no specific regions of interest are observed for the ``gender" variable, certain brain regions demonstrate different activation patterns its interaction with the disease group. Specifically, female ASD subjects have higher fALFFs as compared with male ASD effects at the left median and superior part of the orbital gyrus but lower fALFFs at the left frontal lobe gyrus and the left rectus. Figure \ref{fig:res}(C) shows these findings in three views for the ease of demonstration. Beyond these findings, we also note that the right inferior temporal gyrus displays smaller effects among the female ASD subjects as compared with the male autistics.

\section{Discussion}
In this paper, we introduced a new family of prior, the TMGP prior, for feature selections within spatially varying coefficient functions and its applications to massive neuroimaging data analysis. We demonstrate the prior large support properties of the TMGP prior and its posterior consistency under the spatially varying coefficient models under the spatial infill asymptotics. Simulation studies show that the TMGP prior is especially useful for imaging feature selections with relatively large noise or small sample sizes. 

In most spatial statistics literatures such as \citet{diggle1998model, gelfand2003spatial, smith2002predicting}, a spatial process are decomposed into three parts: a deterministic trend process or, in other words, mean process; a zero-mean variance process with continuous sample path and a zero-mean white noise process, i.e., the nugget effect. Under this general framework, \citet{zhu2014spatially} considered a more complex model as compared with our model \eqref{model}, which could be expressed using our notations as
\bg \label{modelzhu} y_j(\bs) = \bx_{j}^\rT\bbeta(\bs) + \eta_j(\bs)+ e_j(\bs), \ed
for all subjects $j=1,...,m$. They estimated the additional term $\eta_j(\bs)$ for every subject through standard local linear regression techniques, which do not require a pre-specified spatial covariance structure $\bm\Sigma_\eta$, or equivalently, its Karhunen-Lo\`{e}ve basis. Since in this paper, our primary focus is on the GLM framework for imaging data, we did not apply our TMGP prior under the setting of model \eqref{modelzhu}. To enable a similar analysis using the TMGP prior for $\bbeta(\bs)$ in \eqref{modelzhu} under the Bayesian framework, there are four major tasks. First, a proper prior specification for $\eta_j(\bs)$ needs to be introduced which should be flexible enough to capture various spatially smooth dynamics. Second, a computationally efficient algorithm for estimating the additional parameters brought by $\eta_j(\bs)$ is required since this set of parameters scale with the number of subjects. Third, since for each subject, we will have a subject specific random effect term, we need to carefully monitor the model fitting procedures to avoid potential over-fitting issues. Fourth, the theoretical analysis for posterior consistency in Theorem \ref{poscons} needs to be adapted to the more challenging model structure.

In addition to applying the TMGP prior to model \eqref{modelzhu}, our study can be extended to some other directions. With a focus on neuroimaging studies using model \eqref{model}, we can extend the prior constructions to enable self-guided parcellation while conducting feature selection tasks. This can help relax the region based sparsity assumptions for the SVCFs. We can also explore the Bayesian asymptotic theories when the number of brain region partitions diverges. With a focus on general Bayesian analysis, we can extend TMGP for modeling high-dimensional multivariate binary processes or selecting features for scalar-on-image models such as neuropsychiatric disease predictions. 

\begin{appendices}
	
\section{Proof of Theorem \ref{largsup}}
	Based on the assumptions for $\beta^0(\bs)$, let $I_1=I_1[\beta^0(\bs)]$ and $I_0=\{1,...,G\}\backslash I_1$, we have that
	\begin{align}
	\label{ineq1}
	\Pi\big(\|\beta(\bs)-\beta^0(\bs)\|_\infty&<\varepsilon\big)\geq\nonumber\\ &\Pi\left(\sup_{\bs\in\cup_{g\in I_1}\mR_g}|\tbeta(\bs)-\beta^0(\bs)|<\varepsilon, \inf_{\bs\in\cup_{g\in I_1}\mR_g}|\tbeta(\bs)|>\lambda, \sup_{\bs\in\cup_{g\in I_0}\mR_g}|\tbeta(\bs)|\leq \lambda\right).
	\end{align}

	Without loss of generality, we only consider $0<\varepsilon<\lambda_0-\lambda$. Note that for all $\bs\in\cup_{g\in I_1}\mR_g$, $|\tbeta(\bs)-\beta^0(\bs)|<\varepsilon$ and $|\beta^0(\bs)|\geq \lambda_0$ implies that $|\tbeta(\bs)|\geq \lambda_0-\varepsilon>\lambda$, then \eqref{ineq1} is equivalent to
	$$ \Pi\left(\|\beta(\bs)-\beta^0(\bs)\|_\infty<\varepsilon\right)\geq \Pi\left(\sup_{\bs\in\cup_{g\in I_1}\mR_g}|\tbeta(\bs)-\beta^0(\bs)|<\varepsilon, \sup_{\bs\in\cup_{g\in I_0}\mR_g}|\tbeta(\bs)|\leq \lambda\right).$$
	
	Let $\phi_l(\bs)$ and $\zeta_l, l=1,...,\infty,$ be the normalized eigenfunctions and eigenvalues of the kernel function $\kappa(\cdot, \cdot)$, then the Karhunen-Lo\`{e}ve expansions of $\gamma(\bs)$ and $\epsilon(\bs)$ can be expressed as
	$\gamma(\bs)=\sum_{l=1}^\infty u_l\phi_l(\bs)$, $\forall \bs\in \mR,$ and $\epsilon(\bs) =\sum_{l=1}^\infty v_{lg}\phi_l(\bs)$, $\bs\in \mR_g, g=1,...,G$, such that $u_l\iid N(0, \zeta_l\tau^2)$, $v_{lg}\iid N(0, \zeta_l\theta^2)$ and $u_l, v_{lg}$ are all independent. Since the RKHS of $\kappa(\cdot, \cdot)$ is $\mC(\mR)$, for $\bs$ within any partition $\mR_g$,  $\beta^0(\bs)$ can be represented as $\sum_{l=1}^\infty w_{lg}\phi_l(\bs),$ where $\sum_{l=1}^\infty w_{lg}^2<\infty$.
	
	For $\bs\in \mR_g$ with $g\in I_1$
	\bg \label{inequ2} \sup_{\bs\in\mR_g}|\tbeta(\bs)-\beta^0(\bs)|\leq \sup_{\bs\in\mR_g}|\tbeta_{L,g}(\bs)-\beta^0_{L,g}(\bs)|+\sup_{\bs\in\mR_g}|\tbeta_{L,g}^*(\bs)|+\sup_{\bs\in\mR_g}|\beta_{L,g}^{0*}(\bs)|,\ed
	where $\tbeta_{L,g}(\bs)=\sum_{l=1}^{L}(u_l+v_{lg})\phi_l(\bs),\ \beta^0_{L,g}(\bs)=\sum_{l=1}^{L} w_{lg}\phi_l(\bs),\ \tbeta_{L,g}^*(\bs)=\tbeta(\bs)-\tbeta_{L,g}(\bs)$ and $\beta_{L,g}^{0*}(\bs)=\beta^0(\bs)-\beta^0_{L,g}(\bs)$.
	Since the RKHS of $\kappa(\cdot, \cdot)$ is $\mC(\mR)$, $\tbeta(\bs)$ is uniformly continuous on $\mR_g$ with probability $1$, then by Theorem 3.1.2 of \citet{adler2009random}, $\lim_{L\to\infty}\sup_{\bs\in\mR_g}|\tbeta_{L,g}^*(\bs)|=0$ with probability $1$. By the uniform convergence of the series $\sum_{l=1}^L w_{lg}\phi_l(\bs)$ to $\beta^0(\bs)$ as $L\to\infty$ on $\mR_g$, $\lim_{L\to\infty}\sup_{\bs\in\mR_g}|\beta_{L,g}^{0*}(\bs)|=0$. Then we can find a finite integer $L_g$ such that for all $L\geq L_g$,  $\sup_{\bs\in\mR_g}|\tbeta_{L,g}^*(\bs)|<\frac{\varepsilon}{3}$ with probability $1$ and $\sup_{\bs\in\mR_g}|\beta_{L,g}^{0*}(\bs)|<\frac{\varepsilon}{3}$. Since $\phi_l(\bs), l=1,...,L_g$ are all continuous functions in on $\mR$, we have that $\max_{1\leq l\leq L_g}\|\phi_l(\bs)\|_\infty<M_{\phi, L_g}$ where $M_{\phi, L_g}$ is a certain constant. Let $|u_l+v_{lg}-w_{lg}|<\frac{\varepsilon}{3L_g M_{\phi, L_g}}$ for all $l=1,...,L_g$ and consider $L=L_g$ in \eqref{inequ2}, we have that $\sup_{\bs\in\mR_g}|\tbeta_{L,g}(\bs)-\beta^0_{L,g}(\bs)|<\frac{\varepsilon}{3}$. Thus, the condition $|u_l+v_{lg}-w_{lg}|<\frac{\varepsilon}{3L_g M_{\phi, L_g}}, l=1,...,L_g$ can guarantee that $ \sup_{\bs\in\mR_g}|\tbeta(\bs)-\beta^0(\bs)|<\varepsilon$ with probability $1$ for $g \in I_1$.
	
	For $\bs\in \mR_g$ with $g\in I_0$, similar to \eqref{inequ2} and the definitions above, we have
	\bg \label{inequ3} \sup_{\bs\in\mR_g}|\tbeta(\bs)|\leq \sup_{\bs\in\mR_g}|\tbeta_{L,g}(\bs)|+\sup_{\bs\in\mR_g}|\tbeta_{L,g}^*(\bs)|.\ed
	Similarly, we can find $L_g$ and $M_{\phi, L_g}$ such that $|u_l+v_{lg}|\leq \frac{\lambda}{2L_g M_{\phi, L_g}}, l=1,...,L_g$ guarantees that $\sup_{\bs\in\mR_g}|\tbeta(\bs)|\leq\lambda$ with probability $1$ for all $g\in I_0$. 
	
	Then we have
	\begin{eqnarray}\label{sets} \Pi\left(\|\beta(\bs)-\beta^0(\bs)\|_\infty<\varepsilon\right)\geq & \Pi\bigg(\left\{|u_l+v_{lg}-w_{lg}|<\frac{\varepsilon}{3L_g M_{\phi, L_g}} : l=1,...,L_g, g\in I_1\right\}\cup\nonumber \\
	&\quad\left\{ |u_l+v_{lg}|\leq \frac{\lambda}{2L_gM_{\phi, L_g}}: l=1,...,L_g, g\in I_0\right\} \bigg)>0,
	\end{eqnarray}
	due to the positive measures assigned on arbitrary nonempty sets by the $\left(\sum_{g=1}^G L_g + L_\max \right)$-dimensional multivariate Gaussian distribution: $\left(u_1,...,u_{L_\max}, v_{11},...,v_{L_11},..., v_{1G},...,v_{L_GG}\right)$, where $L_\max= \max_{g=1,...,G} L_g $.

\section{Proof of Theorem \ref{poscons}}

\subsection{KL neighborhood conditions for noniid outcomes}
\begin{lemma}\label{klnbr}
	Consider our observed data $\by(\bs_i)\in \mathbb{R}^m, \ \by(\bs_i)\sim f_{i, \bbeta}(\by) $ where $$f_{i, \bbeta}(\by)=(2\pi\sigma^2)^{-m/2}\exp\left\{-\frac{1}{2\sigma^2}\|\by-\bX\bbeta(\bs_i)\|_2^2\right\}$$ for some constant $\sigma^2>0$. Define 
	$$D_i(\bbeta_0, \bbeta) = \log\frac{f_{i, \bbeta_0}}{f_{i, \bbeta}},
	\quad K_i(\bbeta_0, \bbeta)= E_{f_{i, \bbeta_0}} [D_i(\bbeta_0, \bbeta)],
	\quad V_i(\bbeta_0, \bbeta) = Var_{f_{i, \bbeta_0}} [D_i(\bbeta_0, \bbeta)].$$
	If we a assign an independent TMGP prior for each dimension of $\bbeta$, i.e., $$\beta_k(\bs)\sim \TMGP(\tau^2_k, \theta^2_k, \lambda_k, \kappa(\cdot, \cdot)),$$ then we have that $\exists B, \Pi(B)>0$ such that 
	$$\liminf_{n\to\infty} \Pi\left(\left\{\bbeta\in B: \frac{1}{n}\sum_{i=1}^{n}K_i(\bbeta^0, \bbeta) <\varepsilon\right\} \right)>0, $$
	$$\frac{1}{n^2}\sum_{i=1}^n V_i(\bbeta^0, \bbeta)\to 0, \forall \bbeta\in B.$$
	
\end{lemma}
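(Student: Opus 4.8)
The plan is to reduce the two displayed conditions to an explicit formula for the Gaussian Kullback--Leibler divergence, bound that divergence uniformly in $i$ by a sup-norm distance between coefficient functions, and then invoke the large support property of Theorem \ref{largsup} to exhibit the required set $B$ with positive prior mass.

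First I would write $\by(\bs_i) = \bX\bbeta^0(\bs_i) + \be(\bs_i)$ with $\be(\bs_i) \sim N(\bm 0, \sigma^2\bm I_m)$ and, setting $\bm{\delta}_i := \bX\{\bbeta^0(\bs_i) - \bbeta(\bs_i)\}$, expand the log-ratio of the two Gaussian densities (which share the covariance $\sigma^2\bm I_m$) to obtain
\begin{equation*}
D_i(\bbeta^0,\bbeta) = \frac{1}{2\sigma^2}\|\bm{\delta}_i\|_2^2 + \frac{1}{\sigma^2}\bm{\delta}_i^{\rT}\be(\bs_i),
\end{equation*}
whence $K_i(\bbeta^0,\bbeta) = \frac{1}{2\sigma^2}\|\bm{\delta}_i\|_2^2$ and $V_i(\bbeta^0,\bbeta) = \frac{1}{\sigma^2}\|\bm{\delta}_i\|_2^2 = 2K_i(\bbeta^0,\bbeta)$. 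Condition (X1) then gives $\|\bm{\delta}_i\|_2^2 \le m d_{\max}\|\bbeta^0(\bs_i)-\bbeta(\bs_i)\|_2^2 \le m d_{\max}\sum_{k=1}^p\|\beta^0_k-\beta_k\|_\infty^2$, a bound uniform in $i$.

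Next I would fix $\eta>0$ with $\eta^2 < 2\sigma^2\varepsilon/(m d_{\max} p)$ and take $B=\{\bbeta:\ \|\beta_k(\cdot)-\beta^0_k(\cdot)\|_\infty<\eta,\ k=1,\dots,p\}$. On $B$ the preceding bound yields $\frac{1}{n}\sum_{i=1}^n K_i(\bbeta^0,\bbeta) \le m d_{\max} p\,\eta^2/(2\sigma^2) < \varepsilon$ for every $n$, so $\{\bbeta\in B:\ \frac{1}{n}\sum_i K_i<\varepsilon\}=B$ for all $n$ and the first assertion collapses to $\Pi(B)>0$. This last fact is exactly where Theorem \ref{largsup} enters: under the hypotheses of Theorem \ref{poscons} each $\beta^0_k\in\mP$ with $0<\lambda_k<\Lambda[\beta^0_k]$, $0<\tau_k^2,\theta_k^2<\infty$ and $\kappa$ satisfying (K1), hence the marginal prior of $\beta_k$ charges $\{\|\beta_k-\beta^0_k\|_\infty<\eta\}$ with positive mass; independence of the $p$ coordinate priors then gives $\Pi(B)=\prod_{k=1}^p\Pi_k(\|\beta_k-\beta^0_k\|_\infty<\eta)>0$. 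For the variance condition, on $B$ we have $V_i=2K_i\le m d_{\max}p\,\eta^2/\sigma^2$ uniformly in $i$, so $\frac{1}{n^2}\sum_{i=1}^n V_i(\bbeta^0,\bbeta)\le m d_{\max}p\,\eta^2/(n\sigma^2)\to 0$ for every $\bbeta\in B$.

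The individual steps are short; the only genuine subtlety is that the outcomes are independent but \emph{not} identically distributed, so the Schwartz-type condition must be verified in its averaged form ($\frac{1}{n}\sum_i K_i$ and $\frac{1}{n^2}\sum_i V_i$) rather than the familiar i.i.d.\ version --- this is what makes the uniform-in-$i$ bound via (X1) the crucial bookkeeping device --- and one must check that the neighborhood $B$, hence its prior mass, can be chosen independently of $n$. The substantive input is entirely the large support statement of Theorem \ref{largsup}.
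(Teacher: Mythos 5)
Your proposal is correct and follows essentially the same route as the paper's proof: you compute $K_i(\bbeta^0,\bbeta)=\frac{1}{2\sigma^2}\|\bX\{\bbeta^0(\bs_i)-\bbeta(\bs_i)\}\|_2^2$ and $V_i=2K_i$, bound them uniformly in $i$ via (X1) and the coordinatewise sup-norms, and take $B$ to be a product of sup-norm balls of radius $\eta$ with $\eta^2<2\sigma^2\varepsilon/(m d_{\max} p)$, whose positive prior mass follows from Theorem \ref{largsup} and independence of the $p$ coordinate priors, exactly as in the paper. The only cosmetic difference is that you expand $D_i$ in terms of the noise $\be(\bs_i)$ rather than taking moments of the paper's expression in $\by$; the substance is identical.
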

\begin{proof}
	It is trivial to have that $$D_i(\bbeta_0, \bbeta) = \frac{1}{\sigma^2} (\bbeta^0(\bs_i)-\bbeta(\bs_i))^\top\left[\bX^\top\by -\frac{1}{2}\bX\bX^\top(\bbeta^0(\bs_i)+\bbeta(\bs_i))\right],$$
	then since $E_{f_{i, \bbeta_0}}[\by]=\bX\bbeta(\bs_i), Var_{f_{i, \bbeta_0}}[\by]=\sigma^2\bm I_m $, 
	$$K_i(\bbeta_0, \bbeta) =\frac{1}{2\sigma^2}(\bbeta^0(\bs_i)-\bbeta(\bs_i))^\top\bX^\top\bX(\bbeta^0(\bs_i)-\bbeta(\bs_i))\leq \frac{m d_\max }{2\sigma^2}\|\bbeta^0(\bs_i)-\bbeta(\bs_i)\|_2^2,$$
	$$V_i(\bbeta_0, \bbeta)=\frac{1}{\sigma^2}(\bbeta^0(\bs_i)-\bbeta(\bs_i))^\top\bX^\top\bX(\bbeta^0(\bs_i)-\bbeta(\bs_i))\leq \frac{m d_\max }{\sigma^2}\|\bbeta^0(\bs_i)-\bbeta(\bs_i)\|_2^2.$$
	Now consider $B_k = \left\{\beta_k(\bs): \|\beta_k(\bs)-\beta^0(\bs)\|_\infty <\sqrt{\frac{2\sigma^2\varepsilon}{mp d_\max}}\right\}$ and let $B=\cap_{k=1}^p B_k$. Since the priors for $\beta_k(\bs), k=1,...,p$ are independent and $\Pi(B_k)>0$ due to Theorem \eqref{largsup}, $\Pi(B)=\prod_{k=1}^p \Pi(B_k)>0$.
	
	For all $\bbeta\in B$, $K_i(\bbeta_0, \bbeta)\leq \frac{m d_\max }{2\sigma^2}\sum_{k=1}^p\|\beta^0_k(\bs)-\beta_k(\bs)\|_\infty^2 <\varepsilon$ and similarly $V_i(\bbeta_0, \bbeta)<2\varepsilon$. Then $\liminf_{n\to\infty} \Pi\left(\left\{\bbeta\in B: \frac{1}{n}\sum_{i=1}^{n}K_i(\bbeta^0, \bbeta) <\varepsilon\right\} \right)=\Pi(B)>0$ and $\frac{1}{n^2}\sum_{i=1}^n V_i(\bbeta^0, \bbeta)<\frac{2\varepsilon}{n}\to 0$ for all $\bbeta\in B$.
\end{proof}

\subsection{Sieve constructions}

Define the set of functions 
$$\mP_n=\left\{\beta(\bs)\in\mP: \|\beta(\bs)\|_\infty <\sqrt{n}, \ \sup_{\bs\in \mR_g}\left|D^{\alpha}\beta(\bs)\right|<\sqrt{n}, \ g\in I_1[\beta(\bs)],\ 1\leq \|\alpha\|_1 \leq \rho \right\},$$
as our sieve construction.

\begin{lemma}\label{covering}
	If $G<\infty$, the $\varepsilon$-covering number under the sup-norm for $\mP_n$ satisfies $$\log N(\varepsilon, \mP_n, \|\cdot\|_\infty) <Cn^{\frac{d}{2\rho}}\varepsilon^{-d},$$
	for some finite constant $C$.
\end{lemma}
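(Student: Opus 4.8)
The plan is to build the covering of $\mP_n$ region by region. Fix $\beta(\bs)\in\mP_n$. Its support index set $I_1[\beta(\bs)]\subset\{1,\ldots,G\}$ can take at most $2^G$ values; since $G<\infty$, this contributes only a bounded multiplicative constant to the covering number (equivalently an additive $O(1)$ term after taking logarithms), so it suffices to bound the covering number of the subclass with a fixed support pattern and fixed nonzero index set $I_1$. On each $\mR_g$ with $g\in I_1$, the restriction $\beta|_{\overline\mR_g}$ lies in the Hölder-type ball of $\mC^\rho(\overline\mR_g)$ functions whose sup-norm and all partial derivatives up to order $\rho$ are bounded by $\sqrt n$. (I would note that since $\rho=[d/2]+1$ is the integer smoothness level, I either invoke the classical entropy bound for balls in $\mC^\rho$ directly, or — to be safe about the non-integer refinement — observe that on the compact set $\overline\mR_g$ a uniform bound of $\sqrt n$ on the $\rho$-th derivatives gives a Lipschitz bound of order $\sqrt n$ on the $(\rho-1)$-st derivatives, which is exactly the hypothesis of the standard Kolmogorov–Tikhomirov entropy estimate.)

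The key step is then the classical metric entropy bound for smoothness classes: for the unit ball of $\mC^\rho$ on a bounded subset of $\mathbb{R}^d$, the $\delta$-covering number in sup-norm satisfies $\log N(\delta,\cdot,\|\cdot\|_\infty)\le A\,\delta^{-d/\rho}$ for a constant $A$ depending only on $d$, $\rho$ and the geometry of the region (see e.g. Kolmogorov–Tikhomirov, or van der Vaart–Wellner Theorem~2.7.1). Rescaling: our functions have $\mC^\rho$-norm bounded by $\sqrt n$ rather than $1$, so dividing by $\sqrt n$ and applying the standard bound at resolution $\delta = \varepsilon/\sqrt n$ yields
$$\log N(\varepsilon,\{\beta|_{\overline\mR_g}\},\|\cdot\|_\infty)\le A\left(\frac{\varepsilon}{\sqrt n}\right)^{-d/\rho}=A\,n^{\frac{d}{2\rho}}\,\varepsilon^{-d/\rho}.$$
Here I would want to double-check the exponent the paper is claiming: the statement writes $\varepsilon^{-d}$, whereas the scaling argument naturally produces $\varepsilon^{-d/\rho}$; since $\rho\ge 1$ and we only ever apply this with small fixed $\varepsilon$, $\varepsilon^{-d/\rho}\le\varepsilon^{-d}$, so the stated (weaker) bound follows a fortiori — I would simply absorb the discrepancy into the inequality and keep the $n^{d/(2\rho)}$ factor, which is the term that actually matters for the sieve argument later.

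Finally I would assemble the global cover: a sup-norm $\varepsilon$-net for $\beta$ on all of $\mR$ is obtained by taking, for each $g\in I_1$, an $\varepsilon$-net of the $\mR_g$-restrictions (and the zero function on $\mR_g$ for $g\in I_0$), and forming the product over the at most $G$ active regions. Hence
$$N(\varepsilon,\mP_n,\|\cdot\|_\infty)\le 2^G\prod_{g\in I_1}N_g(\varepsilon)\le 2^G\left(\exp\!\big(A\,n^{\frac{d}{2\rho}}\varepsilon^{-d}\big)\right)^{G},$$
so $\log N(\varepsilon,\mP_n,\|\cdot\|_\infty)\le G\log 2 + AG\,n^{\frac{d}{2\rho}}\varepsilon^{-d}\le C\,n^{\frac{d}{2\rho}}\varepsilon^{-d}$ for a finite constant $C$ (absorbing the $G\log 2$ term, valid for $\varepsilon$ bounded above and $n\ge 1$). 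The main obstacle I anticipate is purely bookkeeping: handling the non-integer refinement $\rho=[d/2]+1$ cleanly when citing the classical entropy estimate, and making sure the derivative bound in the definition of $\mP_n$ is exactly strong enough to feed that estimate — which is why I would reduce to a Lipschitz bound on the $(\rho-1)$-st derivatives before invoking the standard result, rather than quoting a Hölder-ball bound whose constants depend on a fractional exponent the paper never introduces.
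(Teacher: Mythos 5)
Your proposal is correct and follows essentially the same route as the paper: restrict to each region $\mR_g$, bound the entropy of the class of functions with derivatives up to order $\rho$ bounded by $\sqrt n$ via the classical smoothness-class estimate (van der Vaart and Wellner, Theorem 2.7.1), and multiply over the finitely many regions. Your side remarks are also apt --- the natural scaling indeed gives $\varepsilon^{-d/\rho}$ rather than $\varepsilon^{-d}$ (the paper's stated bound is the weaker one for small $\varepsilon$, which is all the sieve argument needs), and reducing the order-$\rho$ derivative bound to a Lipschitz bound on the $(\rho-1)$-st derivatives is exactly how the cited theorem is meant to be fed.
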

\begin{proof}
	Define $$\mP_{n,g}=\left\{\beta(\bs)\in\mC^\rho(\overline\mR_g): \sup_{\bs\in \mR_g}|D^\alpha\beta(\bs)| <\sqrt{n}, \ 0\leq \|\alpha\|_1\leq \rho \right\},$$ for all $g=1,...,G$. Theorem 2.7.1 of \citet{van1996weak} implies that 
	$$\log N(\varepsilon, \mP_{n,g} , \|\cdot\|_\infty)\leq C_gn^{\frac{d}{2\rho}}\varepsilon^{-d},$$
	for some constants $C_g<\infty$.  Then by the definition of $\mP_n$, we have that
	$$N(\varepsilon, \mP_n, \|\cdot\|_\infty) \leq \prod_{g=1}^G N(\varepsilon, \mP_{n,g} , \|\cdot\|_\infty)\leq \exp\left\{Cn^{\frac{d}{2\rho}}\varepsilon^{-d}\right\},$$ 
	where $C=\sum_{g=1}^G C_g<\infty$.
\end{proof}

\begin{lemma}\label{seivemr}
	Consider the TMGP prior for $\beta(\bs)$ with kernel function satisfying condition (K1)(K2), then $\Pi(\mP\cap\mP_n^c)\leq De^{-bn}$ for some constant $D, b>0$.
\end{lemma}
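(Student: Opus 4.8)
The plan is to reduce the event $\{\beta(\bs)\in\mP\cap\mP_n^c\}$ — i.e.\ $\beta\in\mP$ violating one of the bounds defining the sieve $\mP_n$ — to a fixed finite family of events about the unthresholded multiscale process $\tbeta$ and its partial derivatives, and then to control each of them by Gaussian concentration of the supremum. First I would record the elementary but crucial fact that the sieve bounds see $\beta$ only through $\tbeta$: by \eqref{thres}--\eqref{thresfunc}, on every region $\mR_g$ with $g\in I_1[\beta(\bs)]$ we have $\beta\equiv\tbeta$, while $\beta\equiv 0$ on the remaining regions (and $\Pi(\beta\in\mP)=1$). Hence $\|\beta(\bs)\|_\infty\geq\sqrt n$ forces $\max_{1\leq g\leq G}\sup_{\bs\in\overline\mR_g}|\tbeta(\bs)|\geq\sqrt n$, and $\sup_{\bs\in\mR_g}|D^\alpha\beta(\bs)|\geq\sqrt n$ for some active $g$ and some $\alpha$ with $1\leq\|\alpha\|_1\leq\rho$ forces $\sup_{\bs\in\overline\mR_g}|D^\alpha\tbeta(\bs)|\geq\sqrt n$ (the sup over $\mR_g$ and over its closure agreeing since $D^\alpha\tbeta$ is a.s.\ continuous on $\overline\mR_g$). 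Writing $N_\rho=\#\{\alpha:1\leq\|\alpha\|_1\leq\rho\}<\infty$, this yields
\[
\mP\cap\mP_n^c\ \subseteq\ \bigcup_{g=1}^G\Big(\big\{\sup_{\bs\in\overline\mR_g}|\tbeta(\bs)|\geq\sqrt n\big\}\ \cup\ \bigcup_{1\leq\|\alpha\|_1\leq\rho}\big\{\sup_{\bs\in\overline\mR_g}|D^\alpha\tbeta(\bs)|\geq\sqrt n\big\}\Big),
\]
so it is enough to bound each of the $G(1+N_\rho)$ probabilities on the right by $Ce^{-cn}$ for some $C,c>0$.

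Next I would treat one region at a time. Restricted to the compact set $\overline\mR_g$, $\tbeta(\bs)=\gamma(\bs)+\epsilon(\bs)$ is a centered Gaussian process with covariance $(\tau^2+\theta^2)\kappa(\bs,\bs')$, because $\chi(\bs,\bs')=\kappa(\bs,\bs')$ for $\bs,\bs'\in\mR_g$. By (K1) the kernel is continuous, so $\tbeta|_{\overline\mR_g}$ has continuous sample paths; therefore $\sigma_g^2:=\sup_{\bs\in\overline\mR_g}\mathrm{Var}[\tbeta(\bs)]<\infty$ and $\mu_g:=E\big[\sup_{\bs\in\overline\mR_g}|\tbeta(\bs)|\big]<\infty$, and the Borell--TIS inequality (\citet{adler2009random}) gives, for every $t>\mu_g$,
\[
\Pi\Big(\sup_{\bs\in\overline\mR_g}|\tbeta(\bs)|\geq t\Big)\ \leq\ 2\exp\!\Big(-\tfrac{(t-\mu_g)^2}{2\sigma_g^2}\Big);
\]
taking $t=\sqrt n\geq 2\mu_g$ yields the bound $2e^{-n/(8\sigma_g^2)}$.

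For the derivative terms I would invoke (K2): since $\kappa(\bs,\cdot)$ has continuous partial derivatives up to order $2\rho+2$, the mixed partials $D^\alpha_{\bs}D^\alpha_{\bs'}\kappa(\bs,\bs')$ are continuous for all $\|\alpha\|_1\leq\rho$, which is a classical sufficient condition for $\tbeta|_{\overline\mR_g}$ to admit a modification with $\mC^\rho$ sample paths and, for each such $\alpha$, for $D^\alpha\tbeta$ to be a centered Gaussian process on $\overline\mR_g$ with continuous sample paths and covariance $(\tau^2+\theta^2)D^\alpha_{\bs}D^\alpha_{\bs'}\kappa(\bs,\bs')$ (\citet{adler2009random}). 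Then $\sigma_{g,\alpha}^2:=\sup_{\bs\in\overline\mR_g}\mathrm{Var}[D^\alpha\tbeta(\bs)]<\infty$ and $\mu_{g,\alpha}:=E[\sup_{\bs\in\overline\mR_g}|D^\alpha\tbeta(\bs)|]<\infty$, and the same Borell--TIS argument gives $\Pi(\sup_{\bs\in\overline\mR_g}|D^\alpha\tbeta(\bs)|\geq\sqrt n)\leq 2e^{-n/(8\sigma_{g,\alpha}^2)}$ once $\sqrt n\geq 2\mu_{g,\alpha}$. Finally, setting $c=(8\max_{g,\alpha}\{\sigma_g^2,\sigma_{g,\alpha}^2\})^{-1}>0$ and combining the $G(1+N_\rho)$ bounds by a union bound, I get $\Pi(\mP\cap\mP_n^c)\leq 2G(1+N_\rho)e^{-cn}$ for all $n\geq n_0$ (with $n_0$ chosen so that all restrictions $\sqrt n\geq 2\mu_\bullet$ hold); enlarging the prefactor to $D:=\max\{2G(1+N_\rho),\,e^{cn_0}\}$ and using $\Pi\leq 1$ for $n<n_0$ gives $\Pi(\mP\cap\mP_n^c)\leq De^{-cn}$ for every $n$, which is the claim with $b=c$.

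The hard part, I expect, is not the union bound but the input feeding it: verifying rigorously that under (K2) the derivative processes $D^\alpha\tbeta$ genuinely exist as continuous Gaussian processes on each $\overline\mR_g$ with finite expected suprema, so that Borell--TIS applies — this is where the extra smoothness $2\rho+2$ (rather than merely $2\rho$) in (K2) is used, to upgrade mean-square differentiability to sample-path $\mC^\rho$ regularity. Everything downstream (the reduction to $\tbeta$, the Gaussian concentration, the union over finitely many regions and multi-indices, and the cosmetic adjustment of constants for small $n$) is routine.
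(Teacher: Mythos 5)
Your proposal is correct and follows essentially the same route as the paper: reduce the sieve-complement event, via the thresholding structure, to region-wise suprema of the unthresholded process $\tbeta\sim\GP\left(0,(\tau^2+\theta^2)\kappa\right)$ and of its partial derivatives up to order $\rho$, then apply a Gaussian sup-tail bound at level $\sqrt n$ to get $e^{-cn}$, and combine over the finitely many regions and multi-indices. The only difference is that the paper invokes Theorem 5 of \citet{ghosal2006posterior} as a black box (which packages both the $\mC^\rho$ sample-path regularity guaranteed by (K2) and the sub-Gaussian tail), whereas you derive the tail explicitly from Borell--TIS and use a plain union bound --- which incidentally sidesteps the paper's product-over-regions step that implicitly treats the regions as independent even though the global process $\gamma$ is shared across them.
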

\begin{proof}
	The construction of the TMGP prior implies that 
	$$\Pi(\mP_n) \geq \prod_{g=1}^G \Pi\left(\sup_{s\in\mR_g}|D^\alpha \tbeta_g(\bs)|>\sqrt{n}, 0\leq\|\alpha\|_1\leq \rho\right),$$
	where $\tbeta_g(\bs)\iid \GP(0, (\theta^2+\tau^2)\kappa(\bs, \bs'))$ for all $g=1,...,G$. By applying Theorem 5 of \citet{ghosal2006posterior}, we have that 
	$\Pi\left(\sup_{s\in\mR_g}|D^\alpha \tbeta_g(\bs)|>\sqrt{n}, 0\leq\|\alpha\|_1\leq \rho\right)\geq 1-Ae^{-bn}$ for some $A, b>0$, given that $\kappa(\bs, \cdot)$ has continuous partial derivatives of order $2\rho+2$ on the compact set $\mR$. Then we have that $\Pi(\mP\cap\mP_n^c)\leq 1-(1-Ae^{-bn})^G \leq De^{-dn}$ where $D=AG$ due to the fact that $(1-x)^G \geq 1-Gx$ for all $0<x<1$ and $G=1,2,...$.
\end{proof}

Now we define $\textbf{P}_n=\{\bbeta(\bs)=[\beta_1(\bs),...,\beta_p(\bs)]^\top: \beta_k(\bs)\in \mP_n, k=1,...,p\}$ here and below. Then we can easily get that 
\bg\label{sievecover}
N(\varepsilon, \textbf{P}_n, \|\cdot\|_\infty)<  \exp\{Cpn^{\frac{d}{2\rho}}\varepsilon^{-d}\},\ed
and if assign TMGP priors independently for all elements in $\bbeta(\bs)$ then
\bg\label{sievesize} \Pi\left(\textbf{P}\cap\textbf{P}_n^c\right)\leq Dp\exp\{-bn\}.\ed

\subsection{Test Constructions}

\begin{lemma}\label{vecbound}
	Consider $\by\sim N(\bX\bbeta, \sigma^2\bm I_m)$, a standard linear model with sample size $m$ where $\by=[y_1,...,y_m]$; $\bX$ is an $m\times p$ design matrix satisfying assumption (X1). Consider the test function $\Phi=I\left(\|\hat{\bbeta}-\bbeta^0\|_2 >\frac{\varepsilon \sqrt p}{2}\right)$ for testing $H_0: \bbeta=\bbeta^0  \ versus \ H_1: \bbeta=\bbeta^1,$
	where $\bbeta^0\in \mathbb{R}^p$ and $\bbeta^1\in \{\bbeta \in \mathbb{R}^p: \|\bbeta -\bbeta^0\|_\infty\geq\varepsilon\}$; $\hat\bbeta =(\bX^\top\bX)^{-1}\bX\by$ is the ordinary least square estimator. Then for $m>\frac{8(\log2)\sigma^2}{\varepsilon^2d_\min}$, we have that 
	$$E_{P_0}[\Phi]\leq \exp\{-\Omega_m mp\}, \quad E_{P_1}[1-\Phi] \leq \exp\{-\Omega_m mp\},$$
	for some $\Omega_m>0$ depending on $m$, where $P_0$ and $P_1$ represents the probability distributions under $H_0$ and $H_1$.
\end{lemma}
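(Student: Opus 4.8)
The plan is to turn both error probabilities into a single tail bound for a $\chi^2$ quadratic form and then apply a Chernoff inequality whose free parameter is pinned down by the stated condition on $m$. The starting point is that the OLS estimator is exactly Gaussian: under $H_0$ one has $\hat\bbeta-\bbeta^0\sim N(\bm 0,\bm\Sigma)$ and under $H_1$ one has $\hat\bbeta-\bbeta^1\sim N(\bm 0,\bm\Sigma)$, where $\bm\Sigma=\sigma^2(\bX^\top\bX)^{-1}$. By (X1) the largest eigenvalue of $\bm\Sigma$ equals $\sigma^2/(m d_{\min})$, so diagonalizing $\bm\Sigma$ and dominating each eigenvalue by the largest shows that for $\xi\sim N(\bm 0,\bm\Sigma)$ and any $t>0$ one has $P(\|\xi\|_2>t)\le P(Q> t^2 m d_{\min}/\sigma^2)$ with $Q\sim\chi^2_p$.

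Next I would dispatch the two errors. For the Type I error, $E_{P_0}[\Phi]=P_0(\|\hat\bbeta-\bbeta^0\|_2>\varepsilon\sqrt p/2)\le P(Q>\theta p)$, where $\theta:=\varepsilon^2 m d_{\min}/(4\sigma^2)$. For the Type II error, observe that on the event $\{1-\Phi=1\}$ we have $\|\hat\bbeta-\bbeta^0\|_2\le\varepsilon\sqrt p/2$, so the triangle inequality gives $\|\hat\bbeta-\bbeta^1\|_2\ge\|\bbeta^1-\bbeta^0\|_2-\varepsilon\sqrt p/2$; provided the separation of $\bbeta^1$ from $\bbeta^0$ is at least $\varepsilon\sqrt p$ in the Euclidean norm, this is $\ge\varepsilon\sqrt p/2$, whence $E_{P_1}[1-\Phi]\le P_1(\|\hat\bbeta-\bbeta^1\|_2\ge\varepsilon\sqrt p/2)\le P(Q>\theta p)$ as well. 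Thus everything reduces to bounding $P(Q>\theta p)$.

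Finally I would apply the Chernoff bound using the $\chi^2_p$ moment generating function $E[e^{sQ}]=(1-2s)^{-p/2}$. Taking the (deliberately non-optimized) value $s=1/4$ yields $P(Q>\theta p)\le 2^{p/2}e^{-\theta p/4}=\exp\{-\tfrac p4(\theta-2\log2)\}$, and the hypothesis $m>8(\log2)\sigma^2/(\varepsilon^2 d_{\min})$ is precisely $\theta>2\log2$. Setting $\Omega_m:=\tfrac1{4m}(\theta-2\log2)=\varepsilon^2 d_{\min}/(16\sigma^2)-(\log2)/(2m)$, which is strictly positive exactly under the hypothesis on $m$, then gives $E_{P_0}[\Phi]\le e^{-\Omega_m m p}$ and $E_{P_1}[1-\Phi]\le e^{-\Omega_m m p}$. (Optimizing over $s$ would only require $\theta>1$ and would remove the $\log2$; the non-optimal choice is exactly what produces the $\log2$ appearing both here and in the hypothesis of Theorem \ref{poscons}.)

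The only genuinely delicate point is the Type II step: one must make sure the separation available for $\bbeta^1$ is large enough in the Euclidean norm that, after subtracting the rejection radius $\varepsilon\sqrt p/2$ via the triangle inequality, what survives is still at least $\varepsilon\sqrt p/2$, so that a single constant $\Omega_m$ controls both error types. The eigenvalue domination and the Type I bound are routine once (X1) and the Gaussianity of $\hat\bbeta$ are in hand.
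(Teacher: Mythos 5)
Your argument is essentially the paper's proof: the same reduction of both error probabilities to a $\chi^2_p$ tail via the eigenvalue bound $d_{\min}$ (using exact Gaussianity of the OLS estimator), the same Chernoff bound with the non-optimized choice $s=1/4$, and the same constant $\Omega_m=\varepsilon^2 d_{\min}/(16\sigma^2)-\log 2/(2m)$, positive exactly under the stated condition on $m$. The ``delicate point'' you flag in the Type II step is not a divergence from the paper but a gap you share with it: the paper likewise replaces $\|\bbeta^0-\bbeta^1\|_2-\varepsilon\sqrt p/2$ by $\varepsilon\sqrt p/2$, which requires $\|\bbeta^0-\bbeta^1\|_2\geq\varepsilon\sqrt p$, whereas the hypothesis $\|\bbeta^1-\bbeta^0\|_\infty\geq\varepsilon$ only guarantees $\|\bbeta^1-\bbeta^0\|_2\geq\varepsilon$; shrinking the rejection radius to $\varepsilon/2$ (at the cost of an exponent $\Omega_m m$ instead of $\Omega_m mp$, which is all the downstream lemmas actually use) would close it in both your write-up and the paper's.
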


\begin{proof}Note that for $t=0, 1$, under $H_t$, $\|\hat{\bbeta}-\bbeta^t\|_2^2  d_\min m /\sigma^2\leq (\hat{\bbeta}-\bbeta^t)^\top\bX^{T}\bX(\hat{\bbeta}-\bbeta^t)/\sigma^2\sim \chi_p^2$, then we have that
	\bg E_{P_0}[\Phi]=P_0\left(\|\hat{\bbeta}-\bbeta_0\|_2^2 >\frac{\varepsilon^2p}{4}\right)\leq P_0\left(\chi_p^2>\frac{\varepsilon^2p d_\min m}{4\sigma^2}\right)\leq \exp\left\{-\left(\frac{\varepsilon^2 d_\min }{16\sigma^2}-\frac{\log 2}{2m}\right)mp\right\},\ed
	where the last inequality is simply due to the f act that $P(\chi_p^2 > x)\leq (1-2t)^{-p/2}\exp\{-tx\}, \forall 0<t<1/2$ by letting $t=1/4$. Similarly, 
	\begin{eqnarray}
	E_{P_1}[1-\Phi] &=& P_1\left(\|\hat{\bbeta}-\bbeta^0\|_2\leq\frac{\varepsilon\sqrt p}{2}\right)\nonumber\\
	&\leq& P_1\left(\left|\|\hat{\bbeta}-\bbeta^1\|_2 - \|\bbeta^0-\bbeta^1\|_2\right | \leq\frac{\varepsilon\sqrt p}{2}\right)\nonumber\\
	&\leq&  P_1\left(\|\hat{\bbeta}-\bbeta^1\|_2 \geq -\frac{\varepsilon \sqrt p}{2} +  \|\bbeta^0-\bbeta^1\|_2 \right)\nonumber \\
	&\leq & P_1\left(\|\hat{\bbeta}-\bbeta^1\|_2 \geq \frac{\varepsilon \sqrt p}{2}  \right)\nonumber\\
	&\leq &\exp\left\{-\left(\frac{\varepsilon^2 d_\min }{16\sigma^2}-\frac{\log 2}{2m}\right)mp\right\}.
	\end{eqnarray}
	Define $\Omega_m=\frac{\varepsilon^2 d_\min}{16\sigma^2}-\frac{\log 2}{2m}$ here and below. Notice that $m>\frac{8(\log2)\sigma^2}{\varepsilon^2d_\min}$ is equivalent to $\Omega_m>0$, we complete the proof.
\end{proof}

\begin{lemma}\label{matrixbound}
	We consider $n_0$ locations $\bs_1,...,\bs_{n_0}$ and define $\bbeta_i=[\beta_1(\bs_i),...,\beta_p(\bs_i)]^\top$ ($\bbeta_i^0$ and $\bbeta_i^1$ can be defined accordingly). Suppose that we have observed the data $\by_i = [y_1(\bs_i),...,y_m(\bs_i)]^\top$ generated from the SVCM, i.e.  $\by_i\sim N(\bX\bbeta_i, \sigma^2\bm I_m)$. Consider testing $H_0: \bbeta_i=\bbeta_i^0, i=1,...,n_0 \ versus \ H_1: \bbeta_i=\bbeta_i^1, i=1,...,n_0$ where $\|\bbeta_i^1 -\bbeta_i^0\|_\infty\geq\varepsilon$ for all $i=1,...,n_0$. Define $\Phi_i=I\left(\|\hat{\bbeta}_i-\bbeta_i^0\|_2 >\frac{\varepsilon \sqrt p}{2}\right)$ with $\hat{\bbeta}_i=(\bX^\top\bX)^{-1}\bX\by_i$. Then for the test function $\tilde\Phi=I\left(\sum_{i=1}^{n_0} \Phi_i >\frac{n_0}{2}\right) $, we have that
	$$E_{P_0}[\tilde\Phi]\leq  \exp\{-Cn_0\},\quad  E_{P_1}[1-\tilde\Phi]\leq \exp\{-Cn_0\} $$
\end{lemma}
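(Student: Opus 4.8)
The plan is to read $\tilde\Phi$ as a majority-vote aggregation of the per-location tests $\Phi_i$ and to reduce the two claimed bounds to large-deviation estimates for a sum of independent Bernoulli variables. The fact that makes this possible is that, under \eqref{outcome}, the vectors $\by_1,\dots,\by_{n_0}$ are mutually independent; since $\hat{\bbeta}_i=(\bX^\top\bX)^{-1}\bX\by_i$, the indicator $\Phi_i=I(\|\hat{\bbeta}_i-\bbeta_i^0\|_2>\varepsilon\sqrt p/2)$ is a function of $\by_i$ alone, and hence $\Phi_1,\dots,\Phi_{n_0}$ are independent.

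First I would apply Lemma \ref{vecbound} at each site. For every $i$ the pair $(\bbeta_i^0,\bbeta_i^1)$ satisfies the separation hypothesis $\|\bbeta_i^1-\bbeta_i^0\|_\infty\ge\varepsilon$, and $m$ exceeds the threshold $8(\log 2)\sigma^2/(\varepsilon^2 d_{\min})$, so Lemma \ref{vecbound} yields $E_{P_0}[\Phi_i]\le e^{-\Omega_m m p}=:q$ and $E_{P_1}[1-\Phi_i]\le q$, with $\Omega_m>0$ and $q$ not depending on $i$. Then, under $P_0$ the sum $\sum_{i=1}^{n_0}\Phi_i$ is stochastically dominated by a $\mathrm{Bin}(n_0,q)$ variable, so a Chernoff (equivalently, relative-entropy) bound gives
\[
E_{P_0}[\tilde\Phi]=P_0\Big(\sum_{i=1}^{n_0}\Phi_i>\tfrac{n_0}{2}\Big)\le \exp\{-n_0\,D(\tfrac12\,\|\,q)\},\qquad D(\tfrac12\,\|\,q)=\tfrac12\log\tfrac{1}{4q(1-q)}.
\]
The second bound follows from the identical argument applied to $Y_i:=1-\Phi_i$, which are independent with $E_{P_1}[Y_i]\le q$ and for which $\{1-\tilde\Phi=1\}=\{\sum_i\Phi_i\le n_0/2\}=\{\sum_i Y_i\ge n_0/2\}$; setting $C:=D(\tfrac12\,\|\,q)$ then gives both inequalities of the lemma.

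The only step that requires attention is checking that $q=e^{-\Omega_m m p}<\tfrac12$, since otherwise the relative-entropy exponent is not strictly positive and the majority-vote aggregation buys nothing. This is where the quantitative content of Lemma \ref{vecbound} is used: one verifies that $\Omega_m m p=\big(\varepsilon^2 d_{\min}m/(16\sigma^2)-\tfrac12\log 2\big)p$ can be taken larger than $\log 2$ in the regime of interest — it suffices to require $m$ to be a fixed multiple of the threshold appearing in Lemma \ref{vecbound}, which does not affect the asymptotic conclusions of Theorem \ref{poscons} and only changes the value of the constant $C$. Everything else is routine: the stochastic domination and the Chernoff estimate are standard, so no further computation is needed.
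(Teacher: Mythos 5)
Your proposal is correct and takes essentially the same route as the paper: per-location bounds from Lemma \ref{vecbound} combined with an exponential concentration inequality for the sum of the independent indicators $\Phi_i$ under majority voting, the only (immaterial) difference being that you use a binomial Chernoff/relative-entropy bound where the paper applies Hoeffding's inequality. Your explicit check that $q=e^{-\Omega_m m p}<\tfrac12$ corresponds exactly to the paper's side condition $1-2e^{-\Omega_m m p}>0$ required for its Hoeffding step, so the two arguments carry the same hidden strengthening of the sample-size condition.
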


\begin{proof}
	By the results from Lemma \ref{vecbound}, $E_{P_0}[\Phi_i]\leq e^{-\Omega_m mp}$ and $E_{P_1}[1-\Phi_i]\leq e^{-\Omega_m mp}$ for all $i=1,...,n_0$. Then 
	\bg \label{tp1}E_{P_0}[\tilde\Phi] \leq  P_0\left(\sum_{i=1}^{n_0} \Phi_i - \sum_{i=1}^{n_0}E_{P_0}[ \Phi_i] > \frac{n_0}{2}-n_0  e^{-\Omega_m mp} \right),\ed
	
	\bg \label{tp2}E_{P_1}[1-\tilde\Phi] = P_1\left(\sum_{i=1}^{n_0} (1-\Phi_i)\geq \frac{n_0}{2}\right)\leq P_1\left(\sum_{i=1}^{n_0} (1-\Phi_i)- \sum_{i=1}^{n_0} E[1-\Phi_i]\geq \frac{n_0}{2}-n_0  e^{-\Omega_m mp} \right).\ed
	By the Hoeffding inequality \citep{hoeffding1963probability}, the right hand side of both \eqref{tp1} and \eqref{tp2} are bounded by $\exp\left\{-\frac{2}{n_0} \frac{n_0^2(1-2e^{-\Omega_m mp} )^2}{4}\right\}$ when $1-2e^{-\Omega_m mp}>0$. That is,
	$E_{P_0}[\tilde\Phi]\leq  \exp\{-Cn_0\},\  E_{P_1}[1-\tilde\Phi]\leq \exp\{-Cn_0\} $ where $C=\frac{(1-2e^{-\Omega_m mp} )^2}{2}$.
\end{proof}

\begin{lemma}\label{point}
	For two functions $\beta^0(\bs), \beta^1(\bs)\in\mP$, if $\|\beta^0(\bs)-\beta^1(\bs)\|_1 =\int_{\bs\in\mR}|\beta^0(\bs)-\beta^1(\bs)|\bP_n(d\bs)\geq\varepsilon$, we have that 
	$\bP_n(|\beta^0(\bs)-\beta^1(\bs)|\geq\frac{\varepsilon}{2} ) \geq c'$ where $0<c'\leq1$ is a constant. That is, the set $\{\bs\in\{\bs_1,...,\bs_n\}: |\beta^0(\bs)-\beta^1(\bs)|\geq\frac{\varepsilon}{2} \}$ has $n_0\geq c'n-1$ elements.
\end{lemma}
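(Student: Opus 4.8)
The plan is to argue by contradiction using a simple averaging bound. Suppose the empirical measure of the set where $|\beta^0(\bs)-\beta^1(\bs)|\ge \varepsilon/2$ is small; I would like to conclude that $\|\beta^0-\beta^1\|_1$ cannot be as large as $\varepsilon$ unless the functions take very large values on that small set. The key extra ingredient is that both $\beta^0$ and $\beta^1$ lie in $\mP$, so they are bounded: each is continuous on each closure $\overline\mR_g$ for $g$ in its nonzero index set (condition (C1)), the number of regions $G$ is finite, and $\mR$ is compact, hence there is a finite constant $M$ with $\|\beta^0(\bs)\|_\infty\le M$ and $\|\beta^1(\bs)\|_\infty\le M$, so $|\beta^0(\bs)-\beta^1(\bs)|\le 2M$ everywhere.

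First I would split the integral defining $\|\beta^0-\beta^1\|_1$ over the two sets $A=\{\bs: |\beta^0(\bs)-\beta^1(\bs)|\ge \varepsilon/2\}$ and its complement $A^c$. On $A^c$ the integrand is at most $\varepsilon/2$, so that piece contributes at most $\varepsilon/2$. On $A$ the integrand is at most $2M$, so that piece contributes at most $2M\,\bP_n(A)$. Combining,
$$
\varepsilon \le \|\beta^0(\bs)-\beta^1(\bs)\|_1 \le \frac{\varepsilon}{2} + 2M\,\bP_n(A),
$$
which rearranges to $\bP_n(A)\ge \varepsilon/(4M)$. Setting $c' = \varepsilon/(4M)$ (truncated at $1$ if necessary, though it is automatically at most $1$ since the bound $2M\ge \varepsilon$ holds) gives the claimed $\bP_n(|\beta^0(\bs)-\beta^1(\bs)|\ge \varepsilon/2)\ge c'$. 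Since $\bP_n$ puts mass $1/n$ on each $\bs_i$, the set $\{\bs_i: |\beta^0(\bs_i)-\beta^1(\bs_i)|\ge \varepsilon/2\}$ has at least $c'n$ points, hence at least $\lceil c'n\rceil \ge c'n - 1$ points, as stated with $n_0$ denoting this count.

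The only subtlety — and the one place I would be careful — is establishing the uniform bound $M$ on functions in $\mP$: one must note that membership in $\mP$ is for a \emph{fixed} partition $\mR_1,\dots,\mR_G$ and a fixed $\lambda$, but the constant $M$ (and hence $c'$) is allowed to depend on $\beta^0$ and $\beta^1$, which is all that is needed here since the lemma is stated for two fixed functions. A minor point is that in the paper's applications $\beta^1$ will range over a sieve; there one should observe that restricting to $\|\beta^1\|_\infty<\sqrt n$ (as in $\mP_n$) makes $M$ of order $\sqrt n$, and the resulting $c'$ of order $\varepsilon/\sqrt n$ still suffices for the entropy/testing argument that follows. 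But for the statement as written, the bound is immediate once boundedness is in hand.
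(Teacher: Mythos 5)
Your proof is correct and is essentially the same argument as the paper's: split the empirical $\|\cdot\|_1$ integral over the set where $|\beta^0-\beta^1|\ge\varepsilon/2$ and its complement, bound the integrand by the (finite) sup norms on the former and by $\varepsilon/2$ on the latter, and rearrange to get $\bP_n$ of that set bounded below by a constant proportional to $\varepsilon$ divided by the sup-norm bound (the paper takes $c'=\varepsilon/(2(M_0+M_1))$, matching your $\varepsilon/(4M)$ up to constants). Your added remarks on why functions in $\mP$ are bounded and on the dependence of $c'$ on the sieve bound are sensible but do not change the argument.
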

\begin{proof}
	Let $\mathcal S =\{\bs\in \mR: |\beta^0(\bs)-\beta^1(\bs)|\geq\frac{\varepsilon}{2}\}$, then 
	\begin{eqnarray}
	\varepsilon&\leq &\int_{\bs\in\mR}|\beta^0(\bs)-\beta^1(\bs)|\bP_n(d\bs)\nonumber\\
	& =& \int_{\bs\in\mathcal S}|\beta^0(\bs)-\beta^1(\bs)|\bP_n(d\bs) + \int_{\bs\in \mR\backslash\mathcal S}|\beta^0(\bs)-\beta^1(\bs)|\bP_n(d\bs)\nonumber\\
	&\leq& (M_0+M_1)\bP_n\left(|\beta^0(\bs)-\beta^1(\bs)|\geq\frac{\varepsilon}{2}\right )  +
	\frac{\varepsilon}{2}\bP_n(\mR)\nonumber,
	\end{eqnarray}
	where $\bP_n(\mR)=1$; $M_0=\|\beta^0(\bs)\|_\infty$ and $M_1=\|\beta^1(\bs)\|_\infty$ are finite constants due to absolute continuity. Thus $\bP_n\left(|\beta^0(\bs)-\beta^1(\bs)|\geq\frac{\varepsilon}{2}\right ) \geq c'$ by letting $c'=\frac{\varepsilon}{2(M_0+M_1)}$.
\end{proof}

\begin{lemma}\label{matrixfuncbound}
	There exists a test $\Phi_{\bbeta^1, \bbeta^0}$ for testing $H_0: \bbeta(\bs)=\bbeta^0(\bs)$ against $H_1: \bbeta(\bs)=\bbeta^1(\bs)$ where $\|\bbeta^1(\bs)-\bbeta^0(\bs)\|_{1, \infty}\geq \varepsilon$ in our proposed SVCM, such that
	$$E_{P_0}[\Phi_{\bbeta^1, \bbeta^0}]\leq \exp\{-Cn\}, \quad E_{P_1}[1-\Phi_{\bbeta^1, \bbeta^0}]\leq \exp\{-Cn\}, $$
	for some constant $C$ with $P_0$ and $P_1$ corresponding to the probability distributions under $H_0$ and $H_1$.
\end{lemma}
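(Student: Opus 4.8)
The plan is to reduce the statement to the finitely‑supported test already constructed in Lemma~\ref{matrixbound}. The obstacle to doing so directly is that the separation in the hypothesis is measured in the $\|\cdot\|_{1,\infty}$ norm, i.e.\ an average over the design points in a single coordinate, whereas Lemma~\ref{matrixbound} requires a genuine pointwise separation $\|\bbeta^1(\bs_i)-\bbeta^0(\bs_i)\|_\infty\ge$ (const) at each of a fixed collection of locations. Lemma~\ref{point} is exactly the device that bridges this gap: an $\|\cdot\|_1$ separation of size $\varepsilon$ forces a pointwise separation of size $\varepsilon/2$ on a positive fraction of $\{\bs_1,\dots,\bs_n\}$.

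Concretely I would proceed as follows. (i) Write $\|\bbeta^1(\bs)-\bbeta^0(\bs)\|_{1,\infty}=\max_{1\le k\le p}\|\beta^1_k(\bs)-\beta^0_k(\bs)\|_1$ and pick a coordinate $k^*$ attaining a value $\ge\varepsilon$, so $\|\beta^1_{k^*}(\bs)-\beta^0_{k^*}(\bs)\|_1\ge\varepsilon$; since $\bbeta^0,\bbeta^1\in\textbf{P}$, both $\beta^0_{k^*},\beta^1_{k^*}\in\mP$. (ii) Apply Lemma~\ref{point} to this pair: the set $\mathcal S=\{\bs_i:|\beta^0_{k^*}(\bs_i)-\beta^1_{k^*}(\bs_i)|\ge\varepsilon/2\}$ has cardinality $n_0\ge c'n-1$ with $c'=\varepsilon/[2(\|\beta^0_{k^*}\|_\infty+\|\beta^1_{k^*}\|_\infty)]>0$. (iii) On $\mathcal S$ one has $\|\bbeta^1(\bs_i)-\bbeta^0(\bs_i)\|_\infty\ge|\beta^1_{k^*}(\bs_i)-\beta^0_{k^*}(\bs_i)|\ge\varepsilon/2$, so the $n_0$ locations of $\mathcal S$ meet the hypotheses of Lemma~\ref{matrixbound} with the separation $\varepsilon/2$ in place of $\varepsilon$, the requirement on $m$ there being the $m$‑condition of Theorem~\ref{poscons} rescaled to reflect the halved separation. (iv) Lemma~\ref{matrixbound} then furnishes a test $\tilde\Phi$ (a majority vote over $\bs_i\in\mathcal S$ of the per‑location least‑squares tests $\Phi_i$) with $E_{P_0}[\tilde\Phi]\le e^{-Cn_0}$ and $E_{P_1}[1-\tilde\Phi]\le e^{-Cn_0}$. (v) Take $\Phi_{\bbeta^1,\bbeta^0}:=\tilde\Phi$; since $n_0\ge c'n-1$, both bounds are at most $e^{C}e^{-Cc'n}$, which after shrinking $C$ to a constant depending only on $\varepsilon$ and, through $c'$, on the fixed pair is $\le e^{-Cn}$ for all $n\ge 1$, as claimed.

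The mathematics is routine; the care goes into the constants. First, the reduction via Lemma~\ref{point} shrinks the usable separation from $\varepsilon$ to $\varepsilon/2$ and discards all but a $c'$‑fraction of the locations, so the requirement $\sigma^2/m<\varepsilon^2 d_{\min}/(8\log 2)$ of Theorem~\ref{poscons} must be read as the correspondingly strengthened condition that keeps $\Omega_m>0$ in Lemma~\ref{vecbound} at separation $\varepsilon/2$ — a matter of fixing the universal constant in the theorem statement, not a genuine difficulty. Second, the constant $C$ produced here depends on $\bbeta^1$ through $c'$, i.e.\ through $\|\beta^1_{k^*}\|_\infty$, which is harmless for this single‑alternative lemma but is the point that will have to be revisited when these tests are patched together over the sieve $\textbf{P}_n$ (where $\|\cdot\|_\infty<\sqrt n$), so that the exponential rate does not degrade too quickly in $n$.
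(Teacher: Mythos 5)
Your proposal is correct and follows essentially the same route as the paper: pick the coordinate $k^*$ achieving the $\|\cdot\|_{1,\infty}$ separation, invoke Lemma~\ref{point} to obtain pointwise $\varepsilon/2$ separation at $n_0\geq c'n-1$ locations, and then apply the majority-vote test of Lemma~\ref{matrixbound} at those locations to get the $\exp\{-Cn\}$ bounds. Your added remarks about the halved separation affecting the condition on $m$ and the dependence of $C$ on $\bbeta^1$ through $c'$ are sensible bookkeeping observations on the same argument, not a departure from it.
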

\begin{proof}
	For two vector-valued functions $\bbeta^t(\bs)=[\beta^t_1(\bs),...,\beta^t_p(\bs)]^\top, t=0,1$, if $\|\bbeta^1(\bs)-\bbeta^0(\bs)\|_{1, \infty}\geq \varepsilon$, we must have at least one $k\in\{1,...,p\}$, such that $\|\beta^1_k(\bs)-\beta^0_k(\bs)\|_1 \geq \varepsilon$, then due to Lemma \ref{point}, we can find $n_0\geq c'n-1$ elements in $\{\bs_1,...,\bs_n\}$ such that $|\beta^1_k(\bs)-\beta^0_k(\bs)|\geq\frac{\varepsilon}{2}$. Without loss of generality, we denote these points as $\bs_1,...,\bs_{n_0}$. Then for all $\bs_i, i=1,...,n_0$, we have that $\|\bbeta^1(\bs_i)-\bbeta^0(\bs_i)\|_\infty\geq \frac{\varepsilon}{2}$. 
	
	Now define the set $\mS_{\bbeta^1, \bbeta^0}=\{\bs\in\{\bs_1,...,\bs_n\}: \|\bbeta^1(\bs_i)-\bbeta^0(\bs_i)\|_\infty\geq \frac{\varepsilon}{2}\}$. Then $n_0=|\mS_{\bbeta^1, \bbeta^0}|\geq c'n-1$. Define the test function
	$$\Phi_{\bbeta^1, \bbeta^0}=I\left( \sum_{\bs\in \mS_{\bbeta^1, \bbeta^0}} \Phi(\bs) >\frac{n_0}{2}\right),$$
	where $\Phi(\bs)=I\left(\|(\bX^\top\bX)^{-1}\bX\by(\bs)-\bbeta^0(\bs)\|_2 >\frac{\varepsilon \sqrt p}{2}\right)$. Then by Lemma \ref{matrixbound} (replacing $\varepsilon$ by $\varepsilon/2$) we have
	$$E_{P_0}[\Phi_{\bbeta^1, \bbeta^0}]\leq \exp\{-C_0n_0\},\quad E_{P_1}[\Phi_{\bbeta^1, \bbeta^0}]\leq \exp\{-C_0n_0\},$$
	where $C_0>0$ is a constant. Since $n_0\geq c\prime n -1 $ for a positive constant $c^\prime$, we have that $E_{P_0}[\Phi_{\bbeta^1, \bbeta^0}]\leq \exp\{-Cn\}$ and $E_{P_1}[\Phi_{\bbeta^1, \bbeta^0}]\leq \exp\{-Cn\}$.
\end{proof}

\begin{lemma}\label{matrixfuncboundall}
	There exists a test $\Psi$ for testing $H_0: \bbeta(\bs)=\bbeta^0(\bs)$ against $H_1: \bbeta(\bs)\in U_{\varepsilon,n}^c=U_{\varepsilon}^c\cap \textbf{P}=\{\bbeta(\bs)\in \textbf P_n: \|\bbeta(\bs)-\bbeta^0(\bs)\|\geq \varepsilon\}$ in our proposed SVCM, such that
	$$E_{P_0}[\Psi]\leq \exp\{-d_0n\}, \quad E_{P_1}[1-\Psi]\leq \exp\{-d_1n\}, $$
	for some constant $d_0, d_1$ with $P_0$ and $P_1$ corresponding to the probability distributions under $H_0$ and $H_1$.
\end{lemma}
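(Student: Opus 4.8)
The plan is to upgrade the single-pair test of Lemma~\ref{matrixfuncbound} to a test against the whole composite alternative $U_{\varepsilon,n}^c$ by a standard net-plus-union-bound argument in the style of Schwartz's theorem and its sieve version in \citet{ghosal2006posterior}. First I would fix a working radius $\xi=\varepsilon/4$ and take a minimal $\xi$-net $\bbeta^{(1)},\ldots,\bbeta^{(N_n)}$ of $\textbf{P}_n$ in $\|\cdot\|_\infty$; by the entropy bound \eqref{sievecover}, $\log N_n\le Cp\,n^{d/(2\rho)}\xi^{-d}$. I would discard every net ball $B_j=\{\bbeta:\|\bbeta-\bbeta^{(j)}\|_\infty<\xi\}$ that does not meet $U_{\varepsilon,n}^c$, and for each retained $j$ choose the representative $\bbeta^{(j)}$ to lie in $\textbf{P}_n\cap U_{\varepsilon,n}^c$, so that $\bbeta^{(j)}\in\textbf{P}$ and $\|\bbeta^{(j)}(\bs)-\bbeta^0(\bs)\|_{1,\infty}\ge\varepsilon$.

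Next I would produce, for each retained $j$, a test $\Phi_j$ that is exponentially powerful not merely at $\bbeta^{(j)}$ but uniformly over $B_j$. This repeats the argument of Lemma~\ref{matrixfuncbound} with the radius tracked: from $\|\bbeta^{(j)}(\bs)-\bbeta^0(\bs)\|_{1,\infty}\ge\varepsilon$ and Lemma~\ref{point} there is a coordinate $k$ and a set of $n_0\ge c'n-1$ locations $\bs_i$ with $|\beta_k^{(j)}(\bs_i)-\beta_k^0(\bs_i)|\ge\varepsilon/2$, hence $\|\bbeta^{(j)}(\bs_i)-\bbeta^0(\bs_i)\|_\infty\ge\varepsilon/2$; for any $\bbeta\in B_j$ the triangle inequality gives $\|\bbeta(\bs_i)-\bbeta^0(\bs_i)\|_\infty\ge\varepsilon/4$ on these same $n_0$ locations. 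Applying Lemma~\ref{matrixbound} with $\varepsilon$ rescaled to $\varepsilon/4$ on this index set yields $\Phi_j$ with $E_{P_0}[\Phi_j]\le e^{-Cn}$ and $\sup_{\bbeta\in B_j}E_{P_{\bbeta}}[1-\Phi_j]\le e^{-Cn}$; the sample-size requirement inside Lemma~\ref{vecbound} is precisely the hypothesis $\sigma^2/m<\varepsilon^2 d_{\min}/(8\log 2)$, up to the harmless constant change induced by rescaling $\varepsilon$.

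Finally, set $\Psi=\max_j\Phi_j$ over the retained net points. For the type~I error, a union bound gives $E_{P_0}[\Psi]\le N_n e^{-Cn}\le\exp\{Cp\,n^{d/(2\rho)}\xi^{-d}-Cn\}$; since $\rho=[d/2]+1$ forces $2\rho>d$ and hence $d/(2\rho)<1$, the entropy term is $o(n)$ and the exponent is $\le -d_0 n$ for large $n$ and some $d_0>0$. For the type~II error, any $\bbeta\in U_{\varepsilon,n}^c$ belongs to some retained ball $B_j$, so $1-\Psi\le 1-\Phi_j$ and $E_{P_{\bbeta}}[1-\Psi]\le e^{-d_1 n}$ with $d_1=C$, which proves the lemma.

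The main obstacle I expect is the bookkeeping that makes the per-ball power bound genuinely uniform: one must take the net radius small relative to $\varepsilon$ (here $\varepsilon/4$) so that membership in $B_j$ still leaves every $\bbeta$ a fixed distance from $\bbeta^0$ on the $n_0$ separating locations, and then check that this rescaling does not spoil the constants $\Omega_m$ carried through Lemmas~\ref{vecbound}--\ref{matrixbound}. The only genuinely structural point is that $\log N_n$ be $o(n)$, which rests entirely on the smoothness exponent $\rho=[d/2]+1$ imposed in condition~(C1); everything else is a routine union bound.
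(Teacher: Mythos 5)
Your proposal is correct and follows essentially the same route as the paper's proof: a sup-norm net of the sieve $\textbf{P}_n$, tests built from Lemmas \ref{matrixbound}--\ref{matrixfuncbound} at the net points, $\Psi$ taken as the maximum of these tests, and the type-I union bound absorbed by the entropy estimate \eqref{sievecover} because $n^{d/(2\rho)}=o(n)$. Your extra step of making the type-II bound uniform over each covering ball (applying Lemma \ref{matrixbound} at separation $\varepsilon/4$) is a sensible tightening of a point the paper treats loosely; just note that if the representative $\bbeta^{(j)}$ is re-chosen inside the original $\xi$-ball while $B_j$ keeps its original center, the triangle inequality only gives separation $\varepsilon/2-2\xi$, so you need $\xi=\varepsilon/8$ (or balls re-centered at the representatives together with a short coverage argument) to justify the stated $\varepsilon/4$ bound.
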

\begin{proof}
	Let $\mathcal N=N(\frac{\varepsilon}{2}, \textbf P_n, \|\cdot\|_\infty)$ be the covering number of $\textbf{P}_n$ by $\varepsilon/2$-balls under the supreme norm.  
	Then for all $\bbeta(\bs)\in U_{\varepsilon,n}^c$, we can find $\bbeta^j(\bs), j\in\{1,...,\mathcal N\}$ such that $\|\bbeta^j(\bs)-\bbeta(\bs)\|_\infty\leq \frac{\varepsilon}{2}$, which implies that $\|\bbeta^j(\bs)-\bbeta^0(\bs)\|_\infty\geq \|\bbeta^0(\bs)-\bbeta(\bs)\|_\infty- \|\bbeta^j(\bs)-\bbeta(\bs)\|_\infty\geq \frac{\varepsilon}{2}$ for all $j=1,...,\mathcal N$. Following the notations and results in Lemma \ref{matrixfuncbound} with regard to $\varepsilon/2$, we have that the tests $\Phi_{\bbeta^j, \bbeta^0}$ all satisfy that $E_{{\bbeta^0}}[\Phi_{\bbeta^j, \bbeta^0}]\leq \exp\{-d_1 n\}$ and $E_{{\bbeta^j}}[\Phi_{\bbeta^j, \bbeta^0}]\leq \exp\{-d_1 n\}$ for some constant $d_1$. Now for the test function $$\Psi=\max_{j=1,...,\mathcal{N}} \Phi_{\bbeta^j, \bbeta^0},$$ which only depend on the set $\textbf P_n$ instead of specific $\bbeta(\bs)$ in the alternative hypothesis, 
	$$E_{P_0}[\Psi]\leq \sum_{j=1}E_{{\bbeta^j}}[\Phi_{\bbeta^j, \bbeta^0}]\leq \mathcal{N}\exp\{-d_1 n\}<\exp\{Cp n^{\frac{d}{2\rho}}\varepsilon^{-d}-d_1 n\}\leq\exp\{-d_0n\},$$
	for some constant $d_0$ due to \eqref{sievecover} and the fact that $n^{\frac{d}{2\rho}}=o(n)$. At the same time
	$$ E_{P_1}[1-\Psi]\leq E_{{\bbeta^1}}[1-\Phi_{\bbeta^1, \bbeta^0}]\leq  \exp\{-d_1 n\},$$
	which complete our proof.
\end{proof}
Now based on Lemma \ref{klnbr}, equation \eqref{sievesize} and Lemma \ref{matrixfuncboundall}, Theorem \ref{poscons} follows from a direct application of Theorem A.1. of \citet{choudhuri2004bayesian}.

\section{Details about the MCMC algorithm}
We list the details about our MCMC algorithm here. Denote by $\phi(\cdot; \bm \mu, \bm{\Sigma})$ the density function of $N(\bm\mu, \bm\Sigma)$. We normally fix $v=w=0.001$ in the inverse-gamma priors and fix $\theta^2=1$ for the local GPs.
\begin{itemize}
\item Updating $\bm\tbeta(\bs_i), \ i=1,...,n$:
given the block structures of $\bm{\tbeta}(\bs_i)$, we update $\bm\tbeta_k^g=\{\tbeta_k(\bs_i)\}_{\bs_i\in\mR_g}, k=1,...,p, \ g=1,...,G$ separately with $p\times G$ M-H steps. Specifically, the full conditional $[\bm\tbeta_k^g\mid \bm \tbeta, \bm u, \sigma^2, \ \bm{\lambda}, \ \mathcal Y]$ is proportional to  
$$
h(\bm\tbeta_k^g)=\left[\prod_{i:\ \bs_i\in\mR_g}\prod_{j=1}^m\phi\left(y_{j, -k}(\bs_i);  x_{jk}\tbeta_k(\bs_i)I_{\lambda_k}[\tbeta_k(\bs_i)], \sigma^2\right)\right]\phi  \left(\bm\tbeta_k^g; \bm{\varphi}_g\bm u_k ,\theta^2 K_g \right),
$$
where $y_{j, -k}(\bs_i)=y_j(\bs_i)-\sum_{t\neq k}x_{jt}\tbeta_t(\bs_i)I_{\lambda_k}[\tbeta_k(\bs_i)]$. We adopt a Metropolis-Hasting (M-H) algorithm to update 
$\bm\tbeta_k^g$ by first generating a proposal, $\bm\tbeta_k^g+\Delta\bm\tbeta_k^g$ with a zero mean Gaussian fluctuation $\Delta\bm\tbeta_k^g$. Then we set $\bm\tbeta_k^g\leftarrow\bm\tbeta_k^g+\Delta\bm\tbeta_k^g$ with probability: $\min\left\{1, \frac{h(\bm\tbeta_k^g+\Delta\bm\tbeta_k^g) }{h(\bm\tbeta_k^g)}\right\}$.

\item Updating $\sigma^2$: draw $\sigma^2$ from its full conditional $[\sigma^2\mid \bm \tbeta, \bm \lambda, \mathcal Y]$ which is $\mbox{Inv-Ga}(a_{\sigma^2}, b_{\sigma^2})$ where $a_{\sigma^2}= 0.001 + \frac{mn}{2}$ and $b_{\sigma^2}= 0.001 +\frac{1}{2}\sum_{i=1}^n\sum_{j=1}^m \left(y_j(\bs) - \bm x_j^\top g_{\bm \lambda}[\bm\tbeta(\bs_i)] \right)^2.$

\item Updating $\bm{\lambda}$: we sequentially update $\lambda_1,...,\lambda_p$ with M-H algorithms. Specifically, for $\lambda_k$, the full conditional $[\lambda_k\mid \lambda_{-k}, \bm\tbeta, \sigma^2, \mathcal Y]$ is 
$$ \hbar(\lambda_k) =\left[ \prod_{i=1}^n\prod_{j=1}^m \phi\left(y_{j, -k}(\bs),  x_{jk}\tbeta_k(\bs_i)I_{\lambda_k}[\tbeta_k(\bs_i)], \sigma^2\right)\right]\Pi(\lambda_k),$$
where $\Pi(\lambda_k)$ is the uniform empirical Bayes prior for $\lambda_k$ defined in the previous section. The proposal for $\lambda_k$ is generate from zero mean Gaussian fluctuations as $\lambda_k+\Delta\lambda_k$, which will be accepted with probability: $\min\left\{1, \frac{\hbar(\lambda_k+\Delta\lambda_k)}{\hbar(\lambda_k)}\right\}$.

	\item Updating $\{\bm u_{k}\}_{k=1}^p$: we sequentially update $\bm u_1,...,\bm u_p$ by drawing from their full conditionals $[\bm u_k\mid \bm \tbeta, \tau_k^2]$. Specifically, we update $\bm u_k$ by drawing from $N \left(\bm \mu_{\bm u_k}, \bm \Sigma_{\bm u_k}\right)$ where $
		\bm \mu_{\bm u_k}  =  \bm\Sigma_{\bm u_k}\left(\theta^{-2}\sum_{g=1}^G\bvarphi_g^\top K_g^{-1}\bm \tbeta_k^g\right)$ and $\bm \Sigma_{\bm u_k} =  \left( \sum_{g=1}^G\theta^{-2}\bvarphi_g^\top K^{-1}_g\bvarphi_g+ \tau_k^{-2}\bm Z^{-1}\right)^{-1}$, with $\bm Z=\mbox{diag}(\zeta_1,...,\zeta_L)$.

%	\begin{eqnarray}
%	\bm \mu_{\bm u_k}  & =&   \bm\Sigma_{\bm u_k}\theta^{-2}\sum_{\bs\in\mS}\bvarphi(\bs)\tbeta_k(\bs) ,\nonumber\\
%	\bm \Sigma_{\bm u_k} &=&  \left(\theta^{-2}  \sum_{\bs\in\mS}\bvarphi(\bs) \bvarphi(\bs)^\top+ \tau_k^{-2}\bm Z^{-1}\right)^{-1},\nonumber
%	\end{eqnarray}
	
	\item Updating $\{\tau_{k}^2\}_{k=1}^p$:  we sequentially update $\tau_1^2,...,\tau_p^2$ by drawing from their full conditionals $[\tau_k^2\mid \bm u_k]$. Specifically, we update $\tau_k^2$ by drawing from $\mbox{Inv-Ga}(a_{\tau_k^2}, b_{\tau_k^2})$ where $a_{\tau_k^2}= 0.001 +\frac{L}{2}$ and $b_{\tau_k^2}= 0.001+\frac{1}{2}\bm u_k^\top\bm Z^{-1}\bm u_k.$

	%\begin{eqnarray}a_{\sigma^2}&=& 0.001 + \frac{NV}{2}, \nonumber\\
	% b_{\sigma^2}&=&\frac{1}{2}\sum_{\bs \in \mS}\sum_{i=1}^N \left(y_i(\bs) - \sum_{k=1}^p g_{\lambda_k}(\beta_k(\bs)) x_{ik}\right)^2 + 0.001 .\nonumber
	%\end{eqnarray}
%	\item Updating $\theta^2$: draw $\theta^2$ from $\mbox{Inv-Ga}(a_{\theta^2}, b_{\theta^2})$ where $a_{\theta^2}= a + \frac{p|\mS|}{2}$ and $$b_{\sigma^2}= b+\frac{1}{2}\sum_{\bs \in \mS}\sum_{k=1}^p \left(\tbeta_k(\bs) - \bvarphi(\bs)^\top\bm u_k\right)^2.$$
    \item Updating the spatial range parameter $b$ within the SE kernel: this parameter can be updated by discretization. Specifically, within a reasonable range of $b$, we can calculate and store the dictionaries of $\bm \varphi_l$ and $\zeta_l$, the kernel expansion results, with regard to each discrete values of $b$ on a grid basis. Then we can update $b$ based on grid search within each MCMC iteration.

\end{itemize}

\end{appendices}

\bibliographystyle{asa}
{\setstretch{1.0}\bibliography{bib}}

\end{document}